\let\newmathbb\mathbb
    \let\mathbb\relax
    \newcommand{\mathbb}[1]{\newmathbb{#1}}
\renewcommand{\mathsf}[1]{\text{\upshape\sffamily#1}}
\newcommand{\upsigma}{\othersigma}
\newcommand{\uppsi}{\otherpsi}
\crefname{theorem}{Theorem}{Theorems}
\crefname{proposition}{Proposition}{Propositions}
\crefname{lemma}{Lemma}{Lemmas}
\crefname{claim}{Claim}{Claims}
\crefname{corollary}{Corollary}{Corollaries}
\crefname{remark}{Remark}{Remarks}
\crefname{observation}{Observation}{Observations}
\crefname{hypothesis}{Hypothesis}{Hypotheses}
\crefname{definition}{Definition}{Definitions}
\crefname{problem}{Problem}{Problems}
\crefname{example}{Example}{Examples}
\crefname{appendix}{Appendix}{Appendices}
\crefname{section}{Section}{Sections}
\crefname{equation}{Eq.}{Eqs.}
\crefname{figure}{Figure}{Figures}
\crefname{table}{Table}{Tables}
\algnewcommand{\algorithmicand}{\textbf{ and }}
\algnewcommand{\algorithmicor}{\textbf{ or }}
\algnewcommand{\algorithmicto}{\textbf{ to }}
\algrenewcommand\textproc{\textsl}
\renewcommand{\geq}{\geqslant}
\renewcommand{\leq}{\leqslant}
\renewcommand{\phi}{\varphi}
\renewcommand{\epsilon}{\varepsilon}
\renewcommand{\bar}{\overline}
\newcommand{\prb}[1]{\textup{\textsc{#1}}\xspace}
\renewcommand{\prb}[1]{\textup{\textsf{#1}}\xspace}
\newcommand{\nth}[1]{#1\textsuperscript{th}\xspace}
\renewcommand{\vec}[1]{\mathbf{\bm{#1}}}
\newcommand{\reco}{\leftrightsquigarrow}
\DeclareMathOperator{\val}{\mathsf{val}}
\DeclareMathOperator{\cloud}{\mathsf{cloud}}
\DeclareMathOperator{\enc}{\mathsf{enc}}
\DeclareMathOperator{\PLR}{\mathsf{PLR}}
\DeclareMathOperator{\bigO}{\mathrm{O}}
\DeclareMathOperator*{\argmax}{argmax}
\newcommand{\sss}{\mathsf{s}}
\newcommand{\ttt}{\mathsf{t}}
\newcommand{\TTT}{\mathtt{T}}
\newcommand{\FFF}{\mathtt{F}}
\newcommand{\aaa}{\mathtt{a}}
\newcommand{\bbb}{\mathtt{b}}
\newcommand{\ccc}{\mathtt{c}}
\newcommand{\sigmaS}{\vec{\upsigma}}
\newcommand{\psiS}{\vec{\uppsi}}
\newcommand{\YES}{\textup{\textsc{yes}}\xspace}
\newcommand{\NO}{\textup{\textsc{no}}\xspace}
\newcommand{\scAND}{\textup{\textsc{and}}\xspace}
\newcommand{\scOR}{\textup{\textsc{or}}\xspace}
\newcommand{\scCHOICE}{\textup{\textsc{choice}}\xspace}
\newcommand{\scFANOUT}{\textup{\textsc{fanout}}\xspace}
\newcommand{\scREDBLUE}{\textup{\textsc{red--blue}}\xspace}
\newcommand{\cP}{\textup{\textbf{P}}\xspace}
\newcommand{\NP}{\textup{\textbf{NP}}\xspace}
\newcommand{\PSPACE}{\textup{\textbf{PSPACE}}\xspace}
\newcommand{\bbN}{\mathbb{N}}
\newcommand{\scrC}{\mathscr{C}}
\newcommand{\scrI}{\mathscr{I}}
\newcommand{\scrO}{\mathscr{O}}
\newcommand{\scrS}{\mathscr{S}}
\newcommand{\convexpath}[2]{
[   
    create hullnodes/.code={
        \global\edef\namelist{#1}
        \foreach [count=\counter] \nodename in \namelist {
            \global\edef\numberofnodes{\counter}
            \node at (\nodename) [draw=none,name=hullnode\counter] {};
        }
        \node at (hullnode\numberofnodes) [name=hullnode0,draw=none] {};
        \pgfmathtruncatemacro\lastnumber{\numberofnodes+1}
        \node at (hullnode1) [name=hullnode\lastnumber,draw=none] {};
    },
    create hullnodes
]
($(hullnode1)!#2!-90:(hullnode0)$)
\foreach [
    evaluate=\currentnode as \previousnode using \currentnode-1,
    evaluate=\currentnode as \nextnode using \currentnode+1
    ] \currentnode in {1,...,\numberofnodes} {
-- ($(hullnode\currentnode)!#2!-90:(hullnode\previousnode)$)
  let \p1 = ($(hullnode\currentnode)!#2!-90:(hullnode\previousnode) - (hullnode\currentnode)$),
    \n1 = {atan2(\y1,\x1)},
    \p2 = ($(hullnode\currentnode)!#2!90:(hullnode\nextnode) - (hullnode\currentnode)$),
    \n2 = {atan2(\y2,\x2)},
    \n{delta} = {-Mod(\n1-\n2,360)}
  in 
    {arc [start angle=\n1, delta angle=\n{delta}, radius=#2]}
}
-- cycle
}
\newtheorem{theorem}{Theorem}[section]
\newtheorem{proposition}[theorem]{Proposition}
\newtheorem{lemma}[theorem]{Lemma}
\newtheorem{claim}[theorem]{Claim}
\newtheorem{corollary}[theorem]{Corollary}
\newtheorem{observation}[theorem]{Observation}
\newtheorem{hypothesis}[theorem]{Hypothesis}
\theoremstyle{definition}
\newtheorem{definition}[theorem]{Definition}
\newtheorem{problem}[theorem]{Problem}
\newtheorem{example}[theorem]{Example}
\newenvironment{claim*}{\begin{claim}}{\end{claim}}
\newenvironment{claimproof}{\begin{proof}}{\end{proof}}
\numberwithin{equation}{section}
\newcommand{\acks}[1]{\paragraph{Acknowledgments.}#1}
\title{Gap Preserving Reductions Between \\ Reconfiguration Problems\footnote{
A preliminary version of this paper appeared in
\emph{Proc.~40th Int.~Symp.~on Theoretical Aspects of Computer Science (STACS)}, 2023 \cite{ohsaka2023gap}.
}}
\author{Naoto Ohsaka\thanks{CyberAgent, Inc., Tokyo, Japan. \href{mailto:ohsaka\_naoto@cyberagent.co.jp}{\texttt{ohsaka\_naoto@cyberagent.co.jp}}; \href{mailto:naoto.ohsaka@gmail.com}{\texttt{naoto.ohsaka@gmail.com}}
}}
\date{\today}
\begin{document}

\maketitle

\begin{abstract}\emph{Combinatorial reconfiguration} is a growing research field studying
reachability and connectivity over the solution space of a combinatorial problem.
For example, in \prb{SAT Reconfiguration},
for a Boolean formula $\phi$ and its two satisfying truth assignments $\sigma_\sss$ and $\sigma_\ttt$,
we are asked to decide if
$\sigma_\sss$ can be transformed into $\sigma_\ttt$ by repeatedly
flipping a single variable assignment at a time,
while preserving every intermediate assignment satisfying $\phi$.
We consider the approximability of \emph{optimization variants} of reconfiguration problems; e.g.,
\prb{Maxmin SAT Reconfiguration} requires to maximize the minimum fraction of satisfied clauses of $\phi$ during transformation from $\sigma_\sss$ to $\sigma_\ttt$.
Solving such optimization variants approximately,
we may be able to acquire a reasonable transformation comprising
almost-satisfying truth assignments.

In this study,
we prove a series of \emph{gap-preserving reductions} to give evidence that
a host of reconfiguration problems are \PSPACE-hard to approximate,
under some plausible assumption.
Our starting point is a new working hypothesis called the \emph{Reconfiguration Inapproximability Hypothesis} (RIH),
which asserts that a gap version of \prb{Maxmin CSP Reconfiguration} is \PSPACE-hard.
This hypothesis may be thought of as a reconfiguration analogue of the PCP theorem \cite{arora1998probabilistic,arora1998proof}.
Our main result is 
\PSPACE-hardness of approximating \prb{Maxmin $3$-SAT Reconfiguration} of \emph{bounded occurrence} under RIH.
The crux of its proof is
a gap-preserving reduction from \prb{Maxmin Binary CSP Reconfiguration} to itself of \emph{bounded degree}.
Because a simple application of the degree reduction technique using expander graphs due to
{Papadimitriou and Yannakakis} (J.~Comput.~Syst.~Sci., 1991) \cite{papadimitriou1991optimization}
loses the \emph{perfect completeness},
we develop a new trick referred to as
\emph{alphabet squaring}, which modifies the alphabet as if each vertex could take a pair of values simultaneously.
To accomplish the soundness requirement, we further apply the expander mixing lemma and
an explicit family of near-Ramanujan graphs.
As an application of the main result, we demonstrate that under RIH,
optimization variants of popular reconfiguration problems are \PSPACE-hard to approximate,
including
\prb{Nondeterministic Constraint Logic} due to {Hearn and Demaine} (Theor.~Comput.~Sci., 2005) \cite{hearn2005pspace,hearn2009games},
\prb{Independent Set Reconfiguration},
\prb{Clique Reconfiguration},
\prb{Vertex Cover Reconfiguration}, and
\prb{$2$-SAT Reconfiguration}.
We finally highlight that all inapproximability results hold unconditionally as long as 
``\PSPACE-hard'' is replaced by ``\NP-hard.''
\end{abstract}

\tableofcontents

\clearpage


\section{Introduction}
\emph{Combinatorial reconfiguration} is a growing research field studying 
reachability and connectivity over the solution space:
Given a pair of feasible solutions of a particular combinatorial problem, find
a step-by-step transformation from one to the other, called a \emph{reconfiguration sequence}.
Since the establishment of the unified framework of reconfiguration due to
{Ito, Demaine, Harvey, Papadimitriou, Sideri, Uehara, and Uno}~\cite{ito2011complexity},
numerous reconfiguration problems have been derived from source problems.
For example, in the canonical \prb{SAT Reconfiguration} problem \cite{gopalan2009connectivity},
we are given a Boolean formula $\phi$ and its two satisfying truth assignments $\sigma_\sss$ and $\sigma_\ttt$.
Then, we seek a reconfiguration sequence from $\sigma_\sss$ to $\sigma_\ttt$ composed only of satisfying truth assignments for $\phi$,
each resulting from the previous one by flipping a single variable assignment.\footnote{
Such a sequence forms a path on the Boolean hypercube.
}
Of particular importance is to reveal their computational complexity.
Most reconfiguration problems are classified as either
\cP (e.g.,
\prb{$3$-Coloring Reconfiguration} \cite{cereceda2011finding} and
\prb{Matching Reconfiguration} \cite{ito2011complexity}),
\NP-complete (e.g.,
\prb{Independent Set Reconfiguration} on bipartite graphs \cite{lokshtanov2019complexity}),
or
\PSPACE-complete (e.g.,
\prb{$3$-SAT Reconfiguration} \cite{gopalan2009connectivity} and
\prb{Independent Set Reconfiguration} \cite{hearn2005pspace}), and
recent studies dig into the fine-grained analysis using restricted graph classes and
parameterized complexity \cite{flum2006parameterized,downey2012parameterized}.
We refer the readers to surveys by
{van den Heuvel}~\cite{heuvel13complexity} and
{Nishimura}~\cite{nishimura2018introduction}
for more details.
One promising aspect has, however, been still less explored: \emph{approximability}.

Just like an \NP optimization problem derived from an \NP decision problem
(e.g., \prb{Max SAT} is a generalization of \prb{SAT}),
an \emph{optimization variant} can be defined for a reconfiguration problem,
which affords to \emph{relax} the feasibility of intermediate solutions.
For instance,
in \prb{Maxmin SAT Reconfiguration} \cite{ito2011complexity} --- an optimization variant of \prb{SAT Reconfiguration} --- we
wish to maximize the minimum fraction of clauses of $\phi$ satisfied
by any truth assignment during reconfiguration from $\sigma_\sss$ to $\sigma_\ttt$.
Such optimization variants naturally arise
when we are faced with the nonexistence of a reconfiguration sequence for the decision version, or
when we already know a problem of interest to be \PSPACE-complete.
Solving them approximately,
we may be able to acquire a reasonable reconfiguration sequence, e.g.,
that comprising \emph{almost-satisfying} truth assignments,
each violating at most $1\%$ of the clauses.

Indeed, in their seminal work,
{Ito et al.}~\cite{ito2011complexity}
proved inapproximability results of \prb{Maxmin SAT Reconfiguration} and \prb{Maxmin Clique Reconfiguration}, and
posed \PSPACE-hardness of approximation as an open problem.
Their results rely on \NP-hardness of the corresponding optimization problem,
which, however, does not bring us \PSPACE-hardness.
The significance of showing \PSPACE-hardness is that
it not only refutes a polynomial-time algorithm under \cP~$\neq$~\PSPACE,
but further disproves the existence of a witness (especially a reconfiguration sequence) of \emph{polynomial length}
under \NP~$\neq$~\PSPACE.
The present study aims to reboot the study on \PSPACE-hardness of approximation for reconfiguration problems,
assuming some plausible hypothesis.

\subsection{Our Working Hypothesis}
Since no \PSPACE-hardness of approximation for natural reconfiguration problems are known (to the best of our knowledge),
we assert a new working hypothesis called the \emph{Reconfiguration Inapproximability Hypothesis} (RIH),
concerning a gap version of \prb{Maxmin $q$-CSP Reconfiguration}, and
use it as a starting point.
\begin{hypothesis}[informal; see \cref{hyp:RIH}]
\begin{leftbar}
Given a constraint graph $G$ and its two satisfying assignments $\psi_\sss$ and $\psi_\ttt$,
it is \PSPACE-hard to distinguish between
\begin{itemize}
\item \YES instances, in which
    $\psi_\sss$ can be transformed into $\psi_\ttt$ by repeatedly changing the value of a single vertex at a time,
    while ensuring every intermediate assignment satisfying $G$, and
\item \NO instances, in which
    any such transformation induces an assignment violating $\epsilon$-fraction of the constraints.
\end{itemize}
\end{leftbar}
\end{hypothesis}
This hypothesis may be thought of as a reconfiguration analogue of the PCP theorem \cite{arora1998probabilistic,arora1998proof}, and
it already holds as long as ``\PSPACE-hard''  is replaced by ``\NP-hard'' \cite{ito2011complexity}.
Moreover, if a gap version of some optimization variant,
e.g., \prb{Maxmin SAT Reconfiguration},
is \PSPACE-hard,
RIH directly follows.
Our contribution is to demonstrate that the converse is also true:
Starting from RIH, we prove a series of (polynomial-time) \emph{gap-preserving reductions} to give evidence that
a host of reconfiguration problems are \PSPACE-hard to approximate.

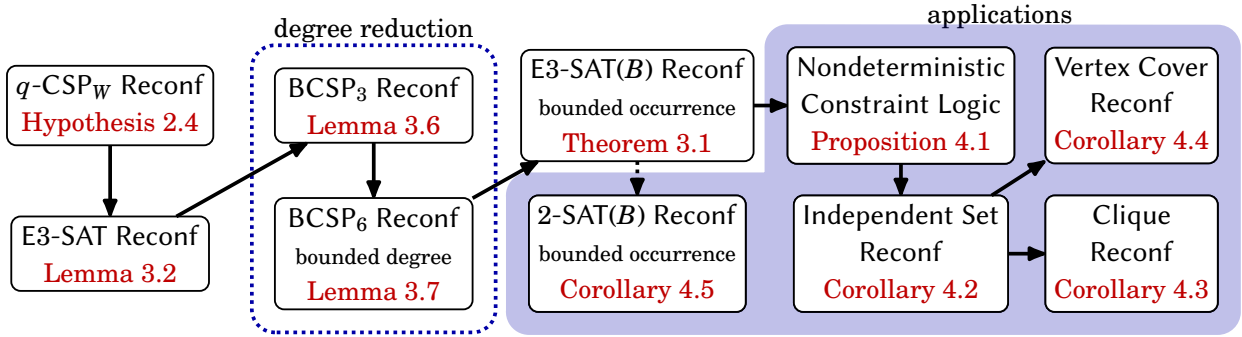
\begin{figure}[t]
    \centering
    \resizebox{\textwidth}{!}{\begin{tikzpicture}
    \tikzset{edge/.style={-Latex[round], black, ultra thick}};
    \tikzset{problem/.style={rectangle, rounded corners, thick, draw=black, fill=white, font=\normalsize, align=center, outer sep=0, minimum height=1cm}};
    
    \node[problem, minimum width=3cm] at (0,2.25) (qCSP){\prb{$q$-CSP$_W$ Reconf} \\ \cref{hyp:RIH}};
    \node[problem, minimum width=3cm] at (0,0) (E3SAT){\prb{E$3$-SAT Reconf} \\ \cref{lem:qCSP-E3SAT}};
    \node[problem, minimum width=3cm] at (4,2.25) (BCSP){\prb{BCSP$_3$ Reconf} \\ \cref{lem:E3SAT-BCSP}};
    \node[problem, minimum width=3cm] at (4,0) (BCSP1){\prb{BCSP$_6$ Reconf} \\ {\footnotesize bounded degree} \\ \cref{lem:BCSP-deg_reduce}};
    \node[problem, minimum width=3cm] at (8,2.25) (E3SAT1){\prb{E$3$-SAT$(B)$ Reconf} \\ {\footnotesize bounded occurrence} \\ \cref{thm:E3SAT}};
    \node[problem, minimum width=3cm] at (12,2.25) (NCL){\prb{Nondeterministic} \\ \prb{Constraint Logic} \\ \cref{thm:NCL}};
    \node[problem, minimum width=3cm] at (12,0) (ISR){\prb{Independent Set} \\ \prb{Reconf} \\ \cref{cor:NCL-IS}};
    \node[problem, minimum width=2cm] at (15.5,0) (CR){\prb{Clique} \\ \prb{Reconf} \\ \cref{cor:IS-C}};
    \node[problem, minimum width=2cm] at (15.5,2.25) (VCR){\prb{Vertex Cover} \\ \prb{Reconf} \\ \cref{cor:IS-VC}};
    \node[problem, minimum width=3cm] at (8,0) (2SAT1){\prb{$2$-SAT$(B)$ Reconf} \\ {\footnotesize bounded occurrence} \\ \cref{cor:E3SAT-2SAT}};

    \node[] at (4,3.4) {degree reduction};
    \node[] at (BCSP.north west) (LT){};
    \node[] at (BCSP.north east) (RT){};
    \node[] at (BCSP1.south west) (LB){};
    \node[] at (BCSP1.south east) (RB){};
    \draw[dotted, ultra thick, draw=blue!65!black, fill=none, opacity=1] \convexpath{LB,LT,RT,RB}{0.35cm};
    
    \node[] at (13.5,3.6) {applications};
    \node[] at (2SAT1.south west) (LB){};
    \node[] at (2SAT1.north west) (LM){};
    \node[] at (ISR.north -| NCL.west) (CM){};
    \node[] at (NCL.north west) (CT){};
    \node[] at (VCR.north east) (RT){};
    \node[] at (CR.south east) (RB){};
    \draw[draw=none, fill=blue!65!black, opacity=0.25] \convexpath{LB,LM,CM,CT,RT,RB}{0.35cm};
    
    \node[problem, minimum width=3cm] at (12,2.25) (NCL){\prb{Nondeterministic} \\ \prb{Constraint Logic} \\ \cref{thm:NCL}};
    \node[problem, minimum width=3cm] at (12,0) (ISR){\prb{Independent Set} \\ \prb{Reconf} \\ \cref{cor:NCL-IS}};
    \node[problem, minimum width=2cm] at (15.5,0) (CR){\prb{Clique} \\ \prb{Reconf} \\ \cref{cor:IS-C}};
    \node[problem, minimum width=2cm] at (15.5,2.25) (VCR){\prb{Vertex Cover} \\ \prb{Reconf} \\ \cref{cor:IS-VC}};
    \node[problem, minimum width=3cm] at (8,0) (2SAT1){\prb{$2$-SAT$(B)$ Reconf} \\ {\footnotesize bounded occurrence} \\ \cref{cor:E3SAT-2SAT}};

    \foreach \v / \w in {qCSP/E3SAT, E3SAT/BCSP, BCSP/BCSP1, BCSP1/E3SAT1, E3SAT1/NCL, NCL/ISR, ISR/VCR, ISR/CR}
        \draw[edge] (\v)--(\w);
    \draw[edge,dotted] (E3SAT1)--(2SAT1);

\end{tikzpicture}}%
    \caption{
        A series of gap-preserving reductions starting from the Reconfiguration Inapproximability Hypothesis used in this paper.
        Here,
        \prb{$q$-CSP$_W$ Reconf} and \prb{BCSP$_W$ Reconf} denote
        \prb{$q$-CSP Reconfiguration} and \prb{Binary CSP Reconfiguration}
        whose alphabet size is restricted to $W$, respectively;
        \prb{E$3$-SAT$(B)$ Reconf} denotes
        \prb{$3$-SAT Reconfiguration} in which
        every clause has exactly $3$ literals and
        each variable occurs in at most $B$ clauses.
        See \cref{sec:pre} for the formal definition of these problems.
        Note that all reductions
        excepting that for \prb{$2$-SAT$(B)$ Reconfiguration}
        (denoted dotted arrow)
        preserve the perfect completeness.
        Our results imply that
        approximating the above reconfiguration problems is \PSPACE-hard under RIH, and \NP-hard unconditionally.
    }
    \label{fig:reductions}
\end{figure}

\subsection{Our Results}
\cref{fig:reductions} presents an overall picture of the gap-preserving reductions introduced in this paper.
All reductions excepting \prb{$2$-SAT Reconfiguration}
preserve the \emph{perfect completeness}; i.e.,
\YES instances have a solution to the decision version.
Our main result is \PSPACE-hardness of approximating \prb{Maxmin E$3$-SAT Reconfiguration} of \emph{bounded occurrence} under RIH (\cref{thm:E3SAT}).
Here, ``bounded occurrence'' is critical to further reduce to \prb{Nondeterministic Constraint Logic},
which requires the number of clauses to be proportional to the number of variables.
Toward that end, we first reduce
\prb{Maxmin $q$-CSP Reconfiguration} to \prb{Maxmin Binary CSP Reconfiguration} in a gap-preserving manner
\emph{via} \prb{Maxmin E$3$-SAT Reconfiguration} (\cref{lem:qCSP-E3SAT,lem:E3SAT-BCSP}),
which employs a reconfigurable SAT encoding.

We then proceed to a gap-preserving reduction from \prb{Maxmin Binary CSP Reconfiguration} to itself of \emph{bounded degree} (\cref{lem:BCSP-deg_reduce}),
which is the most technical step in this paper.
Recall shortly the degree reduction technique due to 
{Papadimitriou and Yannakakis}~\cite{papadimitriou1991optimization},
also used by {Dinur}~\cite{dinur2007pcp} to prove the PCP theorem:
Each (high-degree) vertex is replaced by an expander graph called a \emph{cloud}, and 
equality constraints are imposed on the intra-cloud edges so that
the assignments in the cloud behave like a single assignment.
Observe easily that 
a simple application of this technique to \prb{Binary CSP Reconfiguration} loses the perfect completeness.
This is because we have to change the value of vertices in the cloud \emph{one by one}, sacrificing many equality constraints.
To bypass this issue,
we develop
a new trick referred to as \emph{alphabet squaring} tailored to reconfigurability,
which modifies the alphabet as if each vertex could take a pair of values simultaneously;
e.g., if the original alphabet is
$\Sigma = \{\aaa, \bbb, \ccc\}$,
the new one is
$\Sigma' = \{ \aaa, \bbb, \ccc, \aaa\bbb, \bbb\ccc, \ccc\aaa \}$.
Having a vertex to be assigned $\aaa\bbb$ represents that 
it has values $\aaa$ \emph{and} $\bbb$.
With this interpretation in mind, we redefine equality-like constraints for the intra-cloud edges so as to preserve the perfect completeness.

\begin{figure}[tbp]
    \centering
    \null\hfill
    \scalebox{0.8}{\begin{tikzpicture}
    \tikzset{edge/.style={black, thick}};
    \tikzset{node/.style={circle, thick, draw=black, fill=white, font=\LARGE, text centered, inner sep=0, outer sep=0, minimum size=0.8cm}};

    \node[node] at (0,0) (w){$\aaa$};
    \node[node] at (2,0) (v){$\aaa$};
    \node[node] at (4,0) (x){$\aaa$};
    \node[node] at (6,0) (y){$\aaa$};
    \node[node] at (0.75,1.25) (z1){\normalsize free};
    \node[node] at (1.75,1.25) (z2){\normalsize free};
    \node[text centered] at (2.5,1.25) { $\cdots$};
    \node[node] at (3.25,1.25) (zn){\normalsize free};
    
    \node[below=0.05cm of w] {\LARGE $w$};
    \node[below=0.05cm of v] {\LARGE $v$};
    \node[below=0.05cm of x] {\LARGE $x$};
    \node[below=0.05cm of y] {\LARGE $y$};
    \node[above=0.05cm of z1] {\LARGE $z_1$};
    \node[above=0.05cm of z2] {\LARGE $z_2$};
    \node[above=0.05cm of zn] {\LARGE $z_n$};

    \node[align=center,anchor=south] at (1,0) {$\pi_{(w,v)}$};
    \node[align=center,anchor=south] at (3,0) {$\pi_{(v,x)}$};
    \node[align=center,anchor=south] at (5,0) {$\pi_{(x,y)}$};
    
    \node[align=center,anchor=north] at (1,0) {$\begin{Bmatrix}(\aaa,\aaa)\end{Bmatrix}$};
    \node[align=center,anchor=north] at (3,0) {$\begin{Bmatrix}(\aaa,\aaa)\\(\bbb,\aaa)\\(\bbb,\bbb)\\(\bbb,\ccc)\\(\aaa,\ccc)\end{Bmatrix}$};
    \node[align=center,anchor=north] at (5,0) {$\begin{Bmatrix}(\aaa,\aaa)\\(\bbb,\aaa)\\(\bbb,\bbb)\\(\ccc,\bbb)\\(\ccc,\ccc)\end{Bmatrix}$};

    \foreach \v / \w in {w/v, v/x, x/y, v/z1, v/z2, v/zn}
        \draw[edge] (\v)--(\w);
\end{tikzpicture}}
    \hfill
    \scalebox{0.8}{\begin{tikzpicture}
    \tikzset{edge/.style={black, thick}};
    \tikzset{node/.style={circle, thick, draw=black, fill=white, font=\LARGE, text centered, inner sep=0, outer sep=0, minimum size=0.8cm}};

    \node[node] at (0,0) (w){$\aaa$};
    \node[node] at (6,0) (x){$\aaa$};
    \node[node] at (8,0) (y){$\aaa$};
    \node[node] at (2,0) (vw){$\aaa$};
    \node[node] at (4,0) (vx){$\bbb$};
    \node[node] at (1.75,1.5) (vz1){$\aaa\bbb$};
    \node[node] at (2.75,1.5) (vz2){$\aaa\bbb$};
    \node[text centered] at (3.5,1.5) {$\cdots$};
    \node[node] at (4.25,1.5) (vzn){$\aaa\bbb$};
    \node[node] at (1.75,3) (z1){\normalsize free};
    \node[node] at (2.75,3) (z2){\normalsize free};
    \node[text centered] at (3.5,3) {$\cdots$};
    \node[node] at (4.25,3) (zn){\normalsize free};
    
    \draw[draw=none, fill=cyan!85!black, opacity=0.3] \convexpath{vz1,vz2,vzn,vx,vw}{0.55cm};

    \node[below=0.05cm of w] {\LARGE $w$};
    \node[below=0.05cm of vw] {\LARGE $v_w$};
    \node[below=0.05cm of vx] {\LARGE $v_x$};
    \node[below=0.05cm of x] {\LARGE $x$};
    \node[below=0.05cm of y] {\LARGE $y$};
    \node[text centered] at (1.3,2.2) {\LARGE $v_{z_1}$};
    \node[text centered] at (2.3,2.2) {\LARGE $v_{z_2}$};
    \node[text centered] at (3.8,2.2) {\LARGE $v_{z_n}$};
    \node[above=0.05cm of z1] {\LARGE $z_1$};
    \node[above=0.05cm of z2] {\LARGE $z_2$};
    \node[above=0.05cm of zn] {\LARGE $z_n$};

    \node[node] at (2,0) (vw){$\aaa$};
    \node[node] at (4,0) (vx){$\bbb$};
    \node[node] at (1.75,1.5) (vz1){$\aaa\bbb$};
    \node[node] at (2.75,1.5) (vz2){$\aaa\bbb$};
    \node[text centered] at (3.5,1.5) {$\cdots$};
    \node[node] at (4.25,1.5) (vzn){$\aaa\bbb$};

    \node[text centered] at (3,0) {\normalsize no edge};
    \node[align=left, anchor=center] at (5.4,0.75) {\LARGE $\cloud(v)$};
    
    \foreach \v / \w in {w/vw, vx/x, x/y, vw/vz1, vw/vz2, vw/vzn, vx/vz1, vx/vz2, vx/vzn, vz1/vz2, z1/vz1, z2/vz2, zn/vzn}
        \draw[edge] (\v)--(\w);
\end{tikzpicture}}
    \hfill\null
    \caption{
        A drawing of \cref{ex:BCSP}.
        The left side shows an instance $G$ of \prb{BCSP Reconfiguration}, where
        we cannot transform
            $\psi_\sss(w,v,x,y)=(\aaa, \aaa, \aaa, \aaa)$
            into
            $\psi_\ttt(w,v,x,y)=(\aaa, \aaa, \ccc, \ccc)$.
        The right side shows the resulting instance by applying
        the degree reduction step on $v$ of $G$.
        We can now assign conflicting values to $v_w$ and $v_x$ because edge $(v_w,v_x)$ does not exist;
        in particular, we can transform
            $\psi'_\sss(w,v_w,v_x,x,y)=(\aaa, \aaa, \aaa, \aaa, \aaa)$
            into
            $\psi'_\ttt(w,v_w,v_x,x,y)=(\aaa, \aaa, \aaa, \ccc, \ccc)$.
    }
    \label{fig:example}
\end{figure}
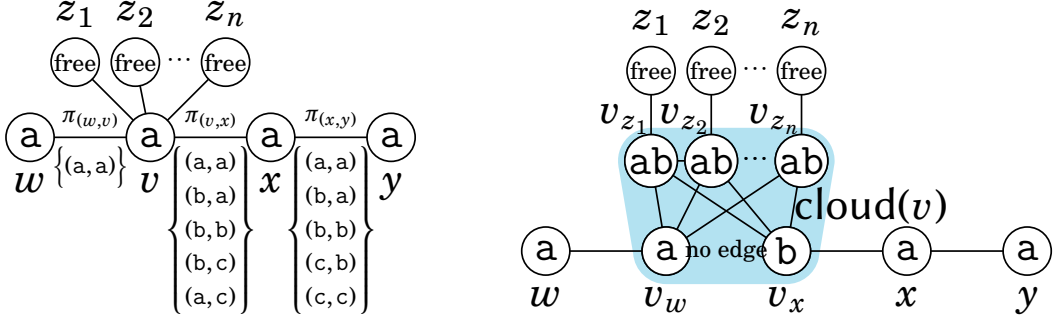

Unfortunately, using the alphabet squaring trick causes another issue, which renders the proof of the soundness requirement nontrivial.
\cref{ex:BCSP} illustrated in \cref{fig:example}
tells us that
our reduction is neither
a Karp reduction of \prb{Binary CSP Reconfiguration} nor
a PTAS reduction \cite{crescenzi2000approximation,crescenzi1997short} of \prb{Maxmin Binary CSP Reconfiguration}.
One particular reason is that assigning conflicting values to vertices in a cloud may not violate any equality-like constraints.
Thankfully,
we are ``promised'' that
at least $\epsilon$-fraction of constraints are unsatisfied during any transformation for some $\epsilon \in (0,1)$.
We thus use the following machinery to eventually accomplish the soundness requirement:

\begin{itemize}
\item
The crucial \cref{clm:BCSP-deg_reduce:ST} is that
for ``many'' vertices $v$,
there exists a pair of 
disjoint subsets $S_v$ and $T_v$ of $v$'s cloud such that their size is $\Theta(\epsilon \cdot |v\text{'s cloud}|)$ and
all constraints between them are unsatisfied.

\item Then, we apply
the \emph{expander mixing lemma} \cite{alon1988explicit} to bound the number of edges between $S_v$ and $T_v$ by
$\gtrapprox(d_0 \epsilon -  \lambda)\epsilon \cdot|v\text{'s cloud}|$,
where $d_0$ is the degree and
$\lambda$ is the second largest eigenvalue of 
$v$'s cloud.
Note that
{Papadimitriou and Yannakakis}~\cite{papadimitriou1991optimization} rely on the edge expansion property, 
which is not applicable as shown in \cref{ex:BCSP}.

\item We further use
an explicit family of \emph{near-Ramanujan graphs} \cite{alon2021explicit,mohanty2021explicit} so that the second largest eigenvalue $\lambda$ is $\bigO(\sqrt{d_0})$.
Setting the degree $d_0$ to $\bigO(\epsilon^{-2})$
ensures that $(d_0 \epsilon -  \lambda)\epsilon$
is positive constant; in particular,
the number of edges between $S_v$ and $T_v$ is
$\Theta(|v\text{'s cloud}|)$, as desired.
\end{itemize}
By applying this degree reduction step,
we come back to \prb{Maxmin E$3$-SAT Reconfiguration},
wherein, but this time, each variable appears in a constant number of clauses,
completing the proof of the main result.

Once we have established gap-preserving reducibility from RIH to \prb{Maxmin E$3$-SAT Reconfiguration} of bounded occurrence,
we can apply it to devise conditional \PSPACE-hardness of approximation for
an optimization variant of \prb{Nondeterministic Constraint Logic} (\cref{thm:NCL}).
\prb{Nondeterministic Constraint Logic} is a \PSPACE-complete problem 
proposed by {Hearn and Demaine} \cite{hearn2005pspace,hearn2009games}
that has been used to show \PSPACE-hardness of many games, puzzles, and 
other reconfiguration problems \cite{belmonte2021token,ito2012reconfiguration,bonsma2009finding,bousquet2022reconfiguration}.
We show that under RIH,
it is \PSPACE-hard to distinguish whether 
an input is a \YES instance, or
has a property that
every transformation must violate $\epsilon$-fraction of nodes.
The proof makes a modification to the existing gadgets \cite{hearn2005pspace,hearn2009games}.
As a consequence of \cref{thm:NCL},
we demonstrate that assuming RIH,
optimization variants of popular reconfiguration problems on graphs are \PSPACE-hard to approximate,
including
\prb{Independent Set Reconfiguration},
\prb{Clique Reconfiguration}, and
\prb{Vertex Cover Reconfiguration} (\cref{cor:NCL-IS,cor:IS-C,cor:IS-VC}),
whose proofs are almost immediate from existing work \cite{hearn2005pspace,hearn2009games,bonsma2009finding}.
We also show that \prb{Maxmin $2$-SAT Reconfiguration} of bounded occurrence
is \PSPACE-hard to approximate under RIH (\cref{cor:E3SAT-2SAT}),
whereas \prb{$2$-SAT Reconfiguration} belongs to \cP \cite{ito2011complexity}.
We finally highlight that all inapproximability results hold unconditionally as long as 
``\PSPACE-hard'' is replaced by ``\NP-hard.''

\subsection{Additional Related Work}
\label{subsec:intro:related}
Other reconfiguration problems whose approximability was analyzed include
\prb{Set Cover Reconfiguration}~\cite{ito2011complexity}, which is $2$-factor approximable,
\prb{Subset Sum Reconfiguration}~\cite{ito2014approximability}, which admits a PTAS,
\prb{Shortest Path Reconfiguration}~\cite{gajjar2022reconfiguring}, and
\prb{Submodular Reconfiguration}~\cite{ohsaka2022reconfiguration}.
The objective value of optimization variants is sometimes called
the \emph{reconfiguration index}~\cite{ito2016reconfiguration} or \emph{reconfiguration threshold}~\cite{de2018independent}.
We note that approximability of reconfiguration problems frequently refers to
that of \emph{the shortest sequence}
\cite{yamanaka2015swapping,miltzow2016approximation,bonnet2018complexity,heath2003sorting,bousquet2018reconfiguration,bousquet2020approximating,bousquet2019shortest,bonamy2020shortest,ito2022shortest}.
A different type of optimization variants,
called \emph{incremental optimization under the reconfiguration framework} \cite{ito2022incremental,blanche2020decremental,yanagisawa2021decremental}
has recently been studied; e.g.,
given an initial independent set, we want to transform it into a maximum possible independent set without touching those smaller than the specified size.
Those work seem orthogonal to the present study.

\section{Preliminaries}\label{sec:pre}

\paragraph{Notations.}
For a nonnegative integer $n \in \bbN$, let $ [n] \triangleq \{1, 2, \ldots, n\} $.
For a graph $G=(V,E)$, let $V(G)$ and $E(G)$ denote
the vertex set $V$ and edge set $E$ of $G$, respectively.
A \emph{sequence} $\scrS$ of a finite number of elements $S^{(0)}, S^{(1)}, \ldots, S^{(\ell)}$
is denoted by $\scrS = \langle S^{(0)}, S^{(1)}, \ldots, S^{(\ell)} \rangle$, and
we write $S^{(i)} \in \scrS$ to indicate that $S^{(i)}$ appears in $\scrS$.
We briefly recapitulate {Ito et al.}'s reconfiguration framework \cite{ito2011complexity}.
Suppose we are given
a ``definition'' of feasible solutions for some source problem and
a symmetric ``adjacency relation'' over a pair of feasible solutions.\footnote{
An adjacency relation can also be defined in terms of a ``reconfiguration step,''
which specifies how a solution can be transformed, e.g.,
a flip of a single variable assignment.
}
Then, for a pair of feasible solutions $S_\sss$ and $S_\ttt$,
a \emph{reconfiguration sequence from $S_\sss$ to $S_\ttt$} is
any sequence of feasible solutions,
$\scrS = \langle S^{(0)}, \ldots, S^{(\ell)} \rangle$,
starting from $S_\sss$ (i.e., $S^{(0)} = S_\sss$) and ending with $S_\ttt$ (i.e., $S^{(\ell)} = S_\ttt$) such that
all successive solutions $S^{(i-1)}$ and $S^{(i)}$ are adjacent.
In a reconfiguration problem,
we wish to decide if there exists a reconfiguration sequence between a pair of feasible solutions.

\subsection{Boolean Satisfiability and Reconfiguration}
We use the standard terminology and notation of Boolean satisfiability.
Truth values are denoted by $\TTT$ or $\FFF$.
A \emph{Boolean formula} $\phi$ consists of variables $x_1, \ldots, x_n$ and the logical operators,
AND ($\wedge$), OR ($\vee$), and NOT ($\neg$).
A \emph{truth assignment} $\sigma \colon \{x_1, \ldots, x_n\} \to \{\TTT, \FFF\}$ for $\phi$ is a mapping
that assigns a truth value to each variable.
A Boolean formula $\phi$ is said to be \emph{satisfiable} if 
there exists a truth assignment $\sigma$ such that
$\phi$ evaluates to $\TTT$ when each variable $x_i$ is assigned the truth value specified by $\sigma(x_i)$.
A \emph{literal} is either a variable or its negation; a \emph{clause} is a disjunction of literals.
A Boolean formula is in \emph{conjunctive normal form} (CNF) if
it is a conjunction of clauses.
A \emph{$k$-CNF} formula is a CNF formula in which every clause contains at most $k$ literals.
Hereafter, the prefix ``E$k$-'' means that every clause has exactly $k$ distinct literals, while
the suffix ``$(B)$'' indicates that
the number of occurrences of each variable is bounded by $B \in \bbN$.

Subsequently,
we formalize reconfiguration problems on Boolean satisfiability.
We say that two truth assignments for a Boolean formula are \emph{adjacent} if 
one is obtained from the other by flipping a single variable assignment; i.e.,
they differ in exactly one variable.
The \prb{$k$-SAT Reconfiguration} problem \cite{gopalan2009connectivity}
is a decision problem of determining
for a $k$-CNF formula $\phi$ and its two satisfying truth assignments $\sigma_\sss$ and $\sigma_\ttt$,
whether there is a reconfiguration sequence of satisfying truth assignments for $\phi$ from $\sigma_\sss$ to $\sigma_\ttt$.
Since we are concerned with approximability of reconfiguration problems,
we formulate its optimization variant \cite{ito2011complexity}, which
allows us to employ \emph{non-satisfying} truth assignments.
For a CNF formula $\phi$ consisting of $m$ clauses $C_1, \ldots, C_m$ and a truth assignment $\sigma$ for $\phi$,
let $\val_\phi(\sigma)$ denote the fraction of clauses of $\phi$ satisfied by $\sigma$; namely,
\begin{align}
    \val_\phi(\sigma) \triangleq
    \frac{\left|\{ j \in [m] \mid \sigma \text{ satisfies } C_j \}\right|}{m}.
\end{align}
For a reconfiguration sequence of truth assignments  for $\phi$,
$\sigmaS = \langle \sigma^{(0)}, \ldots, \sigma^{(\ell)} \rangle$,
let $\val_\phi(\sigmaS)$ denote the \emph{minimum} fraction of satisfied clauses  of $\phi$
over all $\sigma^{(i)}$'s in $\sigmaS$; namely,
\begin{align}
    \val_\phi(\sigmaS) \triangleq \min_{\sigma^{(i)} \in \sigmaS} \val_\phi(\sigma^{(i)}).
\end{align}
Then,
for a $k$-CNF formula $\phi$ and
its truth assignments $\sigma_\sss$ and $\sigma_\ttt$
(which are not necessarily satisfying),
\prb{Maxmin $k$-SAT Reconfiguration}
is defined as an optimization problem of
maximizing $\val_\phi(\sigmaS)$ subject to $\sigmaS = \langle \sigma_\sss, \ldots, \sigma_\ttt \rangle$.
Observe that \prb{Maxmin $k$-SAT Reconfiguration} is \PSPACE-hard
because so is \prb{$k$-SAT Reconfiguration}~\cite{gopalan2009connectivity}.
Let $\val_\phi(\sigma_\sss \reco \sigma_\ttt)$ denote the maximum value of $\val_\phi(\sigmaS)$
over all possible reconfiguration sequences $\sigmaS$ from $\sigma_\sss$ to $\sigma_\ttt$; namely,
\begin{align}
    \val_\phi(\sigma_\sss \reco \sigma_\ttt) \triangleq
    \max_{\sigmaS = \langle \sigma_\sss, \ldots, \sigma_\ttt \rangle } \val_\phi(\sigmaS)
    = \max_{\sigmaS=\langle \sigma_\sss, \ldots, \sigma_\ttt \rangle}
    \min_{\sigma^{(i)} \in \sigmaS} \val_\phi(\sigma^{(i)}).
\end{align}
Note that $\val_\phi(\sigma_\sss \reco \sigma_\ttt) \leq \min\{ \val_\phi(\sigma_\sss), \val_\phi(\sigma_\ttt) \}$.
If $\val_\phi(\sigma_\sss \reco \sigma_\ttt) \geq \rho$ for some $\rho$,
we can transform $\sigma_\sss$ into $\sigma_\ttt$ while ensuring
that \emph{every} intermediate truth assignment satisfies at least $\rho$-fraction of the clauses of $\phi$.
The \emph{gap version} of \prb{Maxmin $k$-SAT Reconfiguration}
is finally defined as follows:

\begin{problem}\label{prb:gap-SAT}
\begin{leftbar}
For every $k \in \bbN$ and $0 \leq s \leq c \leq 1$,
\prb{Gap$_{c,s}$ $k$-SAT Reconfiguration} requests to distinguish
for a $k$-CNF formula $\phi$ and two (not necessarily satisfying) truth assignments $\sigma_\sss$ and $\sigma_\ttt$ for $\phi$,
whether
$\val_\phi(\sigma_\sss \reco \sigma_\ttt) \geq c$ (the input is a \YES instance) or
$\val_\phi(\sigma_\sss \reco \sigma_\ttt) < s$ (the input is a \NO instance).
Here, $c$ and $s$ denote \emph{completeness} and \emph{soundness}, respectively.
\end{leftbar}
\end{problem}
\cref{prb:gap-SAT} is a \emph{promise problem}, in which
we can output anything when
$s \leq \val_\phi(\sigma_\sss \reco \sigma_\ttt) < c$.
The present definition does not request an actual reconfiguration sequence.
Note that we can assume $\sigma_\sss$ and $\sigma_\ttt$ to be satisfying ones whenever $c=1$, and
the case of $s = c = 1$ particularly reduces to \prb{$k$-SAT Reconfiguration}.

\subsection{Constraint Satisfaction Problem and Reconfiguration}
Let us first define the notion of \emph{constraint graphs}.

\begin{definition}
\begin{leftbar}
A $q$-ary \emph{constraint graph} is defined as a tuple $G=(V,E,\Sigma,\Pi)$, such that
\begin{itemize}
    \item $(V,E)$ is a $q$-uniform hypergraph called the \emph{underlying graph},
    \item $\Sigma$ is a finite set called the \emph{alphabet}, and
    \item $\Pi = (\pi_e)_{e \in E}$ is a collection of $q$-ary \emph{constraints}, where
    each constraint $\pi_e \subseteq \Sigma^e$ is a set of $q$-tuples of acceptable values that $q$ vertices in $e$ can take.
\end{itemize}
The \emph{degree} $d_G(v)$ of each vertex $v$ in $G$ is defined as the number of hyperedges including $v$.
\end{leftbar}
\end{definition}
For a $q$-ary constraint graph $G=(V,E,\Sigma,\Pi = (\pi_e)_{e \in E})$,
an \emph{assignment} is a mapping $\psi \colon V \to \Sigma$ that assigns a value of $\Sigma$ to each vertex of $V$.
We say that
$\psi$ \emph{satisfies} hyperedge $e = \{v_1, \ldots, v_q\} \in E$ (or constraint $\pi_e$) if
$\psi(e) \triangleq (\psi(v_1), \ldots, \psi(v_q)) \in \pi_e$,
$\psi$ \emph{satisfies} $G$ if it satisfies all hyperedges of $G$, and
$G$ is \emph{satisfiable} if there exists an assignment that satisfies $G$.
Recall that \prb{$q$-CSP} requires to decide if 
a $q$-ary constraint graph is satisfiable.
Hereafter, \prb{BCSP} stands for \prb{$2$-CSP},
\prb{$q$-CSP$_W$} designates the restricted case that the alphabet size $|\Sigma|$ is some $W \in \bbN$, and
\prb{$q$-CSP$(\Delta)$} for some $\Delta \in \bbN$ means that the maximum degree of the constraint graph is bounded by $\Delta$.

We then proceed to reconfiguration problems on constraint satisfaction.
Two assignments are \emph{adjacent} if they differ in exactly one vertex.
In \prb{$q$-CSP Reconfiguration},
for a $q$-ary constraint graph $G$ and its two satisfying assignments $\psi_\sss$ and $\psi_\ttt$,
we are asked to decide if there is a reconfiguration sequence of satisfying assignments for $G$ from $\psi_\sss$ to $\psi_\ttt$.
Then, analogously to the case of Boolean satisfiability,
we introduce the following notations:
\begin{align}
    \val_G(\psi) \triangleq
    \frac{\left|\{e \in E \mid \psi \text{ satisfies } e \}\right|}{|E|}
\end{align}
for assignment $\psi \colon V \to \Sigma$,
\begin{align}
    \val_G(\psiS) \triangleq \min_{\psi^{(i)} \in \psiS} \val_G(\psi^{(i)})    
\end{align}
for reconfiguration sequence $\psiS = \langle \psi^{(i)} \rangle_{0 \leq i \leq \ell}$, and
\begin{align}
    \val_G(\psi_\sss \reco \psi_\ttt) \triangleq \max_{\psiS = \langle \psi_\sss, \ldots, \psi_\ttt \rangle} \val_G(\psiS)  
\end{align}
for two assignments $\psi_\sss, \psi_\ttt \colon V \to \Sigma$.
For a pair of assignments $\psi_\sss$ and $\psi_\ttt$ for $G$,
\prb{Maxmin $q$-CSP Reconfiguration} requests to maximize
$\val_G(\psiS)$ subject to $\psiS = \langle \psi_\sss, \ldots, \psi_\ttt \rangle$, while
its gap version is defined below.
\begin{problem}
\begin{leftbar}
For every $q \in \bbN$ and $0 \leq s \leq c \leq 1$,
\prb{Gap$_{c,s}$ $q$-CSP Reconfiguration} requests to distinguish
for a $q$-ary constraint graph $G$ and two (not necessarily satisfying) assignments $\psi_\sss$ and $\psi_\ttt$ for $G$,
whether
$\val_G(\psi_\sss \reco \psi_\ttt) \geq c$ or
$\val_G(\psi_\sss \reco \psi_\ttt) < s$.
\end{leftbar}
\end{problem}

\paragraph{Reconfiguration Inapproximability Hypothesis.}
We now present a formal description of our working hypothesis,
which serves as a starting point for \PSPACE-hardness of approximation.
\begin{hypothesis}[Reconfiguration Inapproximability Hypothesis, RIH]
\label{hyp:RIH}
\begin{leftbar}
There exist universal constants
$q, W \in \bbN$ and
$\epsilon \in (0,1)$ such that
\prb{Gap$_{1, 1-\epsilon}$ $q$-CSP$_W$ Reconfiguration} is \PSPACE-hard.
\end{leftbar}
\end{hypothesis}
Note that \NP-hardness of 
\prb{Gap$_{1, 1-\epsilon}$ $q$-CSP$_W$ Reconfiguration}
was already shown \cite{ito2011complexity}.

\section{Hardness of Approximation for {\normalfont \prb{Maxmin E$3$-SAT$(B)$ Reconfiguration}}}
\label{sec:SAT-CSP}
In this section, we prove the main result of this paper; that is,
\prb{Maxmin E$3$-SAT Reconfiguration} of bounded occurrence is \PSPACE-hard to approximate under RIH.

\begin{theorem}\label{thm:E3SAT}
\begin{leftbar}
Under \cref{hyp:RIH},
there exist universal constants $B \in \bbN$ and $\epsilon \in (0,1)$ such that 
\prb{Gap$_{1,1-\epsilon}$ E$3$-SAT$(B)$ Reconfiguration} is \PSPACE-hard.
\end{leftbar}
\end{theorem}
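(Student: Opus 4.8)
The plan is to obtain \cref{thm:E3SAT} as the composition of a short chain of polynomial-time, gap-preserving reductions, all emanating from the problem that \cref{hyp:RIH} postulates to be \PSPACE-hard. Since \PSPACE-hardness is preserved under composition of polynomial-time many-one reductions between promise problems, it suffices to exhibit each reduction in the chain with \emph{perfect completeness} (a \YES instance, with value $1$, maps to a \YES instance with value $1$) and with soundness that deteriorates by at most a universal multiplicative constant; then the terminal instance is an E$3$-SAT formula of bounded occurrence on which it is \PSPACE-hard to distinguish value $1$ from value $<1-\epsilon'$ for a universal $\epsilon' \in (0,1)$, which is exactly the claim. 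The chain I would run is \prb{$q$-CSP$_W$ Reconfiguration} $\to$ \prb{E$3$-SAT Reconfiguration} (\cref{lem:qCSP-E3SAT}) $\to$ \prb{BCSP$_3$ Reconfiguration} (\cref{lem:E3SAT-BCSP}) $\to$ \prb{BCSP$_6$ Reconfiguration} of bounded degree (\cref{lem:BCSP-deg_reduce}) $\to$ \prb{E$3$-SAT$(B)$ Reconfiguration} (again by the encoding behind \cref{lem:qCSP-E3SAT,lem:E3SAT-BCSP}), as in the left part of \cref{fig:reductions}.

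For the first and last steps I would use a \emph{reconfigurable} SAT encoding. Going from \prb{$q$-CSP$_W$ Reconfiguration} to \prb{E$3$-SAT Reconfiguration}, encode each vertex's alphabet symbol in a constant-size block of Boolean variables and replace each $q$-ary constraint by an equisatisfiable $3$-CNF gadget with Tseitin-style auxiliary variables for the wide clauses. Beyond a textbook encoding, the gadget must be designed so that a single CSP vertex flip can be simulated by rewriting its Boolean block one variable at a time with every clause staying satisfied, and so that the auxiliary variables can be re-set along the way --- this is what preserves perfect completeness --- and so that each violated constraint forces only a bounded number of violated clauses --- this controls the soundness loss. Re-reading the formula as a binary constraint graph over an alphabet of size $3$ (\cref{lem:E3SAT-BCSP}) is routine and likewise completeness-preserving.

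The crux --- and the step I expect to be the main obstacle --- is the degree reduction \cref{lem:BCSP-deg_reduce}. The Papadimitriou--Yannakakis recipe (blow up each high-degree vertex $v$ into a $d_0$-regular expander \emph{cloud}, place one cloud vertex on each edge incident to $v$, impose equality on intra-cloud edges) destroys perfect completeness here, because reconfiguring $v$'s cloud must proceed vertex-by-vertex and so passes through assignments violating $\Theta(d_0 \cdot |\cloud(v)|)$ equality edges. The fix is the \emph{alphabet squaring} trick: enlarge the alphabet to $\Sigma' = \Sigma \cup \{\text{unordered pairs from } \Sigma\}$, read a label $\aaa\bbb$ as ``$\aaa$ and $\bbb$ at once,'' and relax intra-cloud equality to a consistency relation under which $\aaa$ is compatible with $\aaa\bbb$; then a cloud migrates from all-$\aaa$ to all-$\bbb$ by first flooding it with $\aaa\bbb$ and then collapsing to all-$\bbb$ without violating a single cloud constraint, so the forward direction goes through verbatim. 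The price is that the reduction is no longer a Karp (nor PTAS) reduction --- \cref{ex:BCSP} exhibits a cloud carrying genuinely conflicting values while cutting no cloud edge --- so soundness must be argued by hand. The engine is \cref{clm:BCSP-deg_reduce:ST}: using the RIH promise that along \emph{any} transformation some intermediate assignment violates $\epsilon$-fraction of constraints, one shows that for a constant fraction of vertices $v$ the cloud at that step splits into disjoint subsets $S_v, T_v$ of size $\Theta(\epsilon \cdot |\cloud(v)|)$ with all $S_v$--$T_v$ constraints violated. The \emph{expander mixing lemma} \cite{alon1988explicit} --- in its eigenvalue form, the edge-expansion form used by \cite{papadimitriou1991optimization} being unavailable here --- then lower-bounds the number of violated intra-cloud edges by roughly $(d_0\epsilon - \lambda)\,\epsilon \cdot |\cloud(v)|$, where $\lambda$ is the second eigenvalue of the cloud; taking the clouds from an explicit near-Ramanujan family \cite{alon2021explicit,mohanty2021explicit} gives $\lambda = \bigO(\sqrt{d_0})$, and fixing $d_0 = \Theta(\epsilon^{-2})$ makes $(d_0\epsilon - \lambda)\epsilon$ a positive constant, so a constant fraction of the new instance's edges are violated, as needed. (The alphabet of the bounded-degree instance is $\Sigma'$, of size $|\Sigma| + \binom{|\Sigma|}{2} = 6$ when $|\Sigma| = 3$.)

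Finally, running the reconfigurable SAT encoding once more on the bounded-degree binary CSP instance produces an E$3$-SAT formula in which, because the constraint graph has constant maximum degree and the alphabet is of constant size, every variable occurs in at most a constant number $B$ of clauses; completeness is preserved and soundness drops by one further constant factor. Composing the constant soundness losses across the chain yields universal constants $B \in \bbN$ and $\epsilon' \in (0,1)$ for which \prb{Gap$_{1,1-\epsilon'}$ E$3$-SAT$(B)$ Reconfiguration} is \PSPACE-hard, which is \cref{thm:E3SAT}. Running the same chain from the \NP-hard version of RIH instead gives the unconditional \NP-hardness remarked upon in the introduction.
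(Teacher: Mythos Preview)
Your proposal is correct and follows essentially the same route as the paper: the chain \prb{$q$-CSP$_W$} $\to$ \prb{E$3$-SAT} $\to$ \prb{BCSP$_3$} $\to$ bounded-degree \prb{BCSP$_6$} $\to$ \prb{E$3$-SAT$(B)$}, with the alphabet-squaring trick for completeness, \cref{clm:BCSP-deg_reduce:ST} plus the expander mixing lemma on near-Ramanujan clouds (with $d_0=\Theta(\epsilon^{-2})$) for soundness, is exactly the paper's argument. One small clarification: the final step back to \prb{E$3$-SAT$(B)$} uses only \cref{lem:qCSP-E3SAT} (with $q=2$, $W=6$, and the degree bound $\Delta$ yielding the occurrence bound $B$); \cref{lem:E3SAT-BCSP} is not re-invoked there.
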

The remainder of this section is devoted to the proof of \cref{thm:E3SAT} and organized as follows:
In \cref{subsec:SAT-CSP:qCSP-E3SAT-BCSP}, we reduce
\prb{Maxmin $q$-CSP$_W$ Reconfiguration} to \prb{Maxmin BCSP$_3$ Reconfiguration},
\cref{subsec:SAT-CSP:BCSP-degreduce} presents the degree reduction of \prb{Maxmin BCSP Reconfiguration}, and
\cref{subsec:SAT-CSP:together} concludes the proof of \cref{thm:E3SAT}.

\subsection{Gap-preserving Reduction from {\normalfont \prb{Maxmin $q$-CSP$_W$ Reconfiguration}} to {\normalfont \prb{Maxmin BCSP$_3$ Reconfiguration}}}
\label{subsec:SAT-CSP:qCSP-E3SAT-BCSP}

We first reduce \prb{Maxmin $q$-CSP$_W$ Reconfiguration} to \prb{Maxmin E$3$-SAT Reconfiguration}.

\begin{lemma}\label{lem:qCSP-E3SAT}
\begin{leftbar}
For every $q, W \geq 2$ and $\epsilon \in (0,1)$,
there exists a gap-preserving reduction from
\prb{Gap$_{1,1-\epsilon}$ $q$-CSP$_W$ Reconfiguration} to
\prb{Gap$_{1,1-\epsilon'}$ E$3$-SAT Reconfiguration}, where
$\epsilon' = \frac{\epsilon}{W^q \cdot 2^{qW} (qW-2)}$.
Moreover,
if the maximum degree of the constraint graph in the former problem is $\Delta$,
then the number of occurrences of each variable in the latter problem is bounded by $W^q \cdot 2^{qW} \Delta$.
\end{leftbar}
\end{lemma}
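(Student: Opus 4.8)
The plan is to realize the reduction in two stages: first encode \prb{Gap$_{1,1-\epsilon}$ $q$-CSP$_W$ Reconfiguration} into a gap version of (arbitrary-width) \prb{SAT Reconfiguration} by a \emph{reconfigurable one-hot encoding}, then pass to \prb{E$3$-SAT Reconfiguration} through the usual clause-splitting and padding surgery, taking care that reconfiguration sequences survive it. The three factors in $\epsilon'$ are exactly what a direct count of the resulting clauses produces: $W^q$ for the tuples that can sit on one hyperedge, $2^{qW}$ for the Boolean patterns on a hyperedge's block of one-hot variables, and $qW-2$ for turning a width-$\bigO(qW)$ clause into $3$-clauses.

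For the encoding, given $G=(V,E,\Sigma,\Pi)$ with $|\Sigma|=W$, I would use one Boolean variable $x_{v,a}$ for every $v\in V$ and $a\in\Sigma$ (meaning ``$v$ holds $a$''), discard isolated vertices, and attach to each hyperedge $e$ a CNF $\phi_e$ on the $qW$ variables $\{x_{v,a}:v\in e\}$ satisfied by exactly two sorts of pattern: (i) the one-hot encoding of a tuple of $\pi_e$, and (ii) a ``merge'' of two tuples of $\pi_e$ differing in one coordinate, i.e.\ such a one-hot pattern with one extra variable $x_{u,a'}$ turned on, $a'$ being a legal alternative value at $u$. Concretely I would organize $\phi_e$ as a disjunction over $\tau\in\pi_e$ (via selector variables $y_{e,\tau}$, at most $W^q$ of them, whose ``at least one'' constraint is itself thinned to width $3$) of sub-gadgets ``$y_{e,\tau}\Rightarrow$ the pattern is one-hot$(\tau)$ or a legal merge through $\tau$'', each sub-gadget a conjunction of at most $2^{qW}$ clauses of width $\bigO(qW)$. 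The reason for admitting the merge patterns --- and, critically, for \emph{not} adding the naive at-most-one clauses $\neg x_{v,a}\vee\neg x_{v,a'}$ --- is reconfigurability: a single \prb{$q$-CSP} step recoloring $v$ from $a$ to $b$ is simulated by flipping $x_{v,b}$ on, switching the selectors $y_{e,\cdot}$ at the hyperedges through $v$ (legal while the pattern is the merge, since $b$ was a legal move there), and then flipping $x_{v,a}$ off. Every intermediate assignment still satisfies every $\phi_e$, so a reconfiguration sequence witnessing $\val_G(\psi_\sss\reco\psi_\ttt)=1$ lifts to one of satisfying assignments for the encoded formula between the one-hot encodings $\sigma_\sss,\sigma_\ttt$ of $\psi_\sss,\psi_\ttt$; perfect completeness holds.

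For soundness I would round back. Given any $\sigmaS=\langle\sigma_\sss,\ldots,\sigma_\ttt\rangle$, let $\psi^{(i)}(v)$ be the least $a$ with $x_{v,a}$ true in $\sigma^{(i)}$, with a fixed default when $v$'s block is all-false (so $\sigma_\sss,\sigma_\ttt$ round to $\psi_\sss,\psi_\ttt$). Consecutive $\sigma^{(i-1)},\sigma^{(i)}$ differ in one variable, lying either in a single block $\{x_{v,a}\}_a$ (changing only $\psi^{(i)}(v)$) or among the auxiliary variables (changing no vertex), so after deleting repetitions the $\psi^{(i)}$ form a reconfiguration sequence from $\psi_\sss$ to $\psi_\ttt$. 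The crux is \emph{hyperedge-local charging}: if $\psi^{(i)}$ violates $\pi_e$, then the restriction of $\sigma^{(i)}$ to $e$'s variables is neither a good one-hot pattern nor a legal merge (both of those round into $\pi_e$), hence it is a forbidden pattern and $\sigma^{(i)}$ falsifies some clause of $\phi_e$. Since the $\phi_e$'s contribute pairwise disjoint groups of clauses, $\sigma^{(i)}$ falsifies at least as many clauses as $\psi^{(i)}$ violates hyperedges; as the formula has $\bigO(W^q\cdot 2^{qW})$ clauses per hyperedge, the hypothesis $\val_\phi(\sigmaS)\geq 1-\epsilon'$ forces each $\psi^{(i)}$ to satisfy at least a $(1-\epsilon)$-fraction of $E$, hence $\val_G(\psi_\sss\reco\psi_\ttt)\geq 1-\epsilon$.

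Finally, I would convert the formula to E$3$-CNF in the standard way: split each width-$k$ clause into $k-2$ width-$3$ clauses along a private chain of fresh variables, and pad width-$1$ and width-$2$ clauses to width exactly $3$ with fresh variables occurring $\bigO(1)$ times each. An assignment falsifying a clause still falsifies one of its $3$-clauses, so the soundness count degrades only by the largest width $\bigO(qW)$, and a reconfiguration sequence extends with polynomial overhead by re-routing, locally, the chain variables of whichever clause has just changed its satisfying literal --- the slightly delicate but standard argument underlying \PSPACE-completeness of \prb{$3$-SAT Reconfiguration}. Collecting bounds, the E$3$-CNF has $\bigO(W^q\cdot 2^{qW}(qW-2))$ clauses per hyperedge, giving $\epsilon'=\frac{\epsilon}{W^q\cdot 2^{qW}(qW-2)}$; and each $x_{v,a}$ appears only in the $\leq\Delta$ hyperedges through $v$, in $\leq W^q\cdot 2^{qW}$ clauses apiece and once per resulting $3$-clause, so its number of occurrences is $\leq W^q\cdot 2^{qW}\Delta$. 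I expect the completeness step to be the real obstacle: the obvious one-hot encoding with at-most-one clauses is \emph{not} reconfigurable, and the loosening that repairs it --- admitting the ``double-value'' merge patterns --- is exactly what turns the soundness side, the rounding and the hyperedge-local charging above, into something one has to argue rather than read off.
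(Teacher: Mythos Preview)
Your two–stage strategy and, crucially, your diagnosis that the encoding must admit ``double-value'' intermediate states are exactly what the paper does. The paper likewise splits the reduction into \prb{$q$-CSP$_W$ Reconfiguration} $\to$ \prb{E$k$-SAT Reconfiguration} (with $k=qW$) and then \prb{E$k$-SAT} $\to$ \prb{E$3$-SAT} via the standard chain-splitting, and its soundness argument is the same hyperedge-local charging you describe.

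Where you differ is in the encoding itself, and this is worth noting because it is what makes the stated constants come out exactly. The paper does \emph{not} use one-hot plus selectors. Instead it fixes a surjection $\enc\colon\{\TTT,\FFF\}^\Sigma\to\Sigma$ defined by $\enc(\vec s)=\max\{\alpha:s_\alpha=\TTT\}$ (with default $1$), proves directly that any two strings can be reconfigured bitwise while keeping $\enc$ pinned to one of the two endpoint values, and then lets $\phi_e$ be the conjunction, over bad tuples $(\alpha_1,\ldots,\alpha_q)\notin\pi_e$ and over all $(\vec s_1,\ldots,\vec s_q)$ with $\enc(\vec s_i)=\alpha_i$, of the single clause forbidding that bit pattern. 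No selector variables are introduced at all; every clause has width exactly $qW$, and there are at most $W^q\cdot 2^{qW}$ of them per hyperedge. This is why the factors $W^q$, $2^{qW}$, and $qW-2$ appear cleanly and why the occurrence bound $W^q\cdot 2^{qW}\Delta$ holds on the nose.

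Your selector-based construction is correct in spirit, but as written it does not recover the exact $\epsilon'$ in the lemma: the implication clauses $y_{e,\tau}\Rightarrow(\cdots)$ have width $qW+1$, and the at-least-one clause over the $y_{e,\tau}$ has width up to $W^q$ rather than $qW$, so after splitting you get more than $W^q\cdot 2^{qW}(qW-2)$ E$3$-clauses per hyperedge. You would obtain \emph{some} positive $\epsilon'$ depending only on $q,W$, which suffices for the downstream applications, but not the one stated. If you want the exact constant, drop the selectors and use the $\enc$ trick; your merge-pattern idea is then subsumed by the reconfigurability of $\enc$.
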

The proof of \cref{lem:qCSP-E3SAT} consists of
a reduction from 
\prb{Maxmin $q$-CSP$_W$ Reconfiguration} to \prb{Maxmin E$k$-SAT Reconfiguration},
where the clause size $k$ depends solely on $q$ and $W$, and
that from \prb{Maxmin E$k$-SAT Reconfiguration} to \prb{Maxmin E$3$-SAT Reconfiguration}.

\begin{claim}\label{lem:qCSP-EkSAT}
\begin{leftbar}
For every $q, W \geq 2$ and $\epsilon \in (0,1)$,
there exists a gap-preserving reduction from
\prb{Gap$_{1, 1-\epsilon}$ $q$-CSP$_W$ Reconfiguration} to
\prb{Gap$_{1,1-\frac{\epsilon}{W^q \cdot 2^{qW}}}$ E$k$-SAT Reconfiguration},
where $k = qW$.
Moreover, if the maximum degree of the constraint graph in the former problem is $\Delta$, then
the number of occurrences of each variable in the latter problem is bounded by $W^q \cdot 2^{qW} \Delta$.
\end{leftbar}
\end{claim}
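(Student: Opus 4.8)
The plan is to encode each vertex of the constraint graph by a block of Boolean variables and each constraint $\pi_e$ by a short CNF, then pad everything to exactly $qW$ literals per clause. First I would fix a $q$-ary constraint graph $G=(V,E,\Sigma,\Pi)$ with $|\Sigma|=W$ and introduce, for every vertex $v\in V$ and every value $a\in\Sigma$, a Boolean variable $x_{v,a}$ that is meant to be $\TTT$ iff $v$ is assigned $a$. Since the number of such variables per vertex is $W$, I will express the constraint $\pi_e$ on a hyperedge $e=\{v_1,\dots,v_q\}$ purely in terms of the $qW$ variables $\{x_{v_i,a}: i\in[q],a\in\Sigma\}$: for each tuple $(a_1,\dots,a_q)\in\Sigma^e\setminus\pi_e$ that violates the constraint I add the clause forbidding exactly that pattern, namely $\bigvee_{i\in[q]} \neg x_{v_i,a_i}\vee(\text{padding})$. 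There are at most $W^q$ forbidden tuples per hyperedge, and after padding each such clause to exactly $k=qW$ literals (by adding fresh dummy variables appearing in no other clause, together with unit-propagation gadgets, or — more cleanly — by using auxiliary always-false padding literals managed so as not to affect satisfiability) we get an E$k$-CNF formula $\phi$. The key point is that I must \emph{not} add clauses forcing exactly one $x_{v,a}$ to be true per vertex: doing so would destroy reconfigurability, because flipping the value of $v$ from $a$ to $b$ requires passing through an assignment where both or neither of $x_{v,a},x_{v,b}$ hold. Instead I allow the "invalid" Boolean configurations of a block and rely on the $W^q\cdot 2^{qW}$ copies trick below to control the gap.

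The main obstacle — and the reason for the peculiar factor $W^q\cdot 2^{qW}$ — is gap amplification in the soundness direction. A single unsatisfied constraint $\pi_e$ in $G$ corresponds to only one violated clause among the $\leq W^q$ clauses encoding $e$, so the violated fraction would shrink by a factor that depends on the number of clauses, losing the promised $\epsilon$. To fix this I would take $2^{qW}$ identical copies of each clause (one formal copy indexed by each Boolean pattern on the $qW$ relevant variables is the cleanest bookkeeping), so that every hyperedge of $G$ contributes exactly $W^q\cdot 2^{qW}$ clauses to $\phi$ and $|E(\phi)| = W^q\cdot 2^{qW}\cdot|E(G)|$. Then I need the translation of reconfiguration sequences. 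Given a satisfying assignment $\psi$ of $G$, the \emph{indicator assignment} $\sigma_\psi$ (with $x_{v,a}=\TTT$ iff $\psi(v)=a$) satisfies $\phi$; and flipping $\psi$ at one vertex corresponds to flipping at most two Boolean variables, so each single-vertex step of $G$ becomes at most two single-variable steps in $\phi$. I would have to check that along the two-step transition the only clauses that can be violated are those encoding hyperedges incident to the changed vertex, and bound how badly: in the worst intermediate Boolean state, a block of $v$ may look like $a$ \emph{and} $b$ simultaneously (or like nothing), which can falsify at most the clauses encoding the $\leq\Delta$ incident hyperedges. This gives the degree bound $W^q\cdot 2^{qW}\Delta$ on variable occurrences claimed in the moreover part. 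Conversely, from an arbitrary reconfiguration sequence for $\phi$ I would extract one for $G$ by rounding each Boolean block to a value in $\Sigma$ (picking any $a$ with $x_{v,a}=\TTT$, or an arbitrary fixed value if none) and merging consecutive identical rounded assignments; the rounding can only gain satisfied constraints relative to the Boolean picture, so if the $\phi$-sequence keeps $\geq 1-\frac{\epsilon}{W^q 2^{qW}}$ of clauses satisfied throughout, the extracted $G$-sequence keeps $\geq 1-\epsilon$ of constraints satisfied throughout.

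Putting the two directions together yields completeness $c=1$ (indicator assignments are genuinely satisfying, and the reconfiguration sequence lifts step by step) and soundness: if $\val_G(\psi_\sss\reco\psi_\ttt)<1-\epsilon$ then every $G$-sequence drops below $1-\epsilon$ somewhere, hence — by the contrapositive of the rounding argument — every $\phi$-sequence drops below $1-\frac{\epsilon}{W^q 2^{qW}}$ somewhere, i.e. $\val_\phi(\sigma_\sss\reco\sigma_\ttt)<1-\frac{\epsilon}{W^q 2^{qW}}$. I would finish by noting the reduction is polynomial-time for fixed $q,W$ (the formula has $O(W\cdot|V|)$ variables and $O(W^q 2^{qW}\cdot|E|)$ clauses) and that the dummy padding variables, being local to a single clause, contribute nothing to the occurrence bound beyond a constant, so the stated bound $W^q\cdot 2^{qW}\Delta$ holds after absorbing constants. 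The remaining step in the proof of \cref{lem:qCSP-E3SAT}, reducing E$k$-SAT to E$3$-SAT while preserving the gap up to the extra factor $\tfrac1{qW-2}$, is the standard clause-splitting reduction with auxiliary chaining variables, and I would treat it separately.
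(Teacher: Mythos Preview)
Your soundness argument has a fatal gap. Every clause you write has the form
\[
\neg x_{v_1,a_1}\vee\cdots\vee\neg x_{v_q,a_q}\vee(\text{inert padding}),
\]
so all non-padding literals are negations. The all-$\FFF$ assignment therefore satisfies every clause of $\phi$. Hence, regardless of the CSP instance, one can reconfigure $\sigma_\sss$ to $\sigma_\ttt$ by first flipping every variable to $\FFF$ one at a time and then flipping to the target values, maintaining $\val_\phi=1$ throughout. Your claim that ``the rounding can only gain satisfied constraints relative to the Boolean picture'' is exactly where this breaks: when a block is all-$\FFF$, every clause touching that block is satisfied, yet your rounding assigns an arbitrary value, which may violate constraints in $G$. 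Duplicating each clause $2^{qW}$ times does nothing to fix this; it only rescales a gap that is already zero.

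The paper's proof closes this loophole by a genuinely different encoding. It fixes a \emph{surjective} decoding map $\enc\colon\{\TTT,\FFF\}^{\Sigma}\to\Sigma$ (roughly, ``the largest index set to $\TTT$, or $1$ if none''), so that \emph{every} $qW$-bit pattern decodes to some tuple in $\Sigma^q$. For each forbidden tuple $(\alpha_1,\dots,\alpha_q)$ it then adds one clause of width exactly $qW$ \emph{for every} bit-pattern $(\vec{s}_1,\dots,\vec{s}_q)$ with $\enc(\vec{s}_i)=\alpha_i$; this is where the factor $2^{qW}$ actually comes from, not from duplication. There is now no universally safe Boolean assignment, and rounding via $\enc$ genuinely shows that a violated constraint forces a violated clause. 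The price is that completeness is no longer automatic: changing $\psi(v)$ from $\alpha$ to $\beta$ requires reconfiguring the block $\vec{x}_v$ while keeping $\enc$ equal to $\alpha$ or $\beta$ at every intermediate step. The paper proves this is always possible (its ``reconfigurable encoding'' claim), and that lemma is the real content you are missing.
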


\begin{claim}\label{lem:EkSAT-E3SAT}
\begin{leftbar}
For every $k \geq 4$ and $\epsilon \in (0,1)$,
there exists a gap-preserving reduction from
\prb{Gap$_{1,1-\epsilon}$ E$k$-SAT Reconfiguration} to
\prb{Gap$_{1,1-\frac{\epsilon}{k-2}}$ E$3$-SAT Reconfiguration}.
Moreover, if the number of occurrences of each variable in the former problem is $B$,
then the number of occurrences of each variable in the latter problem is bounded by $\max\{B, 2\}$.
\end{leftbar}
\end{claim}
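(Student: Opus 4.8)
The plan is to apply the textbook clause-splitting reduction from $k$-SAT to $3$-SAT, but carried out carefully so that it is reconfigurable and gap-preserving. Given an E$k$-CNF formula $\phi$ over variables $x_1, \ldots, x_n$ and a clause $C = (\ell_1 \vee \ell_2 \vee \cdots \vee \ell_k)$, I would introduce $k-3$ fresh auxiliary variables $y^C_1, \ldots, y^C_{k-3}$ and replace $C$ by the $k-2$ clauses
\begin{align}
(\ell_1 \vee \ell_2 \vee y^C_1),\ (\neg y^C_1 \vee \ell_3 \vee y^C_2),\ \ldots,\ (\neg y^C_{k-4} \vee \ell_{k-2} \vee y^C_{k-3}),\ (\neg y^C_{k-3} \vee \ell_{k-1} \vee \ell_k).
\end{align}
The resulting formula $\phi'$ is E$3$-CNF. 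Each original variable $x_i$ keeps the same occurrences, and each auxiliary variable occurs exactly twice, so the occurrence bound becomes $\max\{B,2\}$; the number of clauses grows from $m$ to $(k-2)m$, which is the source of the $\frac{1}{k-2}$ factor in the soundness loss. For the two endpoints, given a truth assignment $\sigma$ for $\phi$ I would extend it to $\sigma'$ for $\phi'$ by setting the auxiliary variables of each clause $C$ greedily: walk down the chain and set $y^C_j$ so that the clause to its left is already satisfied by the original literals if possible, otherwise pass the obligation along. Concretely, $y^C_j = \TTT$ iff none of $\ell_1, \ldots, \ell_{j+1}$ is satisfied by $\sigma$. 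This is the standard fact that a clause $C$ is satisfied by $\sigma$ iff there is an extension of the auxiliary variables satisfying all $k-2$ split clauses; and if $\sigma$ satisfies $C$ then the greedy extension above satisfies the split clauses. Define $\sigma'_\sss, \sigma'_\ttt$ from $\sigma_\sss, \sigma_\ttt$ this way.

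For completeness, suppose $\val_\phi(\sigma_\sss \reco \sigma_\ttt) = 1$, i.e.\ there is a reconfiguration sequence $\langle \sigma^{(0)}, \ldots, \sigma^{(\ell)}\rangle$ of satisfying assignments for $\phi$. I would lift it to a reconfiguration sequence for $\phi'$: between consecutive $\sigma^{(t)}$ and $\sigma^{(t+1)}$, which differ in one original variable $x_i$, first flip $x_i$ (all split clauses containing $x_i$ as a literal $\ell_j$ stay satisfied via their auxiliary variables only if those were set correctly — so instead) reorganize as follows. For each clause $C$ containing $x_i$, the greedy auxiliary setting for $\sigma^{(t)}$ and for $\sigma^{(t+1)}$ may differ; I walk the chain $y^C_1, \ldots, y^C_{k-3}$ and flip the ones that must change, one at a time. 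The key point is that flipping $y^C_j$ from its old correct value toward its new correct value keeps all split clauses of $C$ satisfied throughout, because at every intermediate configuration the "obligation-passing" invariant (every split clause of $C$ is satisfied) is maintained — this needs a short monotonicity argument about the chain but is standard. Doing this for every affected clause $C$ (their auxiliary variables are disjoint, so the order among clauses is irrelevant), then flipping $x_i$ itself (now safe since every split clause of every affected $C$ is already satisfied by its auxiliary literal), then cleaning up the auxiliary variables to their $\sigma^{(t+1)}$-greedy values, yields a valid sub-sequence. Concatenating over $t$ gives a satisfying reconfiguration sequence for $\phi'$ from $\sigma'_\sss$ to $\sigma'_\ttt$, so $\val_{\phi'}(\sigma'_\sss \reco \sigma'_\ttt) = 1$.

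For soundness, I argue the contrapositive: if $\val_{\phi'}(\sigma'_\sss \reco \sigma'_\ttt) \geq 1 - \frac{\epsilon}{k-2}$, then $\val_\phi(\sigma_\sss \reco \sigma_\ttt) \geq 1 - \epsilon$. Take a reconfiguration sequence $\psiS'$ for $\phi'$ witnessing the former; project each assignment to the original variables $x_1, \ldots, x_n$, obtaining a reconfiguration sequence for $\phi$ (consecutive assignments either differ in one original variable or are equal, in which case we contract repeats). I claim each projected assignment $\sigma$ satisfies at least $(1-\epsilon)m$ clauses of $\phi$: for each original clause $C$ of $\phi$ unsatisfied by $\sigma$, \emph{no} extension of $C$'s auxiliaries can satisfy all $k-2$ split clauses, hence at least one split clause of $C$ is violated by the corresponding assignment in $\psiS'$. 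Since these "bad" split clauses lie in disjoint blocks, the number of split clauses violated is at least the number of clauses of $\phi$ violated by $\sigma$; as the total number of split clauses is $(k-2)m$ and at most $\frac{\epsilon}{k-2}(k-2)m = \epsilon m$ of them are violated, at most $\epsilon m$ clauses of $\phi$ are violated by $\sigma$, i.e.\ $\val_\phi(\sigma) \geq 1 - \epsilon$. Taking the minimum over the sequence gives $\val_\phi(\sigma_\sss \reco \sigma_\ttt) \geq 1-\epsilon$, as desired. The main obstacle is the completeness direction: verifying that the auxiliary variables of a split clause can be reconfigured one at a time, in step with a flip of an original variable, without ever violating a split clause — this requires pinning down the right invariant on the chain $(y^C_1, \ldots, y^C_{k-3})$ and checking the monotonicity of the greedy assignment under a single literal flip, which I expect to be the only genuinely non-mechanical part of the argument.
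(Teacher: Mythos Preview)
Your approach is essentially the paper's: the same textbook clause-splitting reduction, the same canonical extension of a satisfying assignment to the auxiliaries (your rule ``$y^C_j=\TTT$ iff none of $\ell_1,\ldots,\ell_{j+1}$ is satisfied'' coincides with the paper's ``$z^j_i=\TTT$ iff $i\le i^*-2$''), the same occurrence-bound bookkeeping, and the same soundness argument via projection and the observation that an unsatisfied original clause forces at least one unsatisfied split clause in its block.

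The only place where you diverge from the paper is the completeness argument: the paper simply cites \cite[Lemma~3.5]{gopalan2009connectivity}, whereas you sketch a direct reconfiguration. Your sketch is in the right spirit but the three-phase description is muddled. As written, phase~1 (``walk the chain and flip the ones that must change'') followed by ``then cleaning up the auxiliary variables to their $\sigma^{(t+1)}$-greedy values'' in phase~3 is ambiguous about what intermediate state phase~1 reaches; and if phase~1 is meant to reach the $\sigma^{(t+1)}$-greedy setting while still under $\sigma^{(t)}$, that can violate a split clause (e.g.\ when $\ell_p$ is the unique first satisfied literal under $\sigma^{(t)}$ and you flip $y^C_{p-1}$ to $\TTT$ before any later $y^C_j$, the clause $(\neg y^C_{p-1}\vee \ell_{p+1}\vee y^C_p)$ may fail). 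The clean fix---which is what \cite{gopalan2009connectivity} effectively establishes---is to use that both $\sigma^{(t)}$ and $\sigma^{(t+1)}$ satisfy $C$, hence some literal $\ell_q$ with $q\neq p$ is satisfied by both; in phase~1 reconfigure the auxiliaries (under $\sigma^{(t)}$) to the $\ell_q$-pivoted setting, flip $x_i$, then in phase~3 reconfigure (under $\sigma^{(t+1)}$) from the $\ell_q$-pivoted setting to the $\sigma^{(t+1)}$-greedy one. Each of phases~1 and~3 moves between two ``threshold'' settings that are both valid for the current assignment, and those are connected by flipping the auxiliaries monotonically in the correct direction. This is exactly the ``monotonicity argument'' you allude to; making the pivot literal $\ell_q$ explicit is what turns your sketch into a proof.
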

\cref{lem:qCSP-E3SAT} follows from \cref{lem:qCSP-EkSAT,lem:EkSAT-E3SAT}.

\begin{table}[tbp]
    \centering
    \begin{tabular}{c|c}
        \toprule
        $\vec{s} \in \{\TTT,\FFF\}^\Sigma$ & $\enc(\vec{s}) \in \Sigma$ \\
        \midrule
        $\FFF\FFF\FFF$ & $1$ \\
        $\TTT\FFF\FFF$ & $1$ \\
        $\FFF\TTT\FFF$ & $2$ \\
        $\TTT\TTT\FFF$ & $2$ \\
        $\FFF\FFF\TTT$ & $3$ \\
        $\TTT\FFF\TTT$ & $3$ \\
        $\FFF\TTT\TTT$ & $3$ \\
        $\TTT\TTT\TTT$ & $3$ \\
        \bottomrule
    \end{tabular}
    \caption{Example of $\enc \colon \{\TTT, \FFF\}^\Sigma \to \Sigma$ when $\Sigma = [3]$.}
    \label{tab:encoding}
\end{table}

\paragraph{Reconfigurable SAT Encoding.}
For the proof of \cref{lem:qCSP-EkSAT},
we introduce a slightly sophisticated SAT encoding of the alphabet.
Hereafter, we denote $\Sigma \triangleq [W]$ for some $W \in \bbN$.
Consider an encoding $\enc \colon \{\TTT,\FFF\}^\Sigma \to \Sigma$
of a binary string $\vec{s} \in \{\TTT, \FFF\}^\Sigma$ to $\Sigma$ defined as follows:
\begin{align}
\textsf{enc}(\vec{s}) \triangleq
\begin{cases}
    1 & \mbox{if } s_\alpha = \FFF \mbox{ for all } \alpha \in \Sigma, \\
    \alpha & \mbox{if } s_\alpha = \TTT \mbox{ and }  s_{\beta} = \FFF \mbox{ for all } \beta > \alpha.
\end{cases}
\end{align}
See \cref{tab:encoding} for an example of $\enc$ for $\Sigma=[3]$.
$\enc$ exhibits the following property concerning reconfigurability:

\begin{claim}\label{clm:string-reconf}
\begin{leftbar}
For any two strings $\vec{s}$ and $\vec{t}$ in $\{\TTT, \FFF\}^\Sigma$
with $\alpha \triangleq \enc(\vec{s})$ and $\beta \triangleq \enc(\vec{t})$,
we can transform $\vec{s}$ into $\vec{t}$
by repeatedly flipping one entry at a time while preserving every intermediate string mapped to $\alpha$ or $\beta$ by $\enc$.
\end{leftbar}
\end{claim}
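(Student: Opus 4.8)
The plan is to read off from the definition that $\enc(\vec{s})$ equals the largest index $\gamma \in \Sigma$ with $s_\gamma = \TTT$, with the convention $\enc(\vec{s}) = 1$ when $\vec{s}$ is the all-$\FFF$ string. Equivalently, for $\gamma \geq 2$ the fiber $\enc^{-1}(\gamma)$ is exactly the set of strings with $s_\gamma = \TTT$ and $s_\delta = \FFF$ for every $\delta > \gamma$, the coordinates $1, \ldots, \gamma-1$ being unconstrained, while $\enc^{-1}(1)$ is the set of strings with $s_\delta = \FFF$ for every $\delta \geq 2$, the coordinate $1$ being unconstrained. (This is exactly the pattern in the table for $\Sigma = [3]$.) Since the reconfiguration relation is symmetric, I may assume $\alpha \leq \beta$.

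For the case $\alpha = \beta$, I would observe that $\vec{s}$ and $\vec{t}$ can disagree only on coordinates that are "free" for $\enc^{-1}(\alpha)$: if $\alpha \geq 2$ both strings carry $\TTT$ in coordinate $\alpha$ and $\FFF$ in every coordinate above $\alpha$, so they can differ only on coordinates $1, \ldots, \alpha-1$; if $\alpha = 1$ both carry $\FFF$ above coordinate $1$, so they can differ only on coordinate $1$. Flipping these disagreement coordinates one at a time never moves the position of the highest $\TTT$ (resp.\ never creates a $\TTT$ above coordinate $1$), so every intermediate string remains in $\enc^{-1}(\alpha)$, and after finitely many flips we reach $\vec{t}$.

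For the case $\alpha < \beta$ (so $\beta \geq 2$), the key point is that a single flip bridges the two fibers. Since $\vec{s} \in \enc^{-1}(\alpha)$ and $\beta > \alpha$, we have $s_\beta = \FFF$; flipping coordinate $\beta$ to $\TTT$ yields a string whose highest $\TTT$ is at position $\beta$ — all coordinates above $\beta$ were $\FFF$ and stay $\FFF$ — hence a string in $\enc^{-1}(\beta)$. The two strings of this one-step move therefore both lie in $\enc^{-1}(\alpha) \cup \enc^{-1}(\beta)$. After the bridge I am inside $\enc^{-1}(\beta)$, and I finish exactly as in the equal case: the current string and $\vec{t}$ agree on coordinate $\beta$ (both $\TTT$) and on all coordinates above $\beta$ (both $\FFF$), so they differ only on the free coordinates $1, \ldots, \beta-1$, which I flip one at a time while staying in $\enc^{-1}(\beta)$.

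I do not anticipate a real obstacle; the proof is a short case analysis driven by the explicit description of the fibers. The only point that needs a moment's care is checking that the single "bridge" flip in the case $\alpha < \beta$ is legitimate — that coordinate $\beta$ is genuinely $\FFF$ in $\vec{s}$ and that turning it to $\TTT$ leaves no larger-indexed $\TTT$ behind — and, dually, that all residual discrepancies after the bridge live among the coordinates strictly below $\beta$; both follow immediately from the characterization of $\enc^{-1}(\cdot)$ recorded in the first paragraph.
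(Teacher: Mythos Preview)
Your proof is correct. The paper argues by induction on $W = |\Sigma|$: the case $\alpha,\beta < W$ is handed off to the induction hypothesis (both strings have $\FFF$ in coordinate $W$), the case $\alpha=\beta=W$ is handled by noting that $\enc(\vec{u})=W$ iff $u_W=\TTT$, and the mixed case $\alpha=W$, $\beta<W$ is handled by first adjusting coordinates $1,\ldots,W-1$ to match $\vec{t}$ while coordinate $W$ stays $\TTT$, then flipping coordinate $W$ last. Your route is different in organization: you give a direct, non-inductive argument by first writing down the fibers $\enc^{-1}(\gamma)$ explicitly and then observing that (i) within a fiber the free coordinates can be flipped independently, and (ii) a single flip of coordinate $\beta$ bridges $\enc^{-1}(\alpha)$ to $\enc^{-1}(\beta)$ when $\alpha<\beta$. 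The two arguments rest on the same structural fact (the encoding depends only on the position of the highest $\TTT$), but your version makes that fact explicit up front and avoids the recursive unwinding; the paper's induction is slightly more mechanical but hides the fiber description inside the inductive step.
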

\begin{claimproof}
The proof is done by induction on the size $W$ of $\Sigma$.
The case of $W = 1$ is trivial.
Suppose the statement holds for $W-1$.
Let $\vec{s}$ and $\vec{t}$ be any two strings such that $\alpha = \enc(\vec{s})$ and $\beta = \enc(\vec{t})$.
The case of $\alpha, \beta < W$ reduces to the induction hypothesis.
If $\alpha = \beta = W$, then $\vec{s}$ and $\vec{t}$ are reconfigurable to each other because
any string $\vec{u} \in \{\TTT, \FFF\}^\Sigma$ satisfies $\enc(\vec{u}) = W$ \emph{if and only if} $u_W = \TTT$.
Consider now the case that $\alpha = W$ and $\beta < W$ without loss of generality.
We can easily transform $\vec{s}$ into the string $\vec{s}' \in \{\TTT, \FFF\}^\Sigma$ such that
\begin{align}
    s'_\gamma =
    \begin{cases}
    \TTT & \text{if } \gamma = W, \\
    t_\gamma & \text{if } \gamma \leq W-1.
    \end{cases}
\end{align}
Observe that $\vec{s}'$ and $\vec{t}$ differ in only one entry,
which completes the proof.
\end{claimproof}
In the proof of \cref{lem:qCSP-EkSAT},
we use $\enc$ to encode each $q$-tuple of unacceptable values $(\alpha_1, \ldots, \alpha_q) \allowbreak \in \Sigma^e \setminus \pi_e$
for hyperedge $e = \{v_1, \ldots, v_q\} \in E$.

\begin{claimproof}[Proof of \cref{lem:qCSP-EkSAT}]
We first describe a gap-preserving reduction from
\prb{Maxmin $q$-CSP$_W$ Reconfiguration} to \prb{Maxmin E$k$-SAT Reconfiguration}.
Let $(G, \psi_\sss, \psi_\ttt)$ be an instance of \prb{Maxmin $q$-CSP$_W$ Reconfiguration},
where $G=(V,E,\Sigma=[W],\Pi = (\pi_e)_{e \in E})$ is a $q$-ary constraint graph, and
$\psi_\sss$ and $\psi_\ttt$ satisfy $G$.
For each vertex $v \in V$ and value $\alpha \in \Sigma$,
we create a variable $x_{v,\alpha}$.
Let $V'$ denote the set of the variables; i.e.,
$V' \triangleq \{x_{v,\alpha} \mid v \in V, \alpha \in \Sigma\}$.
Thinking of $(x_{v,1}, x_{v,2}, \ldots, x_{v,W})$ as a vector of $W$ variables,
we denote $\vec{x}_v \triangleq (x_{v,\alpha})_{\alpha \in \Sigma}$.
By abuse of notation,
we write $\sigma(\vec{x}_v) \triangleq (\sigma(x_{v,1}), \sigma(x_{v,2}), \ldots, \sigma(x_{v,W}))$
for truth assignment $\sigma \colon V' \to \{\TTT, \FFF\}$.
Then,
for each hyperedge $e = \{v_1, \ldots, v_q\} \in E$,
we will construct a CNF formula $\phi_e$ that emulates constraint $\pi_e$.
In particular, 
for each $q$-tuple of \emph{unacceptable} values $(\alpha_1, \ldots, \alpha_q) \in \Sigma^e \setminus \pi_e$,
$\phi_e$ should prevent 
$(\enc(\sigma(\vec{x}_{v_1})), \ldots, \enc(\sigma(\vec{x}_{v_q})))$
from being equal to
$(\alpha_1, \ldots, \alpha_q)$ for $\sigma \colon V' \to \{\TTT,\FFF\}$; that is,
we shall ensure
\begin{align}
\label{eq:CSP-SAT-nae}
    \bigvee_{i \in [q]} \Bigl( \enc(\sigma(\vec{x}_{v_i})) \neq \alpha_i \Bigr).
\end{align}
Such a CNF formula can be obtained by the following procedure:
\begin{itembox}[l]{\textbf{Construction of a CNF formula $\phi_e$}}
\begin{algorithmic}[1]
    \State initialize an empty CNF formula $\phi_e$.
    \For{\textbf{each} $q$-tuple of unacceptable values $(\alpha_1, \ldots, \alpha_q) \in \Sigma^e \setminus \pi_e$}
        \For{\textbf{each} $q$-tuple of vectors $\vec{s}_1, \ldots, \vec{s}_q \in \{\TTT, \FFF\}^\Sigma$ s.t.~$\enc(\vec{s}_i) = \alpha_i$ for all $i \in [q]$}
        \State add the following clause to $\phi_e$:
    \begin{align}
    \bigvee_{\alpha \in \Sigma} \bigvee_{i \in [q]} \llbracket x_{v_i, \alpha} \neq s_{i, \alpha} \rrbracket,
    \text{  where  }
    \llbracket x_{v_i, \alpha} \neq s_{i, \alpha} \rrbracket \triangleq
    \begin{cases}
        x_{v_i, \alpha} & \text{ if } s_{i, \alpha} = \FFF, \\
        \bar{x_{v_i, \alpha}} & \text{ if } s_{i, \alpha} = \TTT.
    \end{cases}
    \end{align}
        \EndFor
    \EndFor
    \State \textbf{return} $\phi_e$.
\end{algorithmic}
\end{itembox}
The resulting CNF formula $\phi_e$ thus looks like
\begin{align}
    \bigwedge_{(\alpha_1, \ldots, \alpha_q) \in \Sigma^e \setminus \pi_e}
    \bigwedge_{\substack{\vec{s}_1, \ldots, \vec{s}_q \in \{\TTT,\FFF\}^\Sigma: \\ \enc(\vec{s}_i) = \alpha_i \forall i \in [q]}}
    \bigvee_{\alpha \in \Sigma} \bigvee_{i \in [q]}
    \llbracket x_{v_i,\alpha} \neq s_{i,\alpha} \rrbracket.
\end{align}
Observe that a truth assignment $\sigma \colon V' \to \{\TTT, \FFF\}$ makes all clauses of $\phi_e$ true if and only if
an assignment $\psi \colon V \to \Sigma$, such that $\psi(v) \triangleq \enc(\sigma(\vec{x}_v))$ for all $v \in V$, satisfies $\pi_e$.
Define $\phi \triangleq \bigwedge_{e \in E} \phi_e$ to complete the construction of $\phi$.
For a satisfying assignment $\psi \colon V \to \Sigma$ for $G$,
let $\sigma_\psi \colon V' \to \{\TTT, \FFF\}$ be a truth assignment for $\phi$ such that
$\sigma_\psi(\vec{x}_v)$ for each vertex $v \in V$
is the lexicographically smallest string with
$\enc(\sigma_\psi(\vec{x}_v)) = \psi(v)$.
Then, $\sigma_\psi$ satisfies $\phi$.
Constructing $\sigma_\sss$ from $\psi_\sss$ and $\sigma_\ttt$ from $\psi_\ttt$
according to this procedure,
we obtain an instance $(\phi, \sigma_\sss, \sigma_\ttt)$ of \prb{Maxmin $k$-SAT Reconfiguration}, which completes the reduction.
Note that
the number of clauses $m$ in $\phi$ is
\begin{align}
m \leq \sum_{e \in E} |\Sigma^e \setminus \pi_e| \cdot 2^{|e|W} \leq W^q \cdot 2^{qW} |E|,
\end{align}
the size of every clause is exactly $k = qW$, and
each variable appears in at most $W^q \cdot 2^{qW} \Delta$ clauses of $\phi$ if the maximum degree of $G$ is $\Delta$.

We first prove the completeness; i.e.,
$\val_G(\psi_\sss \reco \psi_\ttt) = 1$ implies $\val_\phi(\sigma_\sss \reco \sigma_\ttt) = 1$,
using \cref{clm:string-reconf}.
It suffices to consider the case that $\psi_\sss$ and $\psi_\ttt$ differ in exactly one vertex, say, $v \in V$.
Since 
$\enc(\sigma_\sss(\vec{x}_v)) = \psi_\sss(v) \neq \psi_\ttt(v) = \enc(\sigma_\ttt(\vec{x}_v))$,
it holds that $\sigma_\sss(\vec{x}_v) \neq \sigma_\ttt(\vec{x}_v)$.
On the other hand, it holds that
$\sigma_\sss(\vec{x}_{w}) = \sigma_\ttt(\vec{x}_{w})$ for all $w \neq v$.
By \cref{clm:string-reconf},
we can find a sequence of strings in $\{\TTT,\FFF\}^\Sigma$,
$\langle \vec{s}^{(0)} = \sigma_\sss(\vec{x}_v), \ldots, \vec{s}^{(\ell)} = \sigma_\ttt(\vec{x}_v) \rangle$,
such that
two successive strings differ in exactly one entry, and
each intermediate $\enc(\vec{s}^{(i)})$ is equal to either $\enc(\sigma_\sss(\vec{x}_v))$ or $\enc(\sigma_\ttt(\vec{x}_v))$.
Using this string sequence, we construct another sequence of assignments,
$\sigmaS = \langle \sigma^{(i)} \rangle_{0 \leq i \leq \ell}$,
where each $\sigma^{(i)} \colon V' \to \{\TTT,\FFF\}$
is obtained from $\sigma_\sss$ by replacing assignments to $\vec{x}_v$ by $\vec{s}^{(i)}$;
namely,
$\sigma^{(i)}(\vec{x}_v) \triangleq \vec{s}^{(i)}$ whereas
$\sigma^{(i)}(\vec{x}_w) \triangleq \sigma_\sss(\vec{x}_w) = \sigma_\ttt(\vec{x}_w)$ for all $w \neq v$.
Observe easily that
$\sigmaS$ is a valid reconfiguration sequence for $(\phi, \sigma_\sss, \sigma_\ttt)$, and
each $\sigma^{(i)}$ satisfies $\phi$ because 
$\enc(\sigma^{(i)}(\vec{x}_w))$ is $\enc(\sigma_\sss(\vec{x}_w)) $ or $\enc(\sigma_\ttt(\vec{x}_w))$ for all $w \in V$;
i.e., $\val_\phi(\sigmaS) = 1$, as desired.

We then prove the soundness; i.e.,
$\val_G(\psi_\sss \reco \psi_\ttt) < 1-\epsilon$ implies
$\val_\phi(\sigma_\sss \reco \sigma_\ttt) < 1 - \frac{\epsilon}{W^q \cdot 2^{qW}}$.
Let
$\sigmaS = \langle \sigma^{(0)} = \sigma_\sss, \ldots, \sigma^{(\ell)} = \sigma_\ttt \rangle$
be any reconfiguration sequence for $(\phi, \sigma_\sss, \sigma_\ttt)$.
Construct then a sequence of assignments,
$\psiS = \langle \psi^{(i)} \rangle_{0 \leq i \leq \ell}$,
where each $\psi^{(i)} \colon V \to \Sigma$ is defined as
$\psi^{(i)}(v) \triangleq \enc(\sigma^{(i)}(\vec{x}_v))$ for all $v \in V$.
Since $\psiS$ is a valid reconfiguration sequence for $(G, \psi_\sss, \psi_\ttt)$,
we have $\val_G(\psiS) < 1-\epsilon$;
in particular,
there exists some $\psi^{(i)}$ such that $\val_G(\psi^{(i)}) < 1 - \epsilon$.
If $\psi^{(i)}$ violates hyperedge $e$ of $G$,
then $\sigma^{(i)}$ may not satisfy at least one clause of $\phi_e$.
Consequently, $\sigma^{(i)}$ must violate more than $\epsilon |E|$ clauses of $\phi$ in total, and we obtain
\begin{align}
\begin{aligned}
    \val_\phi(\sigmaS)
    \leq \val_\phi(\sigma^{(i)})
    < \frac{m - \epsilon |E|}{m} 
    \underbrace{\leq}_{\text{use } m \leq W^q\cdot 2^{qW}|E|} \frac{m - \frac{\epsilon}{W^q \cdot 2^{qW}} m}{m}
    = 1 - \frac{\epsilon}{W^q \cdot 2^{qW}},
\end{aligned}
\end{align}
which completes the proof.
\end{claimproof}

In the proof of \cref{lem:EkSAT-E3SAT},
we use an established Karp reduction from \prb{$k$-SAT} to \prb{$3$-SAT},
previously used by
{Gopalan, Kolaitis, Maneva, and Papadimitriou}~\cite{gopalan2009connectivity} in the context of reconfiguration.

\begin{claimproof}[Proof of \cref{lem:EkSAT-E3SAT}]
Our reduction is equivalent to that due to
{Gopalan, Kolaitis, Maneva, and Papadimitriou}~\cite[Lemma 3.5]{gopalan2009connectivity}.
Let $(\phi, \sigma_\sss, \sigma_\ttt)$ be an instance of \prb{Maxmin E$k$-SAT Reconfiguration}, 
where $\phi$ is an E$k$-CNF formula consisting of $m$ clauses $C_1, \ldots, C_m$ over $n$ variables $V$ and
$\psi_\sss$ and $\psi_\ttt$ satisfy $\phi$.
Starting from an empty CNF formula $\phi'$,
for each clause $C_j = (\ell_1 \vee \cdots \vee \ell_k)$ of $\phi$,
we introduce $k-3$ new variables $z^j_1, z^j_2, \ldots, z^j_{k-3}$ and
add the following $k-2$ clauses to $\phi'$:
\begin{align}
\label{eq:kSAT-3SAT-trick}
    (\ell_1 \vee \ell_2 \vee z^j_1) \wedge 
    (\ell_3 \vee \bar{z^j_1} \vee z^j_2) \wedge 
    \cdots \wedge
    (\ell_{k-2} \vee \bar{z^j_{k-4}} \vee z^j_{k-3}) \wedge 
    (\ell_{k-1} \vee \ell_k \vee \bar{z^j_{k-3}}).
\end{align}
Observe that a truth assignment makes all clauses of \cref{eq:kSAT-3SAT-trick} satisfied
if and only if it satisfies $C_j$.
Given a satisfying truth assignment $\sigma$ for $\phi$,
consider the following truth assignment $\sigma'$ for $\phi'$:
$\sigma'(x) \triangleq \sigma(x)$ for each variable $x \in V$, and
$\sigma'(z^j_i)$ for each clause $C_j = (\ell_1 \vee \ldots \vee \ell_k)$ is $ \TTT$ if $i \leq i^*-2$ and $\FFF$ if $i \geq i^*-1$, where
$\ell_{i^*}$ evaluates to $\TTT$ by $\sigma$.
Obviously, $\sigma'$ satisfies $\phi'$. 
Constructing $\sigma'_\sss$ from $\sigma_\sss$ and
$\sigma'_\ttt$ from $\sigma_\ttt$
according to this procedure,
we obtain an instance $(\phi', \sigma'_\sss, \sigma'_\ttt)$ of \prb{Maxmin E$3$-SAT Reconfiguration},
which completes the reduction.
Note that $\phi'$ has $(k-2)m$ clauses, and
each variable of $\phi'$ appears in at most $\max\{B, 2\}$ clauses of $\phi'$ if
each variable of $\phi$ appears in at most $B$ clauses of $\phi$.

Since the completeness follows from \cite[Lemma 3.5]{gopalan2009connectivity},
we prove the soundness; i.e.,
$\val_\phi(\sigma_\sss \reco \sigma_\ttt) < 1-\epsilon$ implies
$\val_{\phi'}(\sigma'_\sss \reco \sigma'_\ttt) < 1-\frac{\epsilon}{k-2}$.
Let
$\sigmaS' = \langle \sigma'^{(0)} = \sigma'_\sss, \ldots, \sigma'^{(\ell)} = \sigma'_\ttt \rangle$
be any reconfiguration sequence for $(\phi', \sigma'_\sss, \sigma'_\ttt)$.
Construct then a sequence of assignments,
$\sigmaS = \langle \sigma^{(i)} \rangle_{0 \leq i \leq \ell} $,
such that each $\sigma^{(i)}$ is simply the restriction of $\sigma'^{(i)}$ onto $V$.
Since $\sigmaS$ is a valid reconfiguration sequence for $(\phi, \sigma_\sss, \sigma_\ttt)$,
we have $\val_\phi(\sigmaS) < 1- \epsilon$; in particular,
there exists some $\sigma^{(i)}$ such that $\val_\phi(\sigma^{(i)}) < 1-\epsilon$.
If $\sigma^{(i)}$ violates clause $C_j$, then
$\sigma'^{(i)}$ may not satisfy at least one clause in \cref{eq:kSAT-3SAT-trick}.
Consequently, $\sigma'^{(i)}$ must violate more than $\epsilon m$ clauses of $\phi'$ in total, and we obtain
\begin{align}
    \val_{\phi'}(\sigmaS')
    \leq \val_{\phi'}(\sigma'^{(i)})
    < \frac{(k-2)m - \epsilon m}{(k-2)m}
    = 1- \frac{\epsilon}{k-2},
\end{align}
which completes the proof.
\end{claimproof}

Subsequently, we reduce \prb{Maxmin E$3$-SAT Reconfiguration} to \prb{Maxmin BCSP$_3$ Reconfiguration} in a gap-preserving manner,
whose proof uses the \emph{place encoding} due to {J{\"{a}}rvisalo and Niemel{\"{a}}}~\cite{jarvisalo2004compact}.
\begin{lemma}\label{lem:E3SAT-BCSP}
\begin{leftbar}
For every $\epsilon \in (0,1)$,
there exists a gap-preserving reduction from
\prb{Gap$_{1,1-\epsilon}$ E$3$-SAT Reconfiguration}
to
\prb{Gap$_{1,1-\frac{\epsilon}{3}}$ BCSP$_3$ Reconfiguration}.
Moreover, if the number of occurrences of each variable in the former problem is $B$,
then the maximum degree of the constraint graph in the latter problem is bounded by $\max\{B, 3\}$.
\end{leftbar}
\end{lemma}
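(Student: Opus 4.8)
\emph{Plan.} I will use the \emph{place encoding} of {J{\"{a}}rvisalo and Niemel{\"{a}}}~\cite{jarvisalo2004compact}: a variable vertex records a truth value, and a clause vertex records \emph{which} of its three literals certifies the clause. Given an instance $(\phi,\sigma_\sss,\sigma_\ttt)$ of \prb{Gap$_{1,1-\epsilon}$ E$3$-SAT Reconfiguration}, with clauses $C_1,\dots,C_m$ over a variable set $V$, first discard every clause containing a variable and its negation (it is vacuous); afterwards each clause has three literals on three distinct variables. Let $\Sigma \triangleq [3]$, reading $\FFF$ as $1$ and $\TTT$ as $2$. Build $G$ with a vertex $v_x$ for each $x\in V$ and a vertex $v_{c_j}$ for each $C_j=(\ell^j_1\vee\ell^j_2\vee\ell^j_3)$; for each $C_j$ and position $i\in[3]$, writing $y$ for the variable of $\ell^j_i$, add an edge $\{v_{c_j},v_y\}$ whose constraint admits all pairs $(a,b)$ with $a\neq i$ together with all $(i,b)$ such that assigning $y\mapsto b$ makes $\ell^j_i$ true (so a variable vertex holding value $3$ certifies nothing). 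Then $G$ has $3m$ edges, alphabet size $3$, and maximum degree $\max\{B,3\}$, since $v_{c_j}$ has degree $3$ and each $v_x$ has degree equal to the number of occurrences of $x$. Map any truth assignment $\sigma$ to $\psi_\sigma$ with $\psi_\sigma(v_x)\triangleq\sigma(x)$ and $\psi_\sigma(v_{c_j})\triangleq$ the least $i$ with $\ell^j_i$ true under $\sigma$ (default $1$ if none); the reduction outputs $(G,\psi_{\sigma_\sss},\psi_{\sigma_\ttt})$. When $\sigma$ satisfies $\phi$, $\psi_\sigma$ satisfies $G$, so in the \YES case (where $\sigma_\sss,\sigma_\ttt$ may be taken satisfying) both endpoints satisfy $G$.

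\emph{Completeness.} It suffices to show that if two satisfying assignments $\sigma,\sigma'$ of $\phi$ differ in one variable $x$, then $\psi_\sigma$ and $\psi_{\sigma'}$ are joined by a reconfiguration sequence of satisfying assignments of $G$; concatenating along a witness for $\val_\phi(\sigma_\sss\reco\sigma_\ttt)=1$ gives $\val_G(\psi_{\sigma_\sss}\reco\psi_{\sigma_\ttt})=1$. I would proceed in three phases. (i) For each clause $C_j$ whose current value $\psi_\sigma(v_{c_j})$ points to the (unique) literal on $x$ while $\sigma'$ falsifies that literal, re-point $v_{c_j}$ to another literal of $C_j$ true under \emph{both} $\sigma$ and $\sigma'$; such a literal exists and lies on a variable other than $x$, because $\sigma$ makes the $x$-literal true, hence $\sigma'$ makes it false, hence $\sigma'$ certifies $C_j$ through a different literal, which — being on a variable untouched by the flip — is also true under $\sigma$. (ii) Flip $v_x$ from $\sigma(x)$ to $\sigma'(x)$: no edge breaks, since any clause vertex still pointing to the $x$-literal is one whose $x$-literal $\sigma'(x)$ makes true. (iii) Re-point every clause vertex, one at a time, to its canonical value under $\sigma'$. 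Every single step preserves satisfaction of all constraints, and the endpoints are exactly $\psi_\sigma$ and $\psi_{\sigma'}$.

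\emph{Soundness.} Contrapositively, take any reconfiguration sequence $\psiS=\langle\psi^{(0)}=\psi_{\sigma_\sss},\dots,\psi^{(\ell)}=\psi_{\sigma_\ttt}\rangle$ for $G$ and project it to $\sigmaS=\langle\sigma^{(i)}\rangle$ by $\sigma^{(i)}(x)\triangleq\TTT$ if $\psi^{(i)}(v_x)=2$ and $\FFF$ otherwise, so a spurious value $3$ on a variable vertex is read as $\FFF$. Consecutive $\sigma^{(i)}$'s coincide or differ in one variable, so after deleting repetitions we obtain a valid reconfiguration sequence from $\sigma_\sss$ to $\sigma_\ttt$. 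If $\val_\phi(\sigma_\sss\reco\sigma_\ttt)<1-\epsilon$, some $\sigma^{(i)}$ falsifies more than $\epsilon m$ clauses; for each such $C_j$, with $i^\ast\triangleq\psi^{(i)}(v_{c_j})$ and $y$ its variable, the edge $\{v_{c_j},v_y\}$ must be violated, since otherwise $\psi^{(i)}(v_y)\in\{1,2\}$ would make $\ell^j_{i^\ast}$ true and $\sigma^{(i)}$ would satisfy $C_j$. These violated edges are pairwise distinct (one per clause vertex), so $\val_G(\psi^{(i)})<(3m-\epsilon m)/(3m)=1-\epsilon/3$, hence $\val_G(\psi_{\sigma_\sss}\reco\psi_{\sigma_\ttt})<1-\epsilon/3$. (If vacuous clauses were dropped, $G$ has $3m'$ edges with $m'\le m$ and the surviving falsified clauses still number $>\epsilon m$, which only improves the bound.)

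\emph{Main obstacle.} The delicate point is keeping \emph{perfect} completeness: because a clause vertex and a variable vertex must be updated on separate steps, a naive lift of a single-variable flip can pass through a violated edge. Phase (i) — re-pointing, before the flip, precisely the clause vertices that would otherwise be ``stranded,'' which is possible exactly because such a clause is certifiable by a literal stable under the flip — is what resolves this; it is also the only place where tautology-freeness is used. Everything else (edge counting, the degree bound, and the projection argument) is routine.
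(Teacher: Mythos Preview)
Your proof is correct and follows essentially the same approach as the paper: both use the place encoding of J\"arvisalo--Niemel\"a, build the variable/clause bipartite constraint graph with $3m$ edges, prove completeness via the three-phase ``re-point, flip, re-point'' argument, and prove soundness by projecting to the variable vertices. The only differences are cosmetic: you use a uniform alphabet $[3]$ (treating $3$ as a spurious value on variable vertices) whereas the paper uses a multi-sorted alphabet of sizes $2$ and $3$, and you explicitly discard tautological clauses whereas the paper implicitly assumes there are none.
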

\begin{proof}
We first describe a gap-preserving reduction from
\prb{Maxmin E$3$-SAT Reconfiguration}
to
\prb{Maxmin BCSP$_3$ Reconfiguration}.
Let $(\phi, \sigma_\sss, \sigma_\ttt)$ be an instance of
\prb{Maxmin E$3$-SAT Reconfiguration}, where
$\phi$ is an E$3$-CNF formula consisting of $m$ clauses $C_1, \ldots, C_m$ over $n$ variables $x_1, \ldots, x_n$, and
$\sigma_\sss$ and $\sigma_\ttt$ satisfy $\phi$.
Using the place encoding due to {J{\"{a}}rvisalo and Niemel{\"{a}}}~\cite{jarvisalo2004compact},
we construct a binary constraint graph $G = (V,E,\Sigma,\Pi)$ as follows.
The underlying graph of $G$ is a \emph{bipartite graph} with a bipartition
$(\{x_1, \ldots, x_n\}, \{C_1, \ldots, C_m\})$, and
there is an edge between variable $x_i$ and clause $C_j$ in $E$
if $x_i$ or $\bar{x_i}$ appears in $C_j$.
For the sake of notation,
we use $\Sigma_v$ to denote the alphabet assigned to vertex $v \in V$;
we  write
$\Sigma_{x_i} \triangleq \{\TTT, \FFF\}$ for each variable $x_i$, and
$\Sigma_{C_j} \triangleq \{\ell_1, \ell_2, \ell_3\}$ for each clause $C_j = (\ell_1 \vee \ell_2 \vee \ell_3)$.
For each edge $(x_i, C_j) \in E$ with $C_j = (\ell_1 \vee \ell_2 \vee \ell_3)$,
the constraint $\pi_{(x_i, C_j)} \subset \Sigma_{x_i} \times \Sigma_{C_j}$ 
is defined as follows:
\begin{align}
\pi_{(x_i, C_j)} \triangleq
\begin{cases}
    (\Sigma_{x_i} \times \Sigma_{C_j}) \setminus \{(\FFF, x_i)\}
    & \text{if } x_i \text{ appears in } C_j, \\
    (\Sigma_{x_i} \times \Sigma_{C_j}) \setminus \{(\TTT, \bar{x_i})\}
    & \text{if } \bar{x_i} \text{ appears in } C_j.
\end{cases}
\end{align}
Intuitively, 
for an assignment $\psi \colon V \to \Sigma$,
$\psi(x_i)$ claims the truth value assigned to $x_i$, and
$\psi(C_j)$ specifies which literal should evaluate to $\TTT$.
Given a satisfying truth assignment $\sigma$ for $\phi$,
consider the following assignment
$\psi_{\sigma}$ for $G$:
$\psi_{\sigma}(x_i) \triangleq \sigma(x_i)$ for each variable $x_i$, and
$\psi_{\sigma}(C_j) \triangleq \ell_i$ for each clause $C_j$, where
$\ell_i$ appears in $C_j$ and evaluates to $\TTT$ by $\sigma$.\footnote{
Such $\ell_i$ always exists as $\sigma$ satisfies $C_j$.
}
Obviously, $\psi_{\sigma}$ satisfies $G$.
Constructing $\psi_\sss$ from $\sigma_\sss$ and $\psi_\ttt$ from $\sigma_\ttt$
according to this procedure,
we obtain an instance $(G, \psi_\sss, \psi_\ttt)$ of \prb{Maxmin BCSP$_3$ Reconfiguration},
which completes the reduction.
Note that $|V| = n+m$, $|E| = 3m$, and the maximum degree of $G$ is $\max\{B, 3\}$.

We first prove the completeness; i.e.,
$\val_\phi(\sigma_\sss \reco \sigma_\ttt) = 1$ implies $\val_G(\psi_\sss \reco \psi_\ttt) = 1$.
It suffices to consider the case that $\sigma_\sss$ and $\sigma_\ttt$ differ in exactly one variable, say, $x_i$.
Without loss of generality, we can assume that $\sigma_\sss(x_i) = \TTT$ and $\sigma_\ttt(x_i) = \FFF$.
Since both $\sigma_\sss$ and $\sigma_\ttt$ satisfy $\phi$,
for each clause $C_j$ including $x_i$ or $\bar{x_i}$,
there must be a literal $\ell^j$ that is neither $x_i$ nor $\bar{x_i}$ 
and evaluates to $\TTT$ by both $\sigma_\sss$ and $\sigma_\ttt$.
Consider now the following transformation from
$\psi_\sss$ to $\psi_\ttt$:
\begin{itembox}[l]{\textbf{Reconfiguration from $\psi_\sss$ to $\psi_\ttt$}}
\begin{algorithmic}[1]
    \For{\textbf{each} clause $C_j$ including $x_i$ or $\bar{x_i}$}
        \State change the value of $C_j$ from $\psi_\sss(C_j)$ to the aforementioned literal $\ell^j$.
    \EndFor
    \State change the value of $x_i$ from $\TTT$ to $\FFF$.
    \For{\textbf{each} $C_j$ including $x_i$ or $\bar{x_i}$}
        \State change the value of $C_j$ from $\ell^j$ to $\psi_\ttt(C_j)$. 
    \EndFor
\end{algorithmic}
\end{itembox}
Observe easily that every intermediate assignment satisfies $G$; i.e., $\val_G(\psi_\sss \reco \psi_\ttt) = 1$, as desired.

We then prove the soundness; i.e., $\val_\phi(\sigma_\sss \reco \sigma_\ttt) < 1-\epsilon$ implies $\val_G(\psi_\sss \reco \psi_\ttt) < 1-\frac{\epsilon}{3}$.
Let $\psiS = \langle \psi^{(0)} = \psi_\sss, \ldots, \psi^{(\ell)} = \psi_\ttt \rangle$ be
any reconfiguration sequence for $(G, \psi_\sss, \psi_\ttt)$.
Construct then a sequence of truth assignments,
$\sigmaS = \langle \sigma^{(i)} \rangle_{0 \leq i \leq  \ell}$,
such that
each $\sigma^{(i)}$ is simply the restriction of $\psi^{(i)}$ onto the variables of $\phi$.
Since $\sigmaS$ is a valid reconfiguration sequence for $(\phi, \sigma_\sss, \sigma_\ttt)$,
we have $\val_\phi(\sigmaS) < 1-\epsilon$;
in particular, there exists some $\sigma^{(i)}$ such that $\val_\phi(\sigma^{(i)}) < 1-\epsilon$.
If $\sigma^{(i)}$ does not satisfy clause $C_j$, then
$\psi^{(i)}$ violates at least one edge incident to $C_j$ regardless of the assignment to clauses.
Consequently, $\psi^{(i)}$ must violate more than $\epsilon m$ edges of $G$ in total, and we obtain
\begin{align}
    \val_G(\psiS) \leq \val_G(\psi^{(i)}) < \frac{|E| - \epsilon m}{|E|} = 1 - \frac{\epsilon}{3},
\end{align}
which completes the proof.
\end{proof}

\subsection{Degree Reduction of {\normalfont \prb{Maxmin BCSP Reconfiguration}}}
\label{subsec:SAT-CSP:BCSP-degreduce}
We now present a gap-preserving reduction from
\prb{Maxmin BCSP Reconfiguration} to itself of \emph{bounded degree}.
This is the most technical step in this paper.

\begin{lemma}\label{lem:BCSP-deg_reduce}
\begin{leftbar}
For every $\epsilon \in (0,1)$,
there exists a gap-preserving reduction from
\prb{Gap$_{1,1-\epsilon}$ BCSP$_3$ Reconfiguration} to
\prb{Gap$_{1, 1-\bar{\epsilon}}$ BCSP$_6(\Delta)$ Reconfiguration},
where $\bar{\epsilon} \in (0,1)$ and $\Delta \in \bbN$ are some computable functions dependent only on the value of $\epsilon$.
In particular, the constraint graph in the latter problem has bounded degree.
\end{leftbar}
\end{lemma}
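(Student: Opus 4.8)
The plan is to carry out the expander-based degree reduction of Papadimitriou and Yannakakis, modified by the \emph{alphabet squaring} trick promised in the introduction so as to keep perfect completeness. Given an instance $(G,\psi_\sss,\psi_\ttt)$ of \prb{Gap$_{1,1-\epsilon}$ BCSP$_3$ Reconfiguration} with $G=(V,E,\Sigma,\Pi)$ and $\Sigma=\{\aaa,\bbb,\ccc\}$, I would fix a constant degree $d_0=\Theta(\epsilon^{-2})$ together with an explicit family of $d_0$-regular \emph{near-Ramanujan} graphs \cite{alon2021explicit,mohanty2021explicit} (second eigenvalue $\lambda\leq\bigO(\sqrt{d_0})$), and build $G'=(V',E',\Sigma',\Pi')$ as follows. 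Replace each $v\in V$ by a cloud $\cloud(v)$ having one vertex $v_e$ per edge $e\ni v$, padded with dummy vertices up to the least size $N_0=N_0(\epsilon)$ for which the family supplies a graph, and place that graph on $\cloud(v)$. Put $\Sigma'\triangleq\Sigma\cup\binom{\Sigma}{2}=\{\aaa,\bbb,\ccc,\aaa\bbb,\bbb\ccc,\ccc\aaa\}$, so $|\Sigma'|=6$, reading $\aaa\bbb$ as ``this vertex holds $\aaa$ \emph{and} $\bbb$ at once.'' On every intra-cloud edge impose the \emph{generalized-equality} constraint ``$\psi(u)$ and $\psi(u')$ are comparable under inclusion'' --- so the non-loop compatibility graph on $\Sigma'$ is exactly the $6$-cycle $\aaa-\aaa\bbb-\bbb-\bbb\ccc-\ccc-\ccc\aaa-\aaa$ --- and replace each edge $e=\{u,v\}\in E$ by the single edge $\{u_e,v_e\}$ with constraint ``$\psi(u_e)\times\psi(v_e)\subseteq\pi_e$.'' Set $\psi'_\sss(v_e)\triangleq\{\psi_\sss(v)\}$ for all $v$, all $e\ni v$, and all dummies, and analogously $\psi'_\ttt$. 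Each cloud vertex has $d_0$ intra-cloud neighbors and at most one inter-cloud neighbor, so $G'$ has degree at most $\Delta\triangleq d_0+1=\bigO(\epsilon^{-2})$, and $|E'|=(d_0+1)|E|+\bigO(|V|)$.

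The completeness (which is perfect) reduces to simulating a single flip of a vertex $v$ from $\alpha$ to $\beta$ along a satisfying reconfiguration sequence of $G$. Because the assignments just before and just after this flip both satisfy $G$, every inter-cloud constraint at $\cloud(v)$ accepts both $\alpha$ and $\beta$ against the unchanged, singleton-valued opposite cloud. Starting from $\cloud(v)$ uniformly $\{\alpha\}$, I would first recolor each $v_e$ to $\{\alpha,\beta\}$, one at a time --- intra-cloud edges stay satisfied because $\{\alpha\}\subseteq\{\alpha,\beta\}$, and the inter-cloud edge at $v_e$ stays satisfied because, against the opposite singleton $\gamma$, the constraint $\{\alpha,\beta\}\times\{\gamma\}\subseteq\pi_e$ amounts to $(\alpha,\gamma),(\beta,\gamma)\in\pi_e$, both of which hold --- and then recolor each $v_e$ to $\{\beta\}$, one at a time, symmetrically. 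Every intermediate assignment of $G'$ is satisfying, so $\val_G(\psi_\sss\reco\psi_\ttt)=1$ implies $\val_{G'}(\psi'_\sss\reco\psi'_\ttt)=1$.

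The soundness, argued contrapositively, is the crux: from a reconfiguration sequence $\psiS'$ for $G'$ with $\val_{G'}(\psiS')\geq1-\bar\epsilon$ I must build one for $G$ of value $\geq1-\epsilon$. Call $\cloud(v)$ \emph{concentrated} at step $i$ if at least a $(1-c\epsilon)$-fraction of its vertices hold values lying within a single edge of the compatibility $6$-cycle (for a small constant $c$), and let $\mathrm{dom}_i(v)$ be a singleton contained in all those values, chosen so that it drifts gradually as $i$ increases. The key claim is that a non-concentrated cloud contains disjoint $S,T\subseteq\cloud(v)$, each of size $\Theta(\epsilon)\cdot|\cloud(v)|$, such that every $S$--$T$ intra-cloud edge is unsatisfied; this is a finite case check on the six values, since having no mass $(1-c\epsilon)$-concentrated on an edge of the cycle forces two $\Omega(\epsilon)$-sized groups holding mutually incomparable values --- two distinct singletons, two distinct pairs, or a singleton and the complementary pair, i.e.\ exactly a non-edge of the cycle. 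Feeding such a $v$ into the expander mixing lemma for the near-Ramanujan cloud, it carries at least $\bigl(d_0\cdot\Theta(\epsilon^2)-\lambda\cdot\Theta(\epsilon)\bigr)\cdot|\cloud(v)|=\Omega(d_0)\cdot|\cloud(v)|$ unsatisfied intra-cloud edges once $d_0=\Theta(\epsilon^{-2})$ dominates $\lambda=\bigO(\sqrt{d_0})$. Comparing with $\val_{G'}({\psi'}^{(i)})\geq1-\bar\epsilon$ forces $\sum_{v\text{ non-concentrated}}|\cloud(v)|=\bigO(\bar\epsilon)\cdot|E|$ at every step. Now define $\psi^{(i)}(v)\triangleq\mathrm{dom}_i(v)$, keeping the previous value for the few non-concentrated $v$; since each step of $\psiS'$ changes only one cloud vertex, consecutive $\psi^{(i)}$ differ in at most one vertex, so $\psiS$ is a genuine reconfiguration sequence for $(G,\psi_\sss,\psi_\ttt)$. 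Finally, an inter-cloud edge $e=\{u,v\}$ unsatisfied by $\psi^{(i)}$ is either among the at most $\bar\epsilon|E'|$ edges unsatisfied by ${\psi'}^{(i)}$, or --- since then $\psi'(u_e)\times\psi'(v_e)\subseteq\pi_e$ --- has one of $u_e,v_e$ \emph{off-dominant} (holding a value missing $\mathrm{dom}_i(u)$, resp.\ $\mathrm{dom}_i(v)$); a concentrated cloud has at most $c\epsilon\cdot|\cloud(v)|$ off-dominant vertices, a non-concentrated one at most $|\cloud(v)|$, and each off-dominant cloud vertex bears a unique inter-cloud edge, so charging yields at most $\bigO(c\epsilon+\bar\epsilon)\cdot|E|\leq\epsilon|E|$ unsatisfied edges in $\psi^{(i)}$ for $c$ and $\bar\epsilon$ small enough. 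Hence $\val_G(\psi_\sss\reco\psi_\ttt)\geq1-\epsilon$, and $\bar\epsilon$ and $\Delta=d_0+1$ are the claimed computable functions of $\epsilon$.

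The main obstacle is this soundness argument --- proving the key claim and carrying out both charging steps --- all of which is made delicate precisely by alphabet squaring. As \cref{ex:BCSP} shows, a single valid assignment of $G'$ may place conflicting singleton values on non-adjacent cloud vertices at no cost, so there is no vertex-wise argument and no edge-expansion bound of the kind used by Papadimitriou and Yannakakis; only the \emph{averaged} estimate of the expander mixing lemma, run on an explicit near-Ramanujan family with $\lambda=\bigO(\sqrt{d_0})$, controls the accumulated damage, and taking $d_0=\Theta(\epsilon^{-2})$ is what renders the spectral error term harmless. Turning the projected sequence into a \emph{valid} single-flip reconfiguration of $G$ while keeping every assignment within a $1-\epsilon$ fraction of satisfied constraints also requires care in how the dominant values are chosen to drift.
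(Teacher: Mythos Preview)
Your construction, the completeness argument, and the overall soundness strategy (find disjoint $S,T$ of linear size in a ``bad'' cloud whose mutual intra-cloud edges are all violated, then apply the expander mixing lemma on a near-Ramanujan cloud with $d_0=\Theta(\epsilon^{-2})$) are exactly those of the paper. The differences are organizational rather than mathematical.

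First, the paper runs the soundness \emph{directly} rather than contrapositively: from an arbitrary $\psiS'$ it produces $\psiS$ for $G$ by a simple \emph{plurality vote} $\PLR(\psi')(v)\triangleq\argmax_{\alpha\in\Sigma}|\{v'\in\cloud(v):\alpha\in\psi'(v')\}|$ (ties broken by a fixed order), picks the step where $\val_G(\psi^{(i)})<1-\epsilon$, and shows $\val_{G'}(\psi'^{(i)})<1-\bar\epsilon$ at that single step. The plurality vote is always defined and already yields a valid single-flip sequence, so the ``gradual drift of $\mathrm{dom}_i$'' bookkeeping you set up (handling non-unique concentrated edges, freezing non-concentrated clouds, checking the projected sequence is single-flip) is unnecessary. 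The paper's analogue of your key claim is stated for \emph{every} $v$ in terms of the disagreement set $D_v=\{v'\in\cloud(v):\PLR(\psi')(v)\notin\psi'(v')\}$, giving $|S|,|T|\geq|D_v|/3$; it is only applied when $|D_v|\geq(\epsilon/8)|\cloud(v)|$, so the content matches yours.

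Second, one arithmetic slip to fix: the mixing-lemma bound you write as $\Omega(d_0)\cdot|\cloud(v)|$ is actually $\bigl(d_0\cdot\Theta(\epsilon^2)-\lambda\cdot\Theta(\epsilon)\bigr)|\cloud(v)|=\Theta(1)\cdot|\cloud(v)|$, since $d_0\epsilon^2=\Theta(1)$ and $\lambda\epsilon=\bigO(1)$. Consequently your bound $\sum_{v\text{ non-concentrated}}|\cloud(v)|=\bigO(\bar\epsilon)\cdot|E|$ should be $\bigO(\bar\epsilon d_0)\cdot|E|=\bigO(\bar\epsilon/\epsilon^2)\cdot|E|$, which still closes by taking $\bar\epsilon=\Theta(\epsilon^3)$ rather than $\Theta(\epsilon)$. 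The paper handles small clouds with a complete graph instead of padding; either device works.
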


\paragraph{Expander Graphs.}
Before proceeding to the details of our reduction, we
introduce concepts related to \emph{expander graphs}.
\begin{definition}\label{def:expander}
\begin{leftbar}
For every $n \in \bbN$, $d \in \bbN$, and $\lambda > 0$,
an \emph{$(n,d,\lambda)$-expander graph} is a $d$-regular graph $G$ on $n$ vertices such that
$\max\{ \lambda_2(G), |\lambda_n(G)| \} \leq \lambda < d$, where
$\lambda_i(G)$ is the \nth{$i$} largest (real-valued) eigenvalue of the adjacency matrix of $G$.
\end{leftbar}
\end{definition}
An $(n,d,\lambda)$-expander graph is called \emph{Ramanujan} if $\lambda \leq 2\sqrt{d-1}$.
There exists an \emph{explicit construction} (i.e., a polynomial-time algorithm)
for near-Ramanujan graphs.
\begin{theorem}[Explicit construction of near-Ramanujan graphs \cite{mohanty2021explicit,alon2021explicit}]
\label{thm:explicit-expander}
\begin{leftbar}
For every $d \geq 3$, $\epsilon>0$, and all sufficiently large $n \geq n_0(d,\epsilon)$, where $nd$ is even,
there is a deterministic $n^{\bigO(1)}$-time algorithm that outputs
an $(n,d,\lambda)$-expander graph with $\lambda \leq 2\sqrt{d-1} + \epsilon$.
\end{leftbar}
\end{theorem}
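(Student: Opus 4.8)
The plan is to \emph{derandomize} the probabilistic construction of near-Ramanujan graphs via iterated $2$-lifts, exploiting the fact that the spectral bound is certified by a polynomial of only logarithmic degree in the random choices, so that a polynomial-size sample space suffices. Fix the degree $d$. One starts from a tiny $d$-regular base multigraph $H_0$ on $N_0$ vertices --- for most target sizes $N_0 = \bigO_d(1)$ works (e.g.\ $d/2$ self-loops on a single vertex when $d$ is even), while the remaining $n$ are handled by taking $H_0$ to be a larger explicit (near-)Ramanujan graph, via the classical constructions of Lubotzky--Phillips--Sarnak, Margulis, and Morgenstern as in \cite{alon2021explicit} --- and builds the output graph on $n$ vertices by $\ell$ successive $2$-lifts with $N_0 2^{\ell} = n$, the $t$-th lift given by a signing $\xi^{(t)}\colon E(H_{t-1}) \to \{\pm 1\}$. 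Recall the standard fact that a signing produces a graph $H_t$ on $2|V(H_{t-1})|$ vertices whose adjacency spectrum is the \emph{disjoint union} of $\mathrm{spec}(A_{H_{t-1}})$ and $\mathrm{spec}(A_{\xi^{(t)}})$, the spectrum of the signed adjacency matrix; unrolling, $\mathrm{spec}(A_{H_\ell})$ equals the disjoint union of $\mathrm{spec}(A_{H_0})$ and the $\ell$ signed spectra $\mathrm{spec}(A_{\xi^{(t)}})$. Since $H_0$ contributes only the trivial eigenvalue $d$ together with values already below $2\sqrt{d-1}+\epsilon$, it is enough to ensure $\|A_{\xi^{(t)}}\| \le 2\sqrt{d-1}+\epsilon$ for every $t$; as these $\ell$ signed spectra simply union together, this costs only a union bound, with no accumulation of error.

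The technical heart, following \cite{mohanty2021explicit} and the spectral arguments of Friedman and Bordenave on which it rests, is to show that a uniformly random signing $\xi$ of a $d$-regular graph $H$ satisfies $\|A_\xi\| \le 2\sqrt{d-1}+\epsilon$ with failure probability $|V(H)|^{-\Omega(1)}$, once $|V(H)|$ exceeds a constant $n_0 = n_0(d,\epsilon)$ (the $\bigO_{d,\epsilon}(1)$ earliest lifts, on graphs smaller than $n_0$, together with $H_0$ itself, are instead fixed once and for all by brute-force search over the constantly many possible signings). Pass to the non-backtracking operator $B_\xi$ via the Ihara--Bass identity, reducing the bound on $A_\xi$ to one on the spectral radius of $B_\xi$, and estimate $\mathbb{E}_\xi\,\mathrm{tr}\big((B_\xi B_\xi^{*})^{k}\big)$ for $k = \Theta(\log n)$ by the trace method: expand it as a sum over closed non-backtracking walks of length $\bigO(k)$ and split into \emph{tangle-free} walks --- whose total contribution is $(2\sqrt{d-1})^{\bigO(k)} \cdot \mathrm{poly}(n)$, because signs along singly-traversed edges vanish in expectation and the surviving count matches walks on the $(d-1)$-ary tree --- and \emph{tangled} walks, which are so rare as to be negligible. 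Since $\lambda^{2k} \le \mathrm{tr}\big((B_\xi B_\xi^{*})^{k}\big)$ for the largest eigenvalue $\lambda$ of $B_\xi$, taking $k$ a large enough constant multiple of $\log n$ turns the polynomial prefactor into the additive slack $\epsilon$ and makes the failure probability inverse-polynomially small; a union bound over the $\ell = \bigO(\log n)$ lifts then shows a uniformly random tower is near-Ramanujan.

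Finally, one derandomizes lift by lift. The estimate on $\mathbb{E}_\xi\,\mathrm{tr}\big((B_\xi B_\xi^{*})^{k}\big)$ is extracted from a polynomial of degree $2k = \bigO(\log n)$ in the $\pm 1$ variables, and the expectation of such a polynomial is unchanged if $\xi$ is drawn from an $\bigO(\log n)$-wise independent (or $\delta$-biased, $\delta = n^{-\Theta(1)}$) distribution instead of the uniform one. Standard pseudorandom generators supply such a distribution from a seed of length $\bigO(\log^2 n)$, hence with a support of size $n^{\bigO(1)}$; so at step $t$ one enumerates this support, forms the lift $H_t$ for each sample, and computes its eigenvalues in $\mathrm{poly}(n)$ time, and by Markov's inequality applied to the low-degree expectation bound at least one sample yields $\|A_{\xi^{(t)}}\| \le 2\sqrt{d-1}+\epsilon$. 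Repeating for $t = 1, \dots, \ell$ outputs the desired graph in total time $n^{\bigO(1)}$. The main obstacle is the trace estimate of the middle paragraph: obtaining the \emph{sharp} leading constant $2\sqrt{d-1}$ --- rather than the $\bigO(\sqrt{d\log^3 d})$ that a cruder Bilu--Linial-style argument yields --- forces the full passage to tangle-free walks and the delicate enumeration of walk shapes at the heart of Bordenave's proof of Friedman's theorem; the base-graph and derandomization steps are, by comparison, routine.
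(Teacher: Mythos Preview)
The paper does not prove this theorem at all: it is quoted from \cite{mohanty2021explicit,alon2021explicit} as a black box, so there is no ``paper's own proof'' to compare against. Your write-up is a faithful high-level sketch of the approach in those cited works (iterated $2$-lifts, the Ihara--Bass reduction to the non-backtracking operator, the trace method with the tangle-free decomposition, and derandomization via limited independence), and as such is consistent with what the paper invokes; for the purposes of this paper nothing more than the citation is needed.
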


In this paper, we rely only on the special case of $\epsilon = 2\sqrt{d} - 2 \sqrt{d-1}$ so that
$\lambda \leq 2\sqrt{d}$;
thus, we let $n_0(d) \triangleq n_0(d, 2\sqrt{d} - 2\sqrt{d-1})$.
We can assume $n_0(\cdot)$ to be computable as $2\sqrt{d} - 2\sqrt{d-1} \geq \frac{1}{\sqrt{d}}$.
The crucial property of expander graphs that we use in the proof of \cref{lem:BCSP-deg_reduce}
is the following expander mixing lemma \cite{alon1988explicit}.
\begin{lemma}[Expander mixing lemma; e.g., {Alon and Chung}~\cite{alon1988explicit}]
\label{lem:expander-mixing}
\begin{leftbar}
Let $G$ be an $(n,d,\lambda)$-expander graph.
Then, for any two sets $S$ and $T$ of vertices,
it holds that
\begin{align}
    \left|e(S,T) -  \frac{d|S| \cdot |T|}{n} \right| \leq \lambda \sqrt{|S|\cdot|T|},
\end{align}
where $e(S,T)$ counts the number of edges between $S$ and $T$.
\end{leftbar}
\end{lemma}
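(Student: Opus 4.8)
The plan is to run the standard spectral argument. First I would let $\mat{A}$ denote the adjacency matrix of $G$; since $G$ is an undirected graph, $\mat{A}$ is real symmetric and hence admits an orthonormal eigenbasis $v_1, v_2, \ldots, v_n$ of $\bbR^n$ with $\mat{A} v_i = \lambda_i(G) v_i$. Because $G$ is $d$-regular, the normalized all-ones vector $v_1 \triangleq \frac{1}{\sqrt{n}} \vec{1}$ is an eigenvector with eigenvalue $\lambda_1(G) = d$, and by \cref{def:expander} we have $|\lambda_i(G)| \leq \lambda < d$ for every $i \geq 2$, so indeed $d$ is the top eigenvalue and $v_1$ spans its direction.

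Next I would write the quantity of interest as a bilinear form. Letting $\vec{1}_S, \vec{1}_T \in \{0,1\}^n$ be the indicator vectors of $S$ and $T$, one has $e(S,T) = \vec{1}_S^\top \mat{A} \vec{1}_T$ (with the convention that $e(S,T)$ counts ordered adjacent pairs, which agrees with counting edges once when $S \cap T = \emptyset$, the only case used later). Expand $\vec{1}_S = \sum_{i=1}^n \alpha_i v_i$ and $\vec{1}_T = \sum_{i=1}^n \beta_i v_i$ in the eigenbasis. Then $\alpha_1 = \langle \vec{1}_S, v_1 \rangle = |S|/\sqrt{n}$ and $\beta_1 = |T|/\sqrt{n}$, while orthonormality gives $\sum_{i=1}^n \alpha_i^2 = \|\vec{1}_S\|^2 = |S|$ and $\sum_{i=1}^n \beta_i^2 = |T|$. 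Substituting,
\begin{align}
    e(S,T) = \sum_{i=1}^n \lambda_i(G)\, \alpha_i \beta_i = d \cdot \frac{|S| \cdot |T|}{n} + \sum_{i=2}^n \lambda_i(G)\, \alpha_i \beta_i.
\end{align}

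Finally I would bound the error term using $|\lambda_i(G)| \leq \lambda$ for $i \geq 2$ followed by the Cauchy--Schwarz inequality:
\begin{align}
    \left| e(S,T) - \frac{d|S| \cdot |T|}{n} \right|
    = \left| \sum_{i=2}^n \lambda_i(G)\, \alpha_i \beta_i \right|
    \leq \lambda \sum_{i=2}^n |\alpha_i| \cdot |\beta_i|
    \leq \lambda \sqrt{\sum_{i=2}^n \alpha_i^2} \cdot \sqrt{\sum_{i=2}^n \beta_i^2}
    \leq \lambda \sqrt{|S|} \cdot \sqrt{|T|},
\end{align}
which is exactly the claimed inequality. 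I do not expect a genuine obstacle here: the only points that need care are that $\mat{A}$ is symmetric (guaranteeing the real orthonormal eigendecomposition) and that $v_1$ really sits in the top eigenspace (so that the $i=1$ term cleanly separates out as $d|S|\cdot|T|/n$), both of which are immediate from $G$ being an undirected $d$-regular graph together with \cref{def:expander}. I would optionally remark that specializing to $T = V \setminus S$ or $S = T$ recovers the usual edge-expansion and discrepancy corollaries, but this is not needed for the statement as phrased.
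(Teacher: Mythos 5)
Your spectral argument is correct: the eigenbasis decomposition, the extraction of the $d|S|\cdot|T|/n$ term from the all-ones eigenvector, and the Cauchy--Schwarz bound on the remaining terms constitute the standard proof of the expander mixing lemma, and your side remark about the $e(S,T)$ convention for overlapping $S$ and $T$ is the right caveat (immaterial here, since the paper only applies the lemma to disjoint sets in \cref{lem:BCSP-deg_reduce:eST}). Note that the paper does not prove this lemma at all --- it imports it from Alon and Chung \cite{alon1988explicit} --- so there is no in-paper proof to compare against; your write-up is simply a correct, self-contained rendering of the classical argument behind the cited result.
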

This lemma states that 
$e(S,T)$ of an expander graph $G$ is concentrated around
its expectation if $G$ were a \emph{random} $d$-regular graph.
The use of near-Ramanujan graphs enables us to
make an additive error (i.e., $\lambda \sqrt{|S|\cdot|T|}$) acceptably small.

\paragraph{Reduction.}
Our gap-preserving reduction is now presented,
which \emph{does} depend on $\epsilon$.
Redefine $\epsilon \leftarrow \lceil \frac{1}{\epsilon} \rceil^{-1}$
so that $\frac{1}{\epsilon}$ is a positive integer,
which does not increase the value of $\epsilon$;
i.e., $\val_G(\psi_\sss \reco \psi_\ttt) < 1-\epsilon $ implies
$\val_G(\psi_\sss \reco \psi_\ttt) < 1-\lceil \frac{1}{\epsilon} \rceil^{-1}$.
Let $ (G, \psi_\sss, \psi_\ttt) $ be an instance of
\prb{Gap$_{1,1-\epsilon}$ BCSP$_3$ Reconfiguration},
where is $G=(V,E, \Sigma, \Pi = (\pi_e)_{e \in E})$ is a binary constraint graph with $|\Sigma|=3$, and
$\psi_\sss$ and $\psi_\ttt$ satisfy $G$.
For the sake of notation, we denote $\Sigma \triangleq \{\aaa, \bbb, \ccc\}$.
We then create a new instance $(G', \psi'_\sss, \psi'_\ttt)$
of \prb{Maxmin BCSP$_6$ Reconfiguration}, which
turns out to meet the requirement of completeness and soundness.
The ingredients of constraint graph $G'=(V',E', \Sigma', \Pi'=(\pi'_{e'})_{e' \in E'})$ is defined as follows:
\begin{description}
\item[Vertex set:] For each vertex $v$ of $V$, let
\begin{align}
    \cloud(v) \triangleq \Bigl\{ (v,e) \Bigm| e \in E \text{ is incident to } v \Bigr\}.
\end{align}
Define $V' \triangleq \bigcup_{v \in V} \cloud(v)$.
\item[Edge set:] For each vertex $v$ of $V$,
let $X_v$ be a $(d_G(v), d_0, \lambda)$-expander graph on $\cloud(v)$ using \cref{thm:explicit-expander} if $d_G(v) \geq n_0(d_0)$, or
a complete graph on $\cloud(v)$ if $d_G(v) < n_0(d_0)$.
Here, $\lambda \leq 2\sqrt{d_0}$ and
$d_0 = \Theta(\epsilon^{-2})$,
whose precise value will be determined later.
Define
\begin{align}
    E' \triangleq \bigcup_{v \in V} E(X_v)
    \cup \Bigl\{ \left((v, e), (w, e)\right) \in V' \times V' \Bigm| e = (v,w) \in E \Bigr\}.
\end{align}
\item[Alphabet:]
Apply the \emph{alphabet squaring trick} to define
\begin{align}
    \Sigma' \triangleq
    \Bigl\{ \{\aaa\}, \{\bbb\}, \{\ccc\}, \{\aaa, \bbb\}, \{\bbb, \ccc\}, \{\ccc, \aaa\} \Bigr\}.
\end{align}
By abuse of notation,
we write each value of $\Sigma'$ as if it were an element
(e.g., $\aaa\bbb \in \Sigma'$, $\aaa \subset \aaa\bbb$, and $\bbb \not\subseteq \ccc\aaa$).
\item[Constraints:] The constraint $\pi'_{e'} \subseteq \Sigma'^{e'}$ for each edge $e' \in E'$ is defined as follows:
\begin{itemize}
\item If $e' \in E(X_v)$ for some $v \in V$ (i.e., $e'$ is an intra-cloud edge), define\footnote{
\cref{eq:pi-intra} can be expanded as
$
\pi'_{e'} = \{
(\aaa,\aaa),(\bbb,\bbb),(\ccc,\ccc),\allowbreak
(\aaa\bbb,\aaa),(\aaa\bbb,\bbb),\allowbreak
(\bbb\ccc,\bbb),(\bbb\ccc,\ccc),\allowbreak
(\ccc\aaa,\ccc),\allowbreak(\ccc\aaa,\aaa),\allowbreak
(\aaa,\aaa\bbb),(\bbb,\aaa\bbb),\allowbreak
(\bbb,\bbb\ccc),(\ccc,\bbb\ccc),\allowbreak
(\ccc,\ccc\aaa),(\aaa,\ccc\aaa),\allowbreak
(\aaa\bbb,\aaa\bbb),(\bbb\ccc,\bbb\ccc),(\ccc\aaa,\ccc\aaa)
\}
$.
}
\begin{align}\label{eq:pi-intra}
    \pi'_{e'}
    \triangleq \Bigl\{ (\alpha, \beta) \in \Sigma' \times \Sigma' \Bigm|
    \alpha \subseteq \beta \text{ or } \beta \subseteq \alpha
    \Bigr\}.
\end{align}
\item If $e' = ((v,e), (w,e))$ such that $e = (v,w) \in E$ (i.e., $e'$ is an inter-cloud edge), define
\begin{align}\label{eq:pi-inter}
    \pi'_{e'}
    \triangleq \Bigl\{ (\alpha, \beta) \in \Sigma' \times \Sigma' \Bigm|
    \alpha \times \beta \subseteq \pi_{e}
    \Bigr\}.
\end{align}
\end{itemize}
\end{description}
Although the underlying graph $(V', E')$ is the same as that in \cite{dinur2007pcp} (except for the use of \cref{thm:explicit-expander}),
the definitions of $\Sigma'$ and $\Pi'$ are somewhat different
owing to the alphabet squaring trick.
Use of this trick is essential to achieve the perfect completeness.
Intuitively,
having vertex $v' \in V'$ be $\psi(v') = \aaa\bbb$ represents that 
$v'$ has values $\aaa$ \emph{and} $\bbb$ simultaneously; e.g.,
if $\psi'(v') = \aaa\bbb$ and $\psi'(w') = \ccc$ for some
$v' \in \cloud(v)$ and $w' \in \cloud(w)$ with $v \neq w$, then
$\psi'$ satisfies $\pi'_{(v', w')}$ if both $(\aaa,\bbb)$ and $(\aaa,\ccc)$ are found in $\pi_{(v,w)}$
because of \cref{eq:pi-inter}.
Construct two assignments
$\psi'_\sss \colon V' \to \Sigma'$ from $\psi_\sss$ and
$\psi'_\ttt \colon V' \to \Sigma'$ from $\psi_\ttt$ such that
$\psi'_\sss(v,e) \triangleq \{\psi_\sss(v)\}$ and
$\psi'_\ttt(v,e) \triangleq \{\psi_\ttt(v)\}$ for all $(v,e) \in V'$.
Observe that both $\psi'_\sss$ and $\psi'_\ttt$ satisfy $G'$, thereby completing the reduction.
Note that
$|V'| = 2|E|$,
$|E'| \leq n_0(d_0)\cdot |E|$, 
$|\Sigma'| = 6$, and
the maximum degree of $G'$ is $\Delta \leq n_0(d_0)$, which is constant for fixed $\epsilon$.

Using an example illustrated in \cref{fig:example},
we demonstrate that our reduction may map a \NO instance of \prb{BCSP Reconfiguration} to a \YES instance; namely,
$\val_G(\psi_\sss \reco \psi_\ttt) < 1$ does \emph{not} imply
$\val_{G'}(\psi'_\sss \reco \psi'_\ttt) < 1$.
In particular,
it is neither
a Karp reduction of \prb{BCSP Reconfiguration} nor
a PTAS reduction of \prb{Maxmin BCSP Reconfiguration}.
This fact renders the proof of the soundness nontrivial.

\begin{example}\label{ex:BCSP}
\begin{leftbar}
We construct a constraint graph $G = (V,E,\Sigma,\Pi=(\pi_e)_{e \in E})$ such that
$V \triangleq \{ w,v,x,y, \allowbreak z_1, \ldots, z_n \}$ for some large integer $n$,
$E \triangleq \{ (w,v), (v,x), (x,y), (v,z_1), \ldots, (v,z_n) \}$,
$\Sigma \triangleq \{\aaa, \bbb, \ccc\}$, and
each $\pi_e$ is defined as follows:
\begin{align}
\begin{aligned}
    \pi_{(w,v)} & \triangleq \{(\aaa,\aaa)\}, \\
    \pi_{(v,x)} & \triangleq \{(\aaa,\aaa), (\bbb,\aaa), (\bbb,\bbb), (\bbb,\ccc), (\aaa,\ccc)\}, \\
    \pi_{(x,y)} & \triangleq \{(\aaa,\aaa), (\bbb,\aaa), (\bbb,\bbb), (\ccc,\bbb), (\ccc,\ccc)\}, \\
    \pi_{(v,z_1)} & = \cdots = \pi_{(v,z_n)} \triangleq \Sigma \times \Sigma.
\end{aligned}
\end{align}
Define $\psi_\sss, \psi_\ttt \colon V \to \Sigma$
as $\psi_\sss(u) \triangleq \aaa$ for all $u \in V$,
$\psi_\ttt(x) = \psi_\ttt(y) \triangleq \ccc$, and
$\psi_\ttt(u) \triangleq \aaa$ for all other $u$.
Then, it is \emph{impossible} to transform $\psi_\sss$ into $\psi_\ttt$ without any constraint violation:
As the values of $w$ and $v$ cannot change from $\aaa$,
we can only change the value of $x$ to $\ccc$, violating $(x,y)$.
In particular, $\val_G(\psi_\sss \reco \psi_\ttt) < 1$.

Consider applying our reduction to $v$ \emph{only} for the sake of simplicity.
Create $\cloud(v) \triangleq \{ v_w, v_x, v_{z_1}, \ldots, v_{z_n} \}$
with the shorthand notation $v_u \triangleq (v,(v,u))$,
and let $X_v$ be an expander graph on $\cloud(v)$.
We then construct a new constraint graph $G' = (V',E',\Sigma',\Pi'=(\pi'_e)_{e \in E'})$, where
$V' \triangleq \{w,x,y,z_1, \ldots, z_n\} \cup \cloud(v)$,
$E' \triangleq E(X_v) \cup \{(w,v_w), (v_x,x), (x,y), (v_{z_1},z_1), \ldots, (v_{z_n},z_n)\}$,
$\Sigma' \triangleq \{\aaa,\bbb,\ccc,\aaa\bbb,\bbb\ccc,\ccc\aaa\}$, and
each constraint $\pi'_e$ is defined according to \cref{eq:pi-intra,eq:pi-inter}.
Construct $\psi'_\sss, \psi'_\ttt \colon V' \to \Sigma'$ from $\psi_\sss, \psi_\ttt$
according to the procedure described above.
Suppose now ``by chance'' $(v_w, v_x) \not\in E(X_v)$.
The crucial observation is that 
we can assign $\aaa$ to $v_w$, $\bbb$ to $v_x$, and $\aaa\bbb$ to $v_{z_1}, \ldots, v_{z_n}$
to do some ``cheating.''
Consequently, $\psi'_\sss$ can be transformed into $\psi'_\ttt$ without sacrificing any constraint:
Assign $\aaa\bbb$ to $v_{z_1}, \ldots, v_{z_n}$ in arbitrary order;
assign $\bbb$ to $v_x$, $x$, and $y$ in this order;
assign $\ccc$ to $x$ and $y$ in this order;
assign $\aaa$ to $v_x$;
assign $\aaa$ to $v_{z_1}, \ldots, v_{z_n}$ in arbitrary order.
In particular, $\val_{G'}(\psi'_\sss \reco \psi'_\ttt) = 1$.
\end{leftbar}
\end{example}

\paragraph{Correctness.}
The proof of the completeness is immediate from the definition of $\Sigma'$ and $\Pi'$.
\begin{lemma}\label{lem:BCSP-deg_reduce:complete}
\begin{leftbar}
If $\val_G(\psi_\sss \reco \psi_\ttt) = 1$, then $\val_{G'}(\psi'_\sss \reco \psi'_\ttt) = 1$.
\end{leftbar}
\end{lemma}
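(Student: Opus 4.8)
The plan is to lift a reconfiguration sequence for $(G,\psi_\sss,\psi_\ttt)$ to one for $(G',\psi'_\sss,\psi'_\ttt)$, simulating each single-vertex change in $G$ by a short burst of single-vertex changes inside the corresponding cloud; the alphabet squaring trick is precisely what makes these bursts legal. First I would fix a sequence $\psiS=\langle\psi^{(0)}=\psi_\sss,\dots,\psi^{(L)}=\psi_\ttt\rangle$ of satisfying assignments of $G$ whose consecutive terms differ in exactly one vertex, and for any assignment $\psi\colon V\to\Sigma$ write $\psi'\colon V'\to\Sigma'$ for its lift $\psi'(v,e)\triangleq\{\psi(v)\}$; then $(\psi_\sss)'=\psi'_\sss$, $(\psi_\ttt)'=\psi'_\ttt$, and $\psi'$ satisfies $G'$ whenever $\psi$ satisfies $G$ (an intra-cloud edge sees a repeated value, and an inter-cloud edge of $e=(u,w)$ sees $\{(\psi(u),\psi(w))\}\subseteq\pi_e$). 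It therefore suffices to exhibit, for any two satisfying assignments $\psi,\tilde\psi$ of $G$ differing only at a vertex $v$ --- say $\psi(v)=\alpha\neq\beta=\tilde\psi(v)$ --- a reconfiguration sequence of satisfying assignments of $G'$ from $\psi'$ to $\tilde\psi'$; concatenating these over the $L$ consecutive pairs of $\psiS$ completes the lemma (when $\cloud(v)=\emptyset$ the two lifts coincide and there is nothing to do).

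For such a step I would use the following burst: enumerate $\cloud(v)$ and change each vertex, one at a time, from $\{\alpha\}$ to $\{\alpha,\beta\}$; once every vertex of $\cloud(v)$ holds $\{\alpha,\beta\}$, change each of them, one at a time, from $\{\alpha,\beta\}$ to $\{\beta\}$; all vertices outside $\cloud(v)$ stay frozen at $\{\psi(w)\}$. This starts at $\psi'$, ends at $\tilde\psi'$, and changes one vertex per step, so the only thing to verify is that every intermediate assignment satisfies every constraint of $G'$, which I would check by edge type. An intra-cloud edge of $X_w$ with $w\neq v$ keeps both endpoints equal to $\{\psi(w)\}$, hence is satisfied by \cref{eq:pi-intra}. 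An intra-cloud edge of $X_v$ has, at every moment, both endpoints inside one of the two $\subseteq$-chains $\{\{\alpha\},\{\alpha,\beta\}\}$ or $\{\{\alpha,\beta\},\{\beta\}\}$ --- the point of the \emph{all up, then all down} ordering is that the only incomparable pair, $\{\alpha\}$ versus $\{\beta\}$, never co-occurs --- so it is again satisfied by \cref{eq:pi-intra}. Finally, an inter-cloud edge $((u,e),(w,e))$ with $e=(u,w)\in E$: if $u,w\neq v$ both endpoints are frozen and $\{(\psi(u),\psi(w))\}\subseteq\pi_e$; if (without loss of generality) $u=v$, the frozen endpoint is $\{\psi(w)\}=\{\tilde\psi(w)\}$ and the moving endpoint's value $S\in\{\{\alpha\},\{\alpha,\beta\},\{\beta\}\}$ satisfies $S\times\{\psi(w)\}\subseteq\pi_e$ because $(\alpha,\psi(w))\in\pi_e$ (as $\psi$ satisfies $e$) and $(\beta,\psi(w))\in\pi_e$ (as $\tilde\psi$ satisfies $e$), so \cref{eq:pi-inter} applies. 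Hence every intermediate assignment satisfies $G'$, and along the concatenated sequence $\val_{G'}(\psi'_\sss\reco\psi'_\ttt)=1$.

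I do not expect a genuine obstacle here, since the construction is essentially dictated by the design of $\Sigma'$ and $\Pi'$; the two points that need care are the ordering within each burst (every cloud entry must reach $\{\alpha,\beta\}$ before any descends to $\{\beta\}$, so that no $X_v$-edge ever sees the incomparable pair), and keeping each neighbouring cloud frozen throughout the burst, which is what lets the inter-cloud check invoke the fact that \emph{both} endpoints $\psi,\tilde\psi$ of the original step satisfy $G$ --- exactly where perfect completeness of the $G$-sequence is used.
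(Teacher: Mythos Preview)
Your proposal is correct and follows essentially the same approach as the paper: reduce to the case where $\psi_\sss$ and $\psi_\ttt$ differ in a single vertex $v$, then perform the two-phase burst inside $\cloud(v)$ (all $\{\alpha\}\to\{\alpha,\beta\}$, then all $\{\alpha,\beta\}\to\{\beta\}$) and verify intra- and inter-cloud constraints exactly as you do. Your write-up is in fact slightly more careful than the paper's in spelling out the concatenation and the case split over edge types, but the argument is identical.
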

\begin{proof}
It suffices to consider the case that $\psi_\sss$ and $\psi_\ttt$ differ in exactly one vertex, say, $v \in V$.
Let $\alpha \triangleq \psi_\sss(v)$ and $\beta \triangleq \psi_\ttt(v)$.
Note that
$\psi'_\sss(v') = \{\alpha\} \neq \{\beta\} = \psi'_\ttt(v')$ for all $v' \in \cloud(v)$.
On the other hand,
$\psi'_\sss(w') = \{\psi_\sss(w)\} = \{\psi_\ttt(w)\} = \psi'_\ttt(w')$ for all $w' \in \cloud(w)$ with $w \neq v$.
Consider the following transformation $\psiS'$ from $\psi'_\sss$ to $\psi'_\ttt$:
\begin{itembox}[l]{\textbf{Reconfiguration from $\psi'_\sss$ to $\psi'_\ttt$}}
\begin{algorithmic}[1]
        \State change the value of $v'$ in $\cloud(v)$ from $\{\alpha\}$ to $\{\alpha, \beta\}$ one by one.
        \State change the value of $v'$ in $\cloud(v)$ from $\{\alpha, \beta\}$ to $\{\beta\}$ one by one.
\end{algorithmic}
\end{itembox}
In any intermediate step of this transformation,
the set of values that vertices in $\cloud(v)$ have taken is
either $\{\{\alpha\}, \{\alpha, \beta\}\}$, $\{\{\alpha, \beta\}\}$, or $\{ \{\alpha, \beta\}, \{\beta\} \}$;
thus, every assignment of $\psiS'$ satisfies all intra-cloud edges in $E(X_v)$ by \cref{eq:pi-intra}.
Plus, every assignment of $\psiS'$ satisfies all inter-cloud edges $(v',w') \in E$ with $v' \in \cloud(v)$ and $w' \in \cloud(w)$
because
\begin{align}
\begin{aligned}
    (\{\alpha\}, \{\psi_\sss(w)\}) = (\{\alpha\}, \{\psi_\ttt(w)\}) & \in \pi'_{(v', w')}, \\
    (\{\beta\}, \{\psi_\sss(w)\}) = (\{\beta\}, \{\psi_\ttt(w)\}) & \in \pi'_{(v', w')}, \\
    (\{\alpha, \beta\}, \{\psi_\sss(w)\}) = (\{\alpha, \beta\}, \{\psi_\ttt(w)\}) & \in \pi'_{(v', w')},
\end{aligned}
\end{align}
where the last membership relation holds owing to \cref{eq:pi-inter}.
Accordingly, every assignment of $\psiS'$ satisfies $G'$;
i.e., $\val_{G'}(\psiS') = 1$, as desired.
\end{proof}

In the remainder of this subsection, we focus on proving the soundness.
\begin{lemma}\label{lem:BCSP-deg_reduce:soundness}
\begin{leftbar}
If $\val_G(\psi_\sss \reco \psi_\ttt) < 1 - \epsilon$,
then 
$\val_{G'}(\psi'_\sss \reco \psi'_\ttt) < 1 - \bar{\epsilon}$,
where $\bar{\epsilon} = \bar{\epsilon}(\epsilon)$ is some computable function such that
$\bar{\epsilon} \in (0,1)$ if $\epsilon \in (0,1)$.
\end{leftbar}
\end{lemma}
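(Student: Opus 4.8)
The plan is to prove the contrapositive: assuming $\val_{G'}(\psi'_\sss \reco \psi'_\ttt) \geq 1-\bar\epsilon$ for a suitable $\bar\epsilon$, I will build a reconfiguration sequence for $(G,\psi_\sss,\psi_\ttt)$ witnessing $\val_G(\psi_\sss \reco \psi_\ttt) \geq 1-\epsilon$. Fix a reconfiguration sequence $\psiS' = \langle \psi'^{(0)}, \ldots, \psi'^{(L)} \rangle$ with $\val_{G'}(\psiS') \geq 1-\bar\epsilon$ and decode each $\psi'^{(i)}$ to $\psi^{(i)} \colon V \to \Sigma$ by a plurality rule: let $\psi^{(i)}(v)$ be a value $\gamma \in \Sigma$ maximizing $|\{v' \in \cloud(v) : \gamma \in \psi'^{(i)}(v')\}|$, with ties broken by a fixed order on $\Sigma$. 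Then $\langle \psi^{(i)} \rangle_i$, after deleting consecutive repetitions, is a valid reconfiguration sequence from $\psi_\sss$ to $\psi_\ttt$: since $\psi'_\sss(v') = \{\psi_\sss(v)\}$ and $\psi'_\ttt(v') = \{\psi_\ttt(v)\}$ for every $v' \in \cloud(v)$, the endpoints decode correctly; and one adjacency step of $\psiS'$ changes a single cloud vertex, which lies in exactly one cloud, so at most one vertex of $V$ changes its plurality. It therefore suffices to prove, for an arbitrary $\psi' \colon V' \to \Sigma'$ with plurality-decoding $\psi \colon V \to \Sigma$: \emph{if $\psi$ violates more than $\epsilon |E|$ constraints of $G$, then $\psi'$ violates more than $\bar\epsilon |E'|$ constraints of $G'$}. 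Applying this to every $\psi'^{(i)}$ gives $\val_G(\psi^{(i)}) \geq 1-\epsilon$ for all $i$, hence $\val_G(\psi_\sss \reco \psi_\ttt) \geq 1-\epsilon$.

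For the single-assignment implication, let $F \subseteq E$ with $|F| > \epsilon |E|$ be the $G$-constraints violated by $\psi$. For $e = (v,w) \in F$ we have $(\psi(v),\psi(w)) \notin \pi_e$, so by \cref{eq:pi-inter} either $\psi'$ violates the inter-cloud edge $((v,e),(w,e))$, or one endpoint is \emph{defective}, meaning $\psi(v) \notin \psi'(v,e)$ or $\psi(w) \notin \psi'(w,e)$. If at least half of $F$ is of the first type, then $\psi'$ violates more than $\tfrac{\epsilon}{2}|E| \geq \tfrac{\epsilon}{2 n_0(d_0)}|E'|$ constraints and we are done (using $|E'| \leq n_0(d_0)|E|$). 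Otherwise more than $\tfrac{\epsilon}{2}|E|$ edges of $F$ each single out a distinct defective cloud vertex, so setting $\mathrm{def}(v) \triangleq |\{v' \in \cloud(v) : \psi(v) \notin \psi'(v')\}|$ we get $\sum_{v} \mathrm{def}(v) > \tfrac{\epsilon}{2}|E|$. Since $\sum_v |\cloud(v)| = \sum_v d_G(v) = 2|E|$, a Markov-type averaging yields a set $H$ of clouds with $\sum_{v \in H} |\cloud(v)| > \tfrac{\epsilon}{4}|E|$ such that every $v \in H$ satisfies $\mathrm{def}(v) \geq \tfrac{\epsilon}{8}|\cloud(v)|$; and by the plurality rule every such $v$ also has $|\{v' \in \cloud(v) : \psi(v) \in \psi'(v')\}| \geq \tfrac{1}{3}|\cloud(v)|$.

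The technical heart is to turn each heavy cloud $v \in H$ into $\Omega(|\cloud(v)|)$ violated intra-cloud constraints (implied constant depending only on $\epsilon$), and this is exactly where alphabet squaring bites: \cref{ex:BCSP} shows the naive split of $\cloud(v)$ into ``contains $\psi(v)$'' versus ``does not'' need not be an incomparable pair in the six-element poset $\Sigma'$. The key combinatorial statement I would isolate as \cref{clm:BCSP-deg_reduce:ST} is that for each $v \in H$ there exist disjoint $S_v, T_v \subseteq \cloud(v)$ with $|S_v|, |T_v| = \Theta(\epsilon |\cloud(v)|)$ such that every value occurring on $S_v$ is $\subseteq$-incomparable to every value occurring on $T_v$; by \cref{eq:pi-intra}, all edges of $X_v$ between $S_v$ and $T_v$ are then violated by $\psi'$. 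I expect proving this claim to be the main obstacle; the intended recipe is to take $S_v$ to be the most popular singleton or pair among the ($\geq\tfrac13|\cloud(v)|$-many) cloud vertices containing $\psi(v)$, and $T_v$ to be the most popular value among the ($\geq\tfrac{\epsilon}{8}|\cloud(v)|$-many) defective ones after restricting to values incomparable to all of $S_v$'s values, checking by a short case analysis over $\Sigma'$ that enough of each side survives. Granting the claim, apply the expander mixing lemma (\cref{lem:expander-mixing}) to the $(|\cloud(v)|, d_0, \lambda)$-expander $X_v$:
\[
  e_{X_v}(S_v, T_v) \;\geq\; \frac{d_0\,|S_v|\,|T_v|}{|\cloud(v)|} - \lambda\sqrt{|S_v|\,|T_v|} \;\geq\; \bigl( \Theta(\epsilon^2)\,d_0 - \Theta(\epsilon)\,\lambda \bigr)\,|\cloud(v)|,
\]
where by \cref{thm:explicit-expander} we may take $\lambda \leq 2\sqrt{d_0}$ (when $d_G(v) < n_0(d_0)$ and $X_v$ is complete, $e_{X_v}(S_v,T_v) = |S_v|\,|T_v|$ is only larger). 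Choosing $d_0 = \Theta(\epsilon^{-2})$ large enough makes $\Theta(\epsilon^2) d_0 - \Theta(\epsilon)\cdot 2\sqrt{d_0}$ a positive constant $c(\epsilon)$, so each $v \in H$ contributes at least $c(\epsilon)\,|\cloud(v)|$ violated intra-cloud constraints.

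Summing over $H$, $\psi'$ violates at least $c(\epsilon) \sum_{v \in H} |\cloud(v)| > \tfrac{\epsilon}{4} c(\epsilon)\,|E|$ constraints of $G'$; dividing by $|E'| \leq n_0(d_0)|E|$ shows this exceeds $\bar\epsilon|E'|$, where $\bar\epsilon = \bar\epsilon(\epsilon) \in (0,1)$ may be taken as the minimum of $\tfrac{\epsilon}{2 n_0(d_0)}$ (inter-cloud case) and $\tfrac{\epsilon\,c(\epsilon)}{4 n_0(d_0)}$. Since $d_0$ and $n_0(d_0)$ depend only on $\epsilon$ and are computable (recall $n_0(\cdot)$ is computable, as noted after \cref{thm:explicit-expander}), so is $\bar\epsilon$, which completes the argument. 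The remaining care is purely bookkeeping: pinning down the constants in the averaging step and in \cref{clm:BCSP-deg_reduce:ST}, and checking that $d_0$ can be fixed large enough that $c(\epsilon) > 0$ while $n_0(d_0)$ stays finite — all of which go through since $d_0$ is a fixed function of $\epsilon$.
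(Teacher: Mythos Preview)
Your approach mirrors the paper's proof closely: plurality decoding of the $G'$-sequence, the inter-cloud/defective dichotomy for violated edges of $G$, Markov averaging to isolate clouds with defect fraction $\geq \epsilon/8$, the incomparable-pair claim \cref{clm:BCSP-deg_reduce:ST}, and the expander mixing lemma with $d_0 = \Theta(\epsilon^{-2})$ and $\lambda \leq 2\sqrt{d_0}$. The bookkeeping and the small-cloud case are also handled as in the paper.

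The one substantive issue is your stated recipe for \cref{clm:BCSP-deg_reduce:ST}. Choosing $S_v$ first as the most popular value containing $\psi(v)$, and only then restricting the defective side to values incomparable to it, can fail outright. For instance, take $\psi(v)=\aaa$, let every defective vertex have value $\bbb$, and suppose $\aaa\bbb$ is the most popular $\aaa$-containing value; then $\bbb \subset \aaa\bbb$ is comparable, so after restriction $T_v = \emptyset$. The ``short case analysis'' would not merely verify your recipe but overturn it. The paper's fix is to reverse the order: first pick the most popular defective value among $\{\bbb,\ccc,\bbb\ccc\}$ (giving $|S_v| \geq |D_v|/3$), and then use the plurality inequality, e.g.\ $n_\aaa + n_{\aaa\bbb} + n_{\ccc\aaa} \geq n_\bbb + n_{\aaa\bbb} + n_{\bbb\ccc}$, to cancel the shared ``comparable'' term and conclude that the $\aaa$-containing values incomparable to the chosen defective value already number at least $|D_v|/3$. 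This yields $|S_v|,|T_v| \geq |D_v|/3$ in each of three cases. With that correction in place, the rest of your argument goes through exactly as written.
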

For an assignment $\psi' \colon V' \to \Sigma'$ for $G'$,
let $\PLR(\psi') \colon V \to \Sigma$ denote an assignment for $G$
such that $\PLR(\psi')(v)$ for $v \in V$ is
determined based on the \emph{plurality vote} of $\psi'(v')$ over $v' \in \cloud(v)$; namely,
\begin{align}
    \PLR(\psi')(v) \triangleq \argmax_{\alpha \in \Sigma} \left|\Bigl\{ v' \in \cloud(v) \Bigm| \alpha \in \psi'(v') \Bigr\}\right|,
\end{align}
where ties are arbitrarily broken according to any prefixed ordering over $\Sigma$ (e.g., $\aaa \prec \bbb \prec \ccc$).
Suppose we are given
a reconfiguration sequence $\psiS' = \langle \psi'^{(0)} = \psi'_\sss, \ldots, \psi'^{(\ell)} = \psi'_\ttt \rangle$
for $(G', \psi'_\sss, \psi'_\ttt)$ having the maximum value.
Construct then a sequence of assignments,
$\psiS \triangleq \langle \psi^{(i)} \rangle_{0 \leq i \leq \ell}$,
such that $\psi^{(i)} \triangleq \PLR(\psi'^{(i)})$ for all $i$.
Observe that
$\psiS$ is a valid reconfiguration sequence for $(G, \psi_\sss, \psi_\ttt)$, and
we thus must have $\val_G(\psiS) < 1-\epsilon$;
in particular, there exists some $\psi'^{(i)}$ such that
$\val_G(\PLR(\psi'^{(i)})) = \val_G(\psi^{(i)}) < 1-\epsilon$.
We would like to show that $\val_{G'}(\psi'^{(i)}) < 1-\bar{\epsilon}$
for some constant $\bar{\epsilon} \in (0,1)$ depending only on $\epsilon$.
Hereafter, we denote
$\psi \triangleq \psi^{(i)}$ and $\psi' \triangleq \psi'^{(i)}$ for notational simplicity.

For each vertex $v \in V$,
we define $D_v$ as the set of vertices in $\cloud(v)$ whose values \emph{disagree} with the plurality vote $\psi(v)$; namely,
\begin{align}
D_v \triangleq \Bigl\{ v' \in \cloud(v) \Bigm| \psi(v) \not\in \psi'(v') \Bigr\}.
\end{align}
Consider any edge $e=(v,w) \in E$ violated by $\psi$ (i.e., $(\psi(v), \psi(w)) \not\in \pi_e$), and
let $e' = (v',w') \in E'$ be a unique (inter-cloud) edge such that $v' \in \cloud(v)$ and $w' \in \cloud(w)$.
By definition of $\pi'_{e'}$,
(at least) either of the following conditions must hold:
\begin{description}
    \item[(Condition 1)]
    edge $e'$ is violated by $\psi'$ (i.e., $(\psi'(v'), \psi'(w')) \not\in \pi'_{e'}$), or
    \item[(Condition 2)]
    $\psi(v) \not\in \psi'(v')$ (i.e., $v' \in D_v$) or
    $\psi(w) \not\in \psi'(w')$ (i.e., $w' \in D_w$).
\end{description}
Consequently, 
the number of edges in $E$ violated by $\psi$ is bounded by
the sum of the number of inter-cloud edges in $E'$ violated by $\psi'$ and
the number of vertices in $V'$ who disagree with the plurality vote;
namely,
\begin{align}
\epsilon |E| < \text{(\# inter-cloud edges violated by } \psi' \text{)} + \sum_{v \in V} |D_v|.
\end{align}
Then, one of the two terms on the right-hand side
of the above inequality should be greater than $\frac{\epsilon}{2}|E|$.
If the first term is more than $\frac{\epsilon}{2} |E| $,
then we are done because
\begin{align}\label{eq:BCSP-deg_reduce:ub1}
    \val_{G'}(\psi')
    \leq \frac{|E'| - \text{(\# edges violated by } \psi' \text{)}}{|E'|}
    < 1- \frac{\epsilon}{2} \frac{|E|}{|E'|} \leq 1 - \frac{\epsilon}{2 \cdot n_0(d_0)}.
\end{align}
We now consider the case that
$\sum_{v \in V} |D_v| > \frac{\epsilon}{2} |E|$.
Define $x_v$ for each $v \in V$ as
the fraction of vertices in $\cloud(v)$ who disagree with $\psi(v)$;
namely,
\begin{align}
    x_v \triangleq \frac{|D_v|}{|\cloud(v)|} = \frac{|D_v|}{d_G(v)}.
\end{align}
We also define $\delta \triangleq \frac{\epsilon}{8}$.
We first show that
the total size of $|D_v|$ \emph{conditioned on} $x_v \geq \delta$ is $\Theta(\epsilon|E|)$.

\begin{claim}\label{clm:BCSP-deg_reduce:sum}
\begin{leftbar}
$\displaystyle \sum_{v \in V : x_v \geq \delta} |D_v| > \frac{\epsilon}{4} |E|$,
where $\delta = \frac{\epsilon}{8}$.
\end{leftbar}
\end{claim}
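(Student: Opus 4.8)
The plan is to derive this from the already-established bound $\sum_{v \in V} |D_v| > \frac{\epsilon}{2}|E|$ by a simple averaging argument: split the sum according to whether $x_v \geq \delta$ or $x_v < \delta$, and show that the contribution of the ``small'' vertices is negligible compared to the total.

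Concretely, first I would observe that for every $v$ with $x_v < \delta$ we have $|D_v| = x_v \cdot d_G(v) < \delta \cdot d_G(v)$ directly from the definition $x_v = |D_v| / d_G(v)$. Summing over all such $v$ and using the handshake identity $\sum_{v \in V} d_G(v) = 2|E|$ (each edge of $G$ is incident to exactly two vertices), I get
\begin{align}
\sum_{v \in V : x_v < \delta} |D_v| < \delta \sum_{v \in V : x_v < \delta} d_G(v) \leq \delta \sum_{v \in V} d_G(v) = 2\delta |E|.
\end{align}
Then I would subtract this from the global lower bound:
\begin{align}
\sum_{v \in V : x_v \geq \delta} |D_v| = \sum_{v \in V} |D_v| - \sum_{v \in V : x_v < \delta} |D_v| > \frac{\epsilon}{2}|E| - 2\delta|E|.
\end{align}
Plugging in $\delta = \frac{\epsilon}{8}$ gives $2\delta |E| = \frac{\epsilon}{4}|E|$, so the right-hand side equals $\frac{\epsilon}{2}|E| - \frac{\epsilon}{4}|E| = \frac{\epsilon}{4}|E|$, which is exactly the desired inequality.

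There is no serious obstacle here; the only thing to be careful about is making the choice of $\delta$ explicit and consistent with its definition $\delta = \frac{\epsilon}{8}$ so that the constant $\frac{\epsilon}{4}$ falls out with the correct slack (note the strict inequality is inherited from $\sum_v |D_v| > \frac{\epsilon}{2}|E|$). It is also worth double-checking that $|\cloud(v)| = d_G(v)$ (which holds by the definition of $\cloud(v)$ as the set of incident edges of $v$), since that identity underlies the relation $|D_v| = x_v d_G(v)$ used above.
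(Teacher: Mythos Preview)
Your proposal is correct and mirrors the paper's own proof essentially line for line: split $\sum_v |D_v|$ according to whether $x_v \geq \delta$, bound the small part by $2\delta|E|$ via $|D_v| = x_v d_G(v)$ and the handshake identity, and subtract from the assumed bound $\sum_v |D_v| > \frac{\epsilon}{2}|E|$. The only cosmetic difference is that the paper writes the small-part bound with $\leq$ rather than $<$, which sidesteps the degenerate case where no vertex has $x_v < \delta$; since the strict inequality ultimately comes from $\sum_v |D_v| > \frac{\epsilon}{2}|E|$ (as you note), this does not affect the argument.
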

\begin{claimproof}
Note that
\begin{align}
\begin{aligned}
    \sum_{v \in V} |D_v|
    & = \sum_{v: x_v \geq \delta} |D_v| + \sum_{v: x_v < \delta} x_v \cdot d_G(v) \\
    & \leq \sum_{v: x_v \geq \delta} |D_v| + \delta \sum_{v: x_v < \delta} d_G(v)
    \leq \sum_{v: x_v \geq \delta} |D_v| + 2 \delta |E|.
\end{aligned}
\end{align}
Therefore, it holds that
\begin{align}
    \sum_{v: x_v \geq \delta} |D_v|
    \geq \sum_{v \in V} |D_v| - 2 \delta |E|
    \underbrace{>}_{\text{use } \sum_{v \in V} |D_v| > \frac{\epsilon}{2} |E|} \frac{\epsilon}{2} |E| - 2 \delta |E|
    = \frac{\epsilon}{4} |E|,
\end{align}
which completes the proof.
\end{claimproof}
We then discover a pair of disjoint subsets of $\cloud(v)$ for every $v \in V$ such that 
their size is $\Theta(|D_v|)$ and they are mutually conflicting under $\psi'$,
where the fact that $|\Sigma| = 3$ somewhat simplifies the proof by cases.

\begin{observation}\label{clm:BCSP-deg_reduce:ST}
\begin{leftbar}
For each vertex $v$ of $V$,
there exists a pair of disjoint subsets $S$ and $T$ of $\cloud(v)$ such that
$|S| \geq \frac{|D_v|}{3}$, 
$|T| \geq \frac{|D_v|}{3}$, and 
$\psi'$ violates all constraints between $S$ and $T$.
\end{leftbar}
\end{observation}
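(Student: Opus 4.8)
The plan is to prove this observation by a purely combinatorial analysis of how the values $\psi'(v')$, over $v'\in\cloud(v)$, distribute among the six letters of $\Sigma'$; the expander structure of $X_v$ plays no role here (it is used only later, via the mixing lemma, to count edges between $S$ and $T$). For $\gamma\in\Sigma'$ write $n_\gamma$ for the number of $v'\in\cloud(v)$ with $\psi'(v')=\gamma$. Since $\Sigma'$ and the intra-cloud constraints in \cref{eq:pi-intra} are invariant under any relabelling of $\{\aaa,\bbb,\ccc\}$ (a relabelling preserves the relation $\alpha\subseteq\beta$), and both $\PLR$ and $D_v$ are equivariant under this action, I may assume without loss of generality that the plurality value at $v$ is $\psi(v)=\aaa$.

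First I would record the two facts I need about \cref{eq:pi-intra}: a pair $(\alpha,\beta)$ violates the constraint exactly when $\alpha$ and $\beta$ are $\subseteq$-incomparable, and hence a singleton $\{x\}$ is incomparable with precisely the letters of $\Sigma'$ not containing $x$, while a doubleton $\{x,y\}$ is incomparable with precisely the letters containing the remaining element. With $\psi(v)=\aaa$, the set $D_v$ consists exactly of those $v'$ whose value lies in $\{\bbb,\ccc,\bbb\ccc\}$, so $|D_v|=n_\bbb+n_\ccc+n_{\bbb\ccc}$. The definition of the plurality vote gives $n_\aaa+n_{\aaa\bbb}+n_{\ccc\aaa}\geq n_\bbb+n_{\aaa\bbb}+n_{\bbb\ccc}$ and $n_\aaa+n_{\aaa\bbb}+n_{\ccc\aaa}\geq n_\ccc+n_{\bbb\ccc}+n_{\ccc\aaa}$, that is, the two inequalities $n_\aaa+n_{\ccc\aaa}\geq n_\bbb+n_{\bbb\ccc}$ and $n_\aaa+n_{\aaa\bbb}\geq n_\ccc+n_{\bbb\ccc}$.

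The proof then splits into three cases by comparing $n_\bbb$ and $n_\ccc$ with $|D_v|/3$. (i) If $n_\bbb\geq|D_v|/3$ and $n_\ccc\geq|D_v|/3$, let $S$ be the $\bbb$-valued and $T$ the $\ccc$-valued vertices of $\cloud(v)$: they are disjoint, of the required sizes, and $\bbb,\ccc$ are incomparable. (ii) If $n_\bbb<|D_v|/3$, let $S$ be the vertices with value in $\{\ccc,\bbb\ccc\}$, so $|S|=n_\ccc+n_{\bbb\ccc}=|D_v|-n_\bbb>\frac{2}{3}|D_v|$, and let $T$ be the vertices with value in $\{\aaa,\aaa\bbb\}$; by the facts above each of $\aaa,\aaa\bbb$ is incomparable with each of $\ccc,\bbb\ccc$, and the plurality inequality $n_\aaa+n_{\aaa\bbb}\geq n_\ccc+n_{\bbb\ccc}=|S|$ gives $|T|\geq|S|>|D_v|/3$. (iii) If $n_\bbb\geq|D_v|/3$ but $n_\ccc<|D_v|/3$, proceed symmetrically with $S$ the vertices valued in $\{\bbb,\bbb\ccc\}$ and $T$ those valued in $\{\aaa,\ccc\aaa\}$, using the other plurality inequality. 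In every case $S$ and $T$ are disjoint (being defined by disjoint sets of letters) and every value occurring in $S$ is incomparable with every value occurring in $T$, so $\psi'$ violates each intra-cloud constraint joining $S$ and $T$; and since $S,T\subseteq\cloud(v)$, these are the only constraints between $S$ and $T$.

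I expect the genuinely delicate point to be cases (ii) and (iii): when $D_v$ is dominated by the doubleton value $\bbb\ccc$, which is $\subseteq$-comparable with both singletons $\bbb$ and $\ccc$, there is no large conflicting partner inside $D_v$ itself, and one is forced to build $T$ out of letters lying outside $D_v$ (those containing $\aaa$). This is precisely where the plurality-vote inequalities are indispensable — they guarantee that the $\aaa$-containing letters are numerous enough to make $|T|\geq|D_v|/3$ — and it is also the place where $|\Sigma|=3$ matters: after fixing $\psi(v)=\aaa$ only the three ``disagreeing'' letters $\bbb,\ccc,\bbb\ccc$ and their complementary letters $\ccc\aaa,\aaa\bbb,\aaa$ participate, keeping the case analysis finite and short.
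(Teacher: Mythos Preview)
Your proof is correct and follows essentially the same approach as the paper's: reduce by symmetry to $\psi(v)=\aaa$, identify $D_v$ with the letters $\{\bbb,\ccc,\bbb\ccc\}$, and use the plurality inequalities $n_\aaa+n_{\ccc\aaa}\geq n_\bbb+n_{\bbb\ccc}$ and $n_\aaa+n_{\aaa\bbb}\geq n_\ccc+n_{\bbb\ccc}$ to exhibit a large conflicting set $T$ among the $\aaa$-containing letters. Your case split (on whether $n_\bbb$ and $n_\ccc$ individually meet $|D_v|/3$) differs slightly from the paper's (which pigeonholes on which of $n_\bbb,n_\ccc,n_{\bbb\ccc}$ is at least $|D_v|/3$ and always takes $S$ to be a \emph{single} letter class), but the ingredients and the verification of incomparability are otherwise identical.
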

\begin{proof}
Without loss of generality, we can assume that $\psi(v) = \aaa$.
For each value $\alpha \in \Sigma'$,
let $n_\alpha$ denote the number of vertices in $\cloud(v)$ whose value is exactly $\alpha$; namely,
\begin{align}
    n_\alpha \triangleq \left|\Bigl\{ v' \in \cloud(v) \Bigm| \psi'(v') = \alpha \Bigr\}\right|.
\end{align}
By definition of $D_v$, we have $n_{\bbb} + n_{\ccc} + n_{\bbb\ccc} = |D_v|$.
Since one of $n_{\bbb}$, $n_{\ccc}$, or $n_{\bbb\ccc}$ must be at least $\frac{|D_v|}{3}$,
we have the following three cases to consider:

\begin{description}
\item[(Case 1)] If $n_\bbb \geq \frac{|D_v|}{3}$:
By construction of $\psi$ by the plurality vote on $\psi'$, we have
\begin{align}
\begin{aligned}
    \underbrace{n_{\aaa} + n_{\aaa\bbb} + n_{\ccc\aaa}}_{\text{\# vertices contributing to } \aaa}
    & \geq \underbrace{n_{\bbb} + n_{\aaa\bbb} + n_{\bbb\ccc}}_{\text{\# vertices contributing to } \bbb} \\
    \implies n_{\aaa} + n_{\ccc\aaa} & \geq n_{\bbb} + n_{\bbb\ccc} \geq n_{\bbb} \geq \frac{|D_v|}{3}.
\end{aligned}
\end{align}
Therefore, we let
$S \triangleq \{v' \in \cloud(v) \mid \psi'(v') \mbox{ is } \bbb \}$
and 
$T \triangleq \{v' \in \cloud(v) \mid \psi'(v') \mbox{ is } \aaa \mbox{ or } \ccc\aaa \}$
to ensure that
$|S|, |T| \geq \frac{|D_v|}{3} $ and
every intra-cloud edge between $S$ and $T$ is violated by $\psi'$ owing to \cref{eq:pi-intra}.
\item[(Case 2)] If $n_{\ccc} \geq \frac{|D_v|}{3}$:
Similarly, we have
\begin{align}
\begin{aligned}
    \underbrace{n_{\aaa} + n_{\aaa\bbb} + n_{\ccc\aaa}}_{\text{\# vertices contributing to } \aaa}
    & \geq \underbrace{n_{\ccc} + n_{\ccc\aaa} + n_{\bbb\ccc}}_{\text{\# vertices contributing to } \ccc} \\
    \implies n_{\aaa} + n_{\aaa\bbb} & \geq n_{\ccc} + n_{\bbb\ccc} \geq n_{\ccc} \geq \frac{|D_v|}{3}.
\end{aligned}
\end{align}
Thus, we let
$S \triangleq \{v' \in \cloud(v) \mid \psi'(v') \mbox{ is } \ccc\}$
and 
$T \triangleq \{v' \in \cloud(v) \mid \psi'(v') \mbox{ is } \aaa \mbox{ or } \aaa\bbb\}$
to have that
$|S|, |T| \geq \frac{|D_v|}{3}$ and
all intra-cloud edges between $S$ and $T$ are unsatisfied.
\item[(Case 3)] If $n_{\bbb\ccc} \geq \frac{|D_v|}{3}$:
Observe that
\begin{align}
\begin{aligned}
    \underbrace{n_{\aaa} + n_{\aaa\bbb} + n_{\ccc\aaa}}_{\text{\# vertices contributing to } \aaa}
    & \geq \underbrace{n_{\bbb} + n_{\aaa\bbb} + n_{\bbb\ccc}}_{\text{\# vertices contributing to } \bbb} \\
    & \geq n_{\bbb\ccc} \geq \frac{|D_v|}{3}.
\end{aligned}
\end{align}
Letting
$S \triangleq \{v' \in \cloud(v) \mid \psi'(v') \mbox{ is } \bbb\ccc\}$
and 
$T \triangleq \{v' \in \cloud(v) \mid \psi'(v') \mbox{ is } \aaa, \aaa\bbb, \mbox{or } \ccc\aaa\}$ is sufficient.
\end{description}
The above case analysis finishes the proof.
\end{proof}

Consider a vertex $v \in V$ such that $x_v \geq \delta$; that is,
at least $\delta$-fraction of vertices in $\cloud(v)$ disagree with $\psi(v)$.
Letting $S$ and $T$ be two disjoint subsets of $\cloud(v)$ obtained by \cref{clm:BCSP-deg_reduce:ST},
we wish to bound the number of edges between $S$ and $T$ (i.e., $e(S,T)$) using the expander mixing lemma.
Hereafter, we determine the value of $d_0$ by
$d_0 \triangleq \left(\frac{12}{\delta}\right)^2 = \frac{9216}{\epsilon^2}$,
which is a positive \emph{even} integer (so that \cref{thm:explicit-expander} is applicable) and depends only on the value of $\epsilon$.
Suppose first $d_G(v) \geq n_0(d_0)$; i.e., $X_v$ is an expander.

\begin{lemma}\label{lem:BCSP-deg_reduce:eST}
\begin{leftbar}
For a vertex $v$ of $V$ such that $x_v \geq \delta$ and $d_G(v) \geq n_0(d_0)$,
let $S$ and $T$ be a pair of disjoint subsets of $\cloud(v)$ obtained by \cref{clm:BCSP-deg_reduce:ST}.
Then,
$e(S,T) \geq \frac{8}{\delta} |D_v|$.
\end{leftbar}
\end{lemma}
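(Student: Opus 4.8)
The plan is a direct application of the expander mixing lemma (\cref{lem:expander-mixing}) to $X_v$ with the two sets $S$ and $T$ produced by \cref{clm:BCSP-deg_reduce:ST}, followed by a short calculation exploiting the specific choice $d_0 = \left(\frac{12}{\delta}\right)^2$.

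First I would collect the relevant quantities. Since $d_G(v) \geq n_0(d_0)$, the graph $X_v$ is a $(d_G(v), d_0, \lambda)$-expander with $\lambda \leq 2\sqrt{d_0} = \frac{24}{\delta}$ by \cref{thm:explicit-expander}. By \cref{clm:BCSP-deg_reduce:ST} we have $|S|, |T| \geq \frac{|D_v|}{3}$, hence $\sqrt{|S|\cdot|T|} \geq \frac{|D_v|}{3}$. Moreover, $x_v \geq \delta$ gives $|D_v| = x_v \cdot |\cloud(v)| \geq \delta \cdot d_G(v)$, so in particular $\sqrt{|S|\cdot|T|} \geq \frac{\delta \cdot d_G(v)}{3}$.

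Next, applying \cref{lem:expander-mixing} to $X_v$ (which has $d_G(v)$ vertices and degree $d_0$) and rearranging, I would write
\[
    e(S,T) \;\geq\; \frac{d_0 \cdot |S| \cdot |T|}{d_G(v)} - \lambda\sqrt{|S|\cdot|T|}
    \;=\; \sqrt{|S|\cdot|T|}\left( \frac{d_0\sqrt{|S|\cdot|T|}}{d_G(v)} - \lambda \right).
\]
For the bracketed factor, the bound $\sqrt{|S|\cdot|T|} \geq \frac{\delta \cdot d_G(v)}{3}$ yields $\frac{d_0\sqrt{|S|\cdot|T|}}{d_G(v)} \geq \frac{d_0\delta}{3} = \frac{48}{\delta}$, so the factor is at least $\frac{48}{\delta} - \frac{24}{\delta} = \frac{24}{\delta} > 0$. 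Combining this with $\sqrt{|S|\cdot|T|} \geq \frac{|D_v|}{3}$ gives $e(S,T) \geq \frac{24}{\delta}\cdot\frac{|D_v|}{3} = \frac{8}{\delta}|D_v|$, which is exactly the claim.

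There is no genuine obstacle here; the only thing to get right is the bookkeeping, namely that the same lower bound $\sqrt{|S|\cdot|T|} \geq \frac{|D_v|}{3}$ gets used twice — once combined with $|D_v| \geq \delta\, d_G(v)$ to clear the denominator $d_G(v)$ inside the parenthesis, and once directly outside — and that the value $d_0 = \left(\frac{12}{\delta}\right)^2$ is precisely what makes $\frac{d_0\delta}{3} - 2\sqrt{d_0}$ equal to $\frac{24}{\delta}$, which is comfortably above the target $\frac{8}{\delta}$ after scaling by $\frac{|D_v|}{3}$. (A somewhat smaller $d_0$ would already suffice; this choice leaves slack that is presumably convenient downstream.)
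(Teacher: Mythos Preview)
Your proof is correct and follows essentially the same approach as the paper: apply the expander mixing lemma to $X_v$, then combine $\sqrt{|S|\cdot|T|} \geq \frac{|D_v|}{3} \geq \frac{\delta}{3}d_G(v)$ with $d_0 = (\tfrac{12}{\delta})^2$ and $\lambda \leq 2\sqrt{d_0}$. Your factorisation $e(S,T) \geq \sqrt{|S|\cdot|T|}\bigl(\tfrac{d_0\sqrt{|S|\cdot|T|}}{d_G(v)} - \lambda\bigr)$ is in fact a slightly cleaner organization of the same calculation than the paper's quadratic-in-$\sqrt{|S|\cdot|T|}$ monotonicity argument.
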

\begin{proof}
Recall that $X_v$ is a $(d_G(v), d_0, \lambda)$-expander graph,
where $\lambda \leq 2 \sqrt{d_0}$.
By applying the expander mixing lemma on $S$ and $T$,
we obtain
\begin{align}
    e(S,T)
    \geq \frac{d_0 |S|\cdot|T|}{d_G(v)} - \lambda \sqrt{|S|\cdot |T|}
    \geq
    \underbrace{\frac{|S| \cdot |T|}{d_G(v)} \left(\frac{12}{\delta}\right)^2 - \frac{2 \cdot 12}{\delta} \sqrt{|S|\cdot |T|}}_{= \underline{e}(S,T)}.
\end{align}
Consider $\underline{e}(S,T)$ as a quadratic polynomial in $\sqrt{|S|\cdot |T|}$.
Setting the partial derivative of $\underline{e}(S,T)$ by $\sqrt{|S|\cdot |T|}$ equal to $0$,
we obtain
\begin{align}
\begin{aligned}
    & \frac{\partial}{\partial \sqrt{|S|\cdot |T|}} \underline{e}(S,T) =
    \frac{2 \sqrt{|S|\cdot |T|}}{d_v} \left(\frac{12}{\delta} \right)^2 - \frac{2 \cdot 12}{\delta} = 0 \\
    \implies & \sqrt{|S|\cdot |T|} = \frac{\delta}{12}d_v.
\end{aligned}
\end{align}
Therefore, $\underline{e}(S,T)$ is monotonically increasing in $\sqrt{|S|\cdot |T|}$ when
$\sqrt{|S|\cdot |T|} > \frac{\delta}{12}d_G(v)$.
Observing that
$\sqrt{|S|\cdot |T|} \geq \frac{\delta}{3}d_G(v)$ since
$|S| \geq \frac{x_v}{3}d_G(v)$, $|T| \geq \frac{x_v}{3}d_G(v)$, and $x_v \geq \delta$ by assumption,
we derive
\begin{align}
\begin{aligned}
    e(S,T) & \geq \underline{e}(S,T) \geq \frac{1}{d_G(v)} \left(\frac{x_v \cdot d_G(v)}{3} \right)^2 \left(\frac{12}{\delta}\right)^2 - \frac{2 \cdot 12}{\delta} \frac{x_v \cdot d_G(v)}{3} \\
    & \underbrace{\geq}_{\text{use } x_v \geq \delta} \frac{1}{d_G(v)} \left(\frac{x_v \cdot d_G(v)}{3} \right) \left(\frac{\delta \cdot d_G(v)}{3} \right) \left(\frac{12}{\delta} \right)^2
    - \frac{2\cdot 12}{\delta} \frac{x_v \cdot d_G(v)}{3} \\
    & = \frac{16}{\delta} x_v \cdot d_G(v) - \frac{8}{\delta} x_v \cdot d_G(v)
    = \frac{8}{\delta} |D_v|. \qedhere
\end{aligned}
\end{align}
\end{proof}

Suppose then $d_G(v) < n_0(d_0)$.
Since $X_v$ forms a complete graph over $d_G(v)$ vertices,
$e(S,T)$ is exactly equal to $|S|\cdot |T|$, which is evaluated as
\begin{align}
\label{eq:BCSP-deg_reduce:ST-small}
    e(S,T) = |S|\cdot |T|
    \underbrace{\geq}_{\text{\cref{clm:BCSP-deg_reduce:ST}}} \left(\frac{|D_v|}{3}\right)^2
    = \frac{x_v \cdot d_G(v)}{9} |D_v|
    \underbrace{\geq}_{\text{use } d_G(v) \geq 1 \text{ and } x_v \geq \delta} \frac{\delta}{9} |D_v|.
\end{align}
By \cref{lem:BCSP-deg_reduce:eST,eq:BCSP-deg_reduce:ST-small},
for every vertex $v \in V$ such that $x_v \geq \delta$,
the number of violated intra-cloud edges within $X_v$ is 
at least $\min\{\frac{8}{\delta}, \frac{\delta}{9} \} |D_v| \geq \frac{\delta}{9} |D_v|$.
Simple calculation using \cref{clm:BCSP-deg_reduce:sum} bounds the total number of intra-cloud edges violated by $\psi'$ from below as
\begin{align}
    \sum_{v \in V} (\text{\# edges in } X_v \text{ violated by } \psi')
    \geq \sum_{v: x_v \geq \delta} \frac{\delta}{9} |D_v|
    \underbrace{>}_{\text{\cref{clm:BCSP-deg_reduce:sum}}}
    \frac{\epsilon}{72} \frac{\epsilon}{4} |E|
    \geq \frac{\epsilon^2 \cdot |E'|}{288 \cdot n_0(d_0)}. \label{eq:BCSP-deg_reduce:ub2}
\end{align}
Consequently, from \cref{eq:BCSP-deg_reduce:ub1,eq:BCSP-deg_reduce:ub2}, we conclude that
\begin{align}
    \val_{G'}(\psiS') \leq \val_{G'}(\psi')
    < \max\left\{1-\frac{\epsilon}{2 \cdot n_0(d_0)}, 1-\frac{\epsilon^2}{288 \cdot n_0(d_0)}\right\}
    = 1-\frac{\epsilon^2}{288 \cdot n_0\left(\frac{9216}{\epsilon^2}\right)}.
\end{align}
Setting $\bar{\epsilon} \triangleq \frac{\epsilon^2}{288 \cdot n_0\left(\frac{9216}{\epsilon^2}\right)}$ accomplishes
the proof of \cref{lem:BCSP-deg_reduce:soundness}
and thus \cref{lem:BCSP-deg_reduce}. \qed

\subsection{Putting It Together}
\label{subsec:SAT-CSP:together}

We are now ready to finish the proof of \cref{thm:E3SAT}.
\begin{proof}[Proof of \cref{thm:E3SAT}]
By \cref{lem:qCSP-E3SAT,lem:E3SAT-BCSP},
\prb{Gap$_{1,1-\epsilon}$ BCSP$_3$ Reconfiguration} is \PSPACE-hard
for some $\epsilon \in (0,1)$ under \cref{hyp:RIH}.
Thus, under the same hypothesis,
\prb{Gap$_{1,1-\bar{\epsilon}}$ BCSP$_6(\Delta)$ Reconfiguration}
is \PSPACE-hard for some $\bar{\epsilon} \in (0,1)$ and $\Delta \in \bbN$
depending only on $\epsilon$ as guaranteed by \cref{lem:BCSP-deg_reduce}.
Since the maximum degree of input constraint graphs is bounded by $\Delta$,
we further apply \cref{lem:qCSP-E3SAT} to conclude that
\prb{Gap$_{1,1-\epsilon'}$ E$3$-SAT$(B)$ Reconfiguration} is \PSPACE-hard
under the hypothesis
for some $\epsilon' \in (0,1)$ and $B \in \bbN$ depending solely on $\epsilon$,
which accomplishes the proof.
\end{proof}

\section{Applications}\label{sec:app}
Here, we apply \cref{thm:E3SAT} to 
devise conditional \PSPACE-hardness of approximation for
\prb{Nondeterministic Constraint Logic},
popular reconfiguration problems on graphs, and
\prb{$2$-SAT Reconfiguration}.

\subsection{Optimization Variant of {\normalfont \prb{Nondeterministic Constraint Logic}}}
\label{subsec:app:NCL}
We review \prb{Nondeterministic Constraint Logic} invented by
{Hearn and Demaine}~\cite{hearn2005pspace,hearn2009games}.
An \emph{\scAND/\scOR graph} is defined as an undirected graph $G=(V,E)$, where
each link of $E$ is colored \emph{red} or \emph{blue} and has weight $1$ or $2$, respectively, and
each node of $V$ is one of the following two types:\footnote{
We refer to vertices and edges of an \scAND/\scOR graph as
\emph{nodes} and \emph{links} to distinguish from those of a standard graph.
}
\begin{itemize}
    \item \emph{\scAND node}, which has two incident red links and one incident blue link, or
    \item \emph{\scOR node}, which has three incident blue links.
\end{itemize}
Hence, every \emph{\scAND/\scOR graph} is $3$-regular.
An orientation (i.e., an assignment of direction to each link) of $G$ 
\emph{satisfies} a particular node of $G$ if the total weight of its incoming links is at least $2$, and 
\emph{satisfies} $G$ if all nodes are satisfied.
\scAND and \scOR nodes are designed to behave like the corresponding logical gates:
the blue link of an \scAND node can be directed outward if and only if
both two red links are directed inward;
a particular blue link of an \scOR node can be directed outward if and only if
at least one of the other two blue links is directed inward.
Thus, a direction of each link can be considered a \emph{signal}.
In the \prb{Nondeterministic Constraint Logic} problem,
for an \scAND/\scOR graph $G$ and its two satisfying orientations $O_\sss$ and $O_\ttt$,
we are asked if $O_\sss$ can be transformed into $O_\ttt$ by
a sequence of link reversals
while ensuring that every intermediate orientation satisfies $G$.\footnote{
A variant of \prb{Nondeterministic Constraint Logic},
called \emph{configuration-to-edge} \cite{hearn2005pspace},
requires to decide if a specified link can be eventually reversed by a sequence of link reversals.
From a point of view of approximability, this definition does not seem to make much sense.
}

We now formulate an optimization variant of \prb{Nondeterministic Constraint Logic},
which affords to use an orientation that does \emph{not} satisfy some nodes.
Once more, we define $\val_G(\cdot)$ for \scAND/\scOR graph $G$ analogously:
Let $\val_G(O)$ denote the fraction of nodes satisfied by orientation $O$, let
\begin{align}
    \val_G(\scrO) \triangleq \min_{O^{(i)} \in \scrO} \val_G(O^{(i)})
\end{align}
for reconfiguration sequence of orientations, $\scrO = \langle O^{(i)} \rangle_{0 \leq i \leq \ell}$, and let
\begin{align}
    \val_G(O_\sss \reco O_\ttt) \triangleq \max_{\scrO = \langle O_\sss, \ldots, O_\ttt \rangle} \val_G(\scrO)
\end{align}
for two orientations $O_\sss$ and $O_\ttt$.
Then, for a pair of orientations $O_\sss$ and $O_\ttt$ of $G$,
\prb{Maxmin Nondeterministic Constraint Logic} requires to maximize $\val_G(\scrO)$
subject to $\scrO = \langle O_\sss, \ldots, O_\ttt \rangle$, and
for every $0 \leq s \leq c \leq 1$,
\prb{Gap$_{c,s}$ Nondeterministic Constraint Logic} requests to distinguish whether
$\val_G(O_\sss \reco O_\ttt) \geq c$ or $\val_G(O_\sss \reco O_\ttt) < s$.
We demonstrate that
RIH implies \PSPACE-hardness of approximation for \prb{Maxmin Nondeterministic Constraint Logic}.

\begin{proposition}\label{thm:NCL}
\begin{leftbar}
For every $B \in \bbN$ and $\epsilon \in (0,1)$,
there exists a gap-preserving reduction from
\prb{Gap$_{1, 1-\epsilon}$ E$3$-SAT$(B)$ Reconfiguration}
to
\prb{Gap$_{1, 1-\Theta(\frac{\epsilon}{B})}$ Nondeterministic Constraint Logic}.
\end{leftbar}
\end{proposition}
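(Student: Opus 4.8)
The plan is to adapt the classical reduction from \prb{$3$-SAT Reconfiguration} to \prb{Nondeterministic Constraint Logic} of Hearn and Demaine~\cite{hearn2005pspace,hearn2009games}, with two modifications: keeping every gadget of constant size so that the total number of nodes stays linear in the number of clauses, and strengthening the signal-transmitting gadgets so that an intermediate orientation cannot ``fake'' a literal value without incurring a node violation. Let $(\phi,\sigma_\sss,\sigma_\ttt)$ be an instance of \prb{Gap$_{1,1-\epsilon}$ E$3$-SAT$(B)$ Reconfiguration} with $m$ clauses over $n$ variables, where we may assume every variable occurs. I would build an \scAND/\scOR graph $G$ out of: a constant-size \emph{variable gadget} for each $x_i$ carrying a link $\ell_i$ whose orientation encodes $\sigma(x_i)$, built from a \scCHOICE-type gadget so that $\ell_i$ is freely reversible; a \emph{fanout tree} with $\bigO(b_i)$ nodes that copies the value of $x_i$ (and, through a NOT gadget, that of $\neg x_i$) once for each clause in which the matching literal occurs, where $b_i$ denotes the number of occurrences of $x_i$; and a constant-size \emph{clause gadget} for each $C_j$ --- essentially an \scOR node reading the three literal links --- that is satisfiable \emph{only if} at least one of its literal links points inward. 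Propagating the value links of a satisfying assignment $\sigma$ of $\phi$ through the fanout trees and clause gadgets yields a satisfying orientation $O_\sigma$; set $O_\sss \triangleq O_{\sigma_\sss}$ and $O_\ttt \triangleq O_{\sigma_\ttt}$. Since $n \leq 3m$ and $\sum_i b_i = 3m$, the number of nodes is $N = \bigO(m)$.

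For completeness it suffices to simulate one flip $\sigma^{(t)}\to\sigma^{(t+1)}$ that differs only in a variable $x_i$, with both assignments satisfying $\phi$. Then every clause containing the flipped literal has some \emph{other} literal that is true under both $\sigma^{(t)}$ and $\sigma^{(t+1)}$, so its \scOR node is fed an inward literal link untouched by this flip. The reconfiguration then proceeds link by link: reverse the relevant fanout leaves from the clause side inward to $\ell_i$ --- over-orienting each affected node before flipping its input link so that it stays satisfied --- reverse $\ell_i$, update the NOT gadget, and reverse the opposite-polarity fanout back out to the clauses. Every intermediate orientation keeps all nodes satisfied, the clause gadgets being supported throughout by their alternate true literals; hence $\val_G(O_\sss \reco O_\ttt)=1$, giving the required perfect completeness.

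For soundness I would argue the contrapositive: suppose $\scrO = \langle O^{(0)}=O_\sss,\dots,O^{(L)}=O_\ttt\rangle$ is an optimal reconfiguration sequence with $\val_G(\scrO)\geq 1-\bar\epsilon$, so every $O^{(k)}$ violates at most $\bar\epsilon N$ nodes. Read the value links $\ell_1,\dots,\ell_n$ of each $O^{(k)}$ to obtain a truth assignment $\sigma^{(k)}$, and delete consecutive duplicates; since a single link reversal changes at most one $\ell_i$, this is a valid reconfiguration sequence of truth assignments from $\sigma_\sss$ to $\sigma_\ttt$. Now fix $k$ and a clause $C_j$ violated by $\sigma^{(k)}$: all three of its literal links ``should'' point away from its \scOR node, so either that \scOR node is already violated, or some literal link points inward while the value link it descends from disagrees --- and then the strengthened fanout gadget (together with the NOT gadget on negated copies) forces a violated node somewhere on the path from $\ell_i$ down to that literal link. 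Each node of a clause gadget is charged by at most one clause, and each node of $x_i$'s fanout tree or NOT gadget by at most $b_i\leq B$ clauses, so $O^{(k)}$ violates at least $(\#\text{ clauses violated by }\sigma^{(k)})/B$ nodes; hence $\sigma^{(k)}$ violates at most $B\bar\epsilon N = \bigO(B\bar\epsilon m)$ clauses. Choosing $\bar\epsilon = \Theta(\epsilon/B)$ with a small enough hidden constant makes this $<\epsilon m$ for every $k$, so $\val_\phi(\sigma_\sss\reco\sigma_\ttt)\geq 1-\epsilon$, as required; note $\bar\epsilon\in(0,1)$ whenever $\epsilon\in(0,1)$.

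The hardest part will be the gadget engineering underlying the rigidity invoked in the soundness step. Hearn and Demaine's fanout, NOT, and red--blue conversion gadgets were designed only for a reachability statement and, as they stand, let an intermediate orientation present a spurious ``true'' signal downstream without paying a node violation; I would need to redesign them so that a downstream literal link can be in the ``true'' position only when licensed by the upstream value link, \emph{unless} a node on the connecting path is violated, \emph{while} keeping the whole construction reconfigurable one link at a time in the way the completeness argument relies on. Exhibiting one family of constant-size gadgets meeting both demands --- and checking the remaining bookkeeping, e.g.\ that $O_\sss,O_\ttt$ are genuinely satisfying and that the charging double-counts by at most a factor $B$ --- is the technical core of the proof.
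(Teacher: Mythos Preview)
Your high-level reduction is the same as the paper's --- the Hearn--Demaine CNF network with $\Theta(m)$ nodes, and an essentially identical completeness argument. The one substantive difference is that you flag redesigning the signal-propagation gadgets as ``the technical core,'' believing the standard \scFANOUT and \scREDBLUE gadgets permit a spurious \emph{true} signal downstream without a violation. They do not: a \scFANOUT (\scAND) node with its upstream blue link directed away needs both red links directed in to meet the inflow~$\geq 2$ constraint, so a red output cannot show \emph{true} unless the node is violated; likewise the \scREDBLUE simulation is explicitly built so that both links pointing outward violates an internal node. The only slack in the standard gadgets is in the harmless direction (a \emph{true} upstream may show \emph{false} downstream), which your charging argument never needs. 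So no re-engineering is required; with the paper's careful decoding rule --- $\sigma^{(k)}(x_i)=\TTT$ only when link $x_i$ is out \emph{and} link $\bar{x_i}$ is in --- your path-tracing charging goes through directly using the off-the-shelf gadgets.

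The paper's soundness accounting is phrased a little differently from yours: rather than tracing a path per violated clause, it compares the given orientation $O^{(k)}$ against the canonical orientation $O_{\sigma^{(k)}}$ (which satisfies every non-clause node and violates $>\epsilon m$ clause nodes) and observes that, in passing from one to the other, each newly violated non-clause node can repair at most $B$ clause nodes. This yields the same inequality $(\text{violations in }O^{(k)}) > \epsilon m / B$ without ever invoking per-gadget rigidity, which is perhaps why the concern you raise does not surface there. Either bookkeeping works; the point is that the hard part you anticipated is not actually present.
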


Our proof makes a modification to the CNF network \cite{hearn2005pspace,hearn2009games}.
To this end, we refer to special nodes that can be simulated by an \scAND/\scOR subgraph, including
\scCHOICE, \scREDBLUE, \scFANOUT nodes, and free-edge terminators,
which are described blow; see also {Hearn and Demaine}~\cite{hearn2005pspace,hearn2009games} for more details.
\begin{itemize}
\item \emph{\scCHOICE node}:
    This node has three red links and
    is satisfied if at least two links are directed inward;
    i.e., only one link may be directed outward.
    A particular constant-size \scAND/\scOR subgraph can emulate a \scCHOICE node, wherein
    some nodes would be unsatisfied whenever two or more red links are directed outward.
\item \emph{\scREDBLUE node}:
    This is a degree-two node incident to one red edge and one blue link, which
    acts as transferring a signal between them; i.e.,
    one link may be directed outward if and only if
    the other is directed inward.
    A specific constant-size \scAND/\scOR subgraph can simulate a \scREDBLUE node, wherein
    some nodes become unsatisfied whenever
    both red and blue links are directed outward.
\item \emph{\scFANOUT node}:
    This node is equivalent to an \scAND node from a different interpretation:
    two red links may be directed outward if and only if the blue link is directed inward.
    Accordingly, a \scFANOUT node plays a role in \emph{splitting} a signal.
\item \emph{Free-edge terminator}:
    This is an \scAND/\scOR subgraph of constant size used to connect the loose end of a link.
    The connected link is free in a sense that it can be directed inward or outward.
\end{itemize}

\paragraph{Reduction.}
Given an instance $(\phi,\sigma_\sss,\sigma_\ttt)$ of \prb{Maxmin E$3$-SAT$(B)$ Reconfiguration},
where $\phi$ is an E$3$-CNF formula consisting of $m$ clauses $C_1, \ldots, C_m$ over $n$ variables $x_1, \ldots, x_n$, and
$\sigma_\sss$ and $\sigma_\ttt$ satisfy $\phi$,
we construct an \scAND/\scOR graph $G_\phi$ as follows.
For each variable $x_i$ of $\phi$, we create a \scCHOICE node, denoted $v_{x_i}$, called a \emph{variable node}.
Of the three red links incident to $v_{x_i}$, 
one is connected to a free-edge terminator, whereas
the other two are labeled ``$x_i$'' and ``$\bar{x_i}$.''
Thus, either of the links $x_i$ or $\bar{x_i}$ can be directed outward without sacrificing $v_{x_i}$.
For each clause $C_j$ of $\phi$, we create an \scOR node, denoted $v_{C_j}$, called a \emph{clause node}.
The output signals of variable nodes' links are sent toward the corresponding clause nodes.
Specifically,
if literal $\ell$ appears in multiple clauses of $\phi$,
we first make a desired number of copied signals of link $\ell$ using 
\scREDBLUE and \scFANOUT nodes;
if $\ell$ does not appear in any clause, we connect link $\ell$ to a free-edge terminator.
Then, for each clause $C_j = (\ell_1 \vee \ell_2 \vee \ell_3)$ of $\phi$,
the clause node $v_{C_j}$ is connected to three links corresponding to
the (copied) signals of $\ell_1, \ell_2, \ell_3$.
This completes the construction of $G_\phi$.
See \cref{fig:NCL} for an example.

Observe that $G_\phi$ is satisfiable if and only if $\phi$ is satisfiable \cite{hearn2005pspace,hearn2009games}.
Given a satisfying truth assignment $\sigma$ for $\phi$,
we can construct a satisfying orientation $O_\sigma$ of $G_\phi$: 
the trick is that
if literal $x_i$ or $\bar{x_i}$ appearing in clause $C_j$ evaluates to $\TTT$ by $\sigma$ ,
we can safely orient \emph{every} link on the unique path between $v_{x_i}$ and $v_{C_j}$ toward $v_{C_j}$.
Constructing $O_\sss$ from $\sigma_\sss$ and $O_\ttt$ from $\sigma_\ttt$ according to this procedure,
we obtain an instance $(G, O_\sss, O_\ttt)$ of \prb{Maxmin Nondeterministic Constraint Logic}, which completes the reduction.
The proof of the correctness shown below relies on the fact that for fixed $B \in \bbN$,
the number of nodes in $G_\phi$ is proportional to the number of variable nodes $n$ as well as that of clause nodes $m$.

\begin{figure}[t]
    \centering
    \scalebox{0.7}{\begin{tikzpicture}
    [decoration={markings, mark=at position 0.6 with {\arrow{Straight Barb[round, scale=0.75]}}}]


    \tikzset{node/.style={circle, thick, draw=black, fill=white, font=\LARGE, text centered, inner sep=0, outer sep=0, minimum size=0.4cm}};
    \tikzset{terminator/.style={diamond, thick, draw=black, fill=white, text centered, minimum size=1cm}};
    \tikzset{edge/.style={line width=1.25mm, postaction={decorate}}};
    \tikzset{red-edge/.style={red!80!black, line width=1.25mm, postaction={decorate}}};
    \tikzset{blue-edge/.style={blue!80!black, line width=1.75mm, postaction={decorate}}};

    \node[node, minimum size=1cm] at (0,0.5) (w){$v_w$};
    \node[node, minimum size=1cm] at (4,0.5) (x){$v_x$};
    \node[node, minimum size=1cm] at (8,0.5) (y){$v_y$};
    \node[node, minimum size=1cm] at (12,0.5) (z){$v_z$};
    
    \node[node, minimum size=1cm] at (1.5,11) (C1){$v_{C_1}$};
    \node[node, minimum size=1cm] at (6,11) (C2){$v_{C_2}$};
    \node[node, minimum size=1cm] at (10.5,11) (C3){$v_{C_3}$};

    \node[terminator, below=1cm of w](w-end){free};
    \node[terminator, below=1cm of x](x-end){free};
    \node[terminator, below=1cm of y](y-end){free};
    \node[terminator, below=1cm of z](z-end){free};

    \node[terminator] at (2,1.5) (wn-end){free};
    \node[node] at (0,2.5) (wp-RB){};
    \node[node] at (0.1,5) (wp-FO){};
    \node[node] at (0.5,8)(wp1){};
    \node[node] at (4,9) (wp2){};
        
    \node[node] at (3.6,3) (xp-RB){};
    \node[node] at (3.2,5.7) (xp-FO){};
    \node[node] at (2,8.2) (xp1){};
    \node[node] at (8.5,9) (xp3){};
    \node[node] at (5.5,8.7) (xn2){};

    \node[node] at (4,7.5) (yp1){};
    \node[node] at (10,8) (yn3){};

    \node[terminator] at (14,1.5) (zn-end){free};
    \node[node] at (12,2.7) (zp-RB){};
    \node[node] at (11.8,4.8) (zp-FO){};
    \node[node] at (8.6,7.7) (zp2){};
    \node[node] at (11.5,7.6) (zp3){};

    \draw[draw=none, fill=black!25!white] \convexpath{x,xp-RB,xp-FO,xp1,C1,xp1,xp-FO,xp3,C3,xp3,xp-FO,xp-RB}{0.4cm};

    \node[node, minimum size=1cm] at (4,0.5) (x){$v_x$};
    \node[node, minimum size=1cm] at (1.5,11) (C1){$v_{C_1}$};
    \node[node, minimum size=1cm] at (10.5,11) (C3){$v_{C_3}$};
    \node[node] at (3.6,3) (xp-RB){};
    \node[node] at (3.2,5.7) (xp-FO){};
    \node[node] at (2,8.2) (xp1){};
    \node[node] at (8.5,9) (xp3){};

    \node at (-0.5,1.3) {\LARGE $w$};
    \node at (1.3,0.6) {\LARGE $\bar{w}$};
    \node at (3.3,1.3) {\LARGE $x$};
    \node at (4.5,1.3) {\LARGE $\bar{x}$};
    \node at (7,1.3) {\LARGE $y$};
    \node at (8.6,1.3) {\LARGE $\bar{y}$};
    \node at (11.5,1.3) {\LARGE $z$};
    \node at (13,0.5) {\LARGE $\bar{z}$};

    \node[above=0.25cm of C1] {\Large $C_1 = (w \vee x \vee y)$};
    \node[above=0.25cm of C2] {\Large $C_2 = (w \vee \bar{x} \vee z)$};
    \node[above=0.25cm of C3] {\Large $C_3 = (x \vee \bar{y} \vee z)$};

    \foreach \v / \w in {w-end/w, wp-RB/w, w/wn-end, wp1/wp-FO, wp2/wp-FO,    x-end/x, x/xp-RB, xp-FO/xp1, xp-FO/xp3, xn2/x}
        \draw[red-edge] (\v) -- (\w);
    \foreach \v / \w in {y-end/y, y/yp1, yn3/y,    z-end/z, zn-end/z, z/zp-RB, zp-FO/zp2, zp-FO/zp3}
        \draw[red-edge] (\v) -- (\w);
    
    \foreach \v / \w in {wp-FO/wp-RB, xp-RB/xp-FO, zp-RB/zp-FO}
        \draw[blue-edge] (\v) -- (\w);
    \foreach \v / \w in {C1/wp1, xp1/C1, yp1/C1, C2/wp2, C2/xn2, zp2/C2, xp3/C3, C3/yn3, zp3/C3}
        \draw[blue-edge] (\v) -- (\w);

\end{tikzpicture}}
    \caption{
    An \scAND/\scOR graph $G_\phi$ corresponding to an E$3$-CNF formula
    $\phi = (w\vee x\vee y) \wedge (w \vee \bar{x} \vee z) \wedge (x \vee \bar{y} \vee z)$,
    taken and modified from \cite[Figure 5.1]{hearn2009games}.
    Here,
    thicker blue links have weight $2$,
    thinner red links have weight $1$, and
    the square node denotes a free-edge terminator.
    The orientation of $G_\phi$ shown above is given by $O_{\psi_\sss}$ such that
    $\psi_\sss(w,x,y,z) = (\FFF,\TTT,\TTT,\TTT)$.
    If $\psi_\ttt$ is defined as
    $\psi_\ttt(w,x,y,z) = (\FFF,\FFF,\TTT,\TTT)$,
    we can transform $O_{\psi_\sss}$ into $O_{\psi_\ttt}$;
    in particular, all links in the subtree rooted at $x$, denoted the gray area, can be made directed downward.
    }
    \label{fig:NCL}
\end{figure}
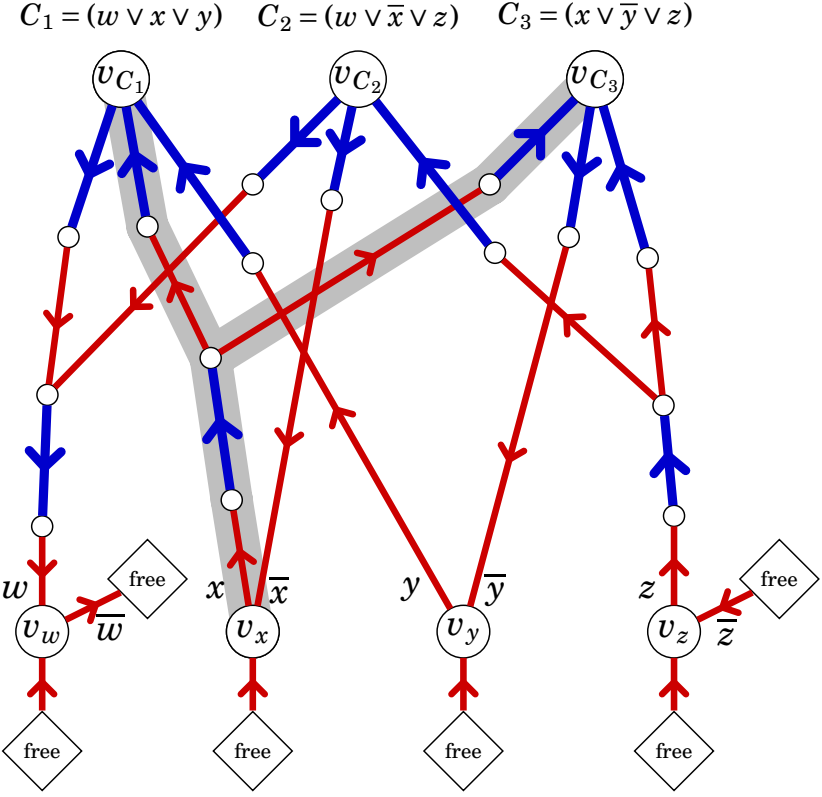

\begin{proof}[Proof of \cref{thm:NCL}]
We begin with a few remarks on the construction of $G_\phi$.
For each clause $C_j$ that includes literal $x_i$ or $\bar{x_i}$,
there is a \emph{unique path} between $v_{x_i}$ and $v_{C_j}$ without passing through any other variable or clause node,
which takes the following form:
\begin{align*}
    & \text{Output signal of a variable node } v_{x_i} \\
    & \rightarrow \text{ a \scREDBLUE node} \\
    & \rightarrow \text{ any number of (a \scFANOUT node } \rightarrow \text{ a \scREDBLUE node)} \\
    & \rightarrow \text{ a clause node } v_{C_j}.
\end{align*}
Therefore, every node of $G_\phi$ excepting variable and clause nodes
is uniquely associated with a particular literal $\ell$ of $\phi$.
Hereafter, the \emph{subtree rooted at literal $\ell$} is defined as
a subgraph of $G_\phi$ induced by
the unique paths between the corresponding variable node and
clause nodes $v_{C_j}$ for $C_j$ including $\ell$ (see also \cref{fig:NCL}).

We first prove the completeness; i.e.,
$\val_\phi(\sigma_\sss \reco \sigma_\ttt) = 1$ implies $\val_{G_\phi}(O_\sss \reco O_\ttt) = 1$.
It suffices to consider the case that
$\sigma_\sss$ and $\sigma_\ttt$ differ in exactly one variable, say, $x_i$.
Without loss of generality, we can assume that $\sigma_\sss(x_i) = \TTT$ and $\sigma_\ttt(x_i) = \FFF$; i.e.,
link $x_i$ is directed outward (resp.~inward) in $O_\sss$ (resp.~$O_\ttt$).
Since both $\sigma_\sss$ and $\sigma_\ttt$ satisfy $\phi$,
for each clause $C_j$ including either $x_i$ or $\bar{x_i}$, 
at least one of the remaining two literals of $C_j$ evaluates to $\TTT$ by both $\sigma_\sss$ and $\sigma_\ttt$.
Furthermore, in the subtree rooted at such a literal, every link is directed toward the leaves (i.e., clause nodes)
in both $O_\sss$ and $O_\ttt$.
By this observation, we can safely transform $O_\sss$ into $O_\ttt$ as follows, as desired:
\begin{itembox}[l]{\textbf{Reconfiguration from $O_\sss$ to $O_\ttt$}}
\begin{algorithmic}[1]
    \State orient every link in the subtree rooted at $x_i$ toward $v_{x_i}$, along the leaves (i.e., clause nodes including $x_i$) to the root $v_{x_i}$.
    \LComment{both links $x_i$ and $\bar{x_i}$ become directed inward.}
    \State orient every link in the subtree rooted at $x_i$ toward $v_{C_j}$ for all $C_j$ including $\bar{x_i}$,
    along the root $v_{x_i}$ to the leaves.
\end{algorithmic}
\end{itembox}

We then prove the soundness; i.e.,
$\val_\phi(\sigma_\sss \reco \sigma_\ttt) < 1-\epsilon$ implies $\val_{G_\phi}(O_\sss \reco O_\ttt) < 1-\Theta(\frac{\epsilon}{B})$.
Let $\scrO = \langle O^{(0)} = O_\sss, \ldots, O^{(\ell)} = O_\ttt \rangle$ be
any reconfiguration sequence for $(G_\phi, O_\sss, O_\ttt)$.
Construct then a sequence of truth assignments, $\sigmaS = \langle \sigma^{(i)} \rangle_{0 \leq i \leq \ell}$,
such that
each $\sigma^{(i)}(x_j)$ for variable $x_j$
is $\TTT$ if ``link $x_j$ is directed outward from $v_{x_j}$ and link $\bar{x_j}$ is directed inward to $v_{x_j}$,'' and 
is $\FFF$ otherwise.
Since $\sigmaS$ is a valid reconfiguration sequence for $(\phi, \sigma_\sss, \sigma_\ttt)$,
we have $\val_\phi(\sigmaS) < 1-\epsilon$; in particular,
there exists some $\sigma^{(i)}$ such that $\val_\phi(\sigma^{(i)}) < 1-\epsilon$.
Unfortunately,
the number of \emph{clause nodes} satisfied by $O^{(i)}$ may be not less than $m(1-\epsilon)$ 
because other nodes may be violated in lieu of clause nodes (e.g., both  $x_i$ and $\bar{x_i}$ may be directed outward).
Thus, we compare $O^{(i)}$ with an orientation $O_{\sigma^{(i)}}$ constructed from $\sigma^{(i)}$ by the 
procedure described in the reduction paragraph.
Note that $O_{\sigma^{(i)}}$ satisfies every non-clause node, while
more than $\epsilon m$ clause nodes are unsatisfied.
Transforming $O_{\sigma^{(i)}}$ into $O^{(i)}$ by reversing the directions of conflicting links one by one,
we can see that each time a non-clause node becomes unsatisfied owing to a link reversal,
we would be able to make at most $B$ clause nodes satisfied.
Consequently, we derive
\begin{align}
\begin{aligned}
    & \underbrace{\epsilon m}_{\substack{\text{\# clause nodes} \\ \text{violated by } O_{\sigma^{(i)}}}}
    - B \cdot (\text{\# non-clause nodes violated by } O^{(i)})
    < (\text{\# clause nodes violated by } O^{(i)}) \\
    & \implies
    (\text{\# nodes violated by } O^{(i)}) > \frac{\epsilon}{B} m \\
    & \implies
    \val_{G_\phi}(\scrO) \leq \val_{G_\phi}(O^{(i)}) < \frac{|V(G_\phi)| - \frac{\epsilon}{B}m}{|V(G_\phi)|} = 1-\Theta\left(\frac{\epsilon}{B}\right),
\end{aligned}
\end{align}
where we used that $|V(G_\phi)| = \Theta(m+n) = \Theta(m)$,
completing the proof.
\end{proof}

\subsection{Reconfiguration Problems on Graphs}
\label{sec:app:graph}

\paragraph{Independent Set Reconfiguration and Clique Reconfiguration.}
We first consider \prb{Independent Set Reconfiguration} and its optimization variant.
Denote by $\alpha(G)$ the size of maximum independent sets of a graph $G$.
Two independent sets of $G$ are \emph{adjacent} if 
one is obtained from the other by adding or removing a single vertex of $G$; i.e.,
their symmetric difference has size $1$.
Such a model of reconfiguration is called \emph{token addition and removal} \cite{ito2011complexity}.\footnote{
We do not consider token jumping \cite{kaminski2012complexity} or token sliding \cite{hearn2005pspace}
since they do not change the size of an independent set.
}
For a pair of independent sets $I_\sss$ and $I_\ttt$ of a graph $G$,
\prb{Independent Set Reconfiguration} asks if
there is a reconfiguration sequence 
from $I_\sss$ to $I_\ttt$
made up of independent sets only of size at least
$\min\{ |I_\sss|, |I_\ttt| \} - 1$.
For a reconfiguration sequence of independent sets of $G$,
denoted $\scrI = \langle I^{(i)} \rangle_{0 \leq i \leq \ell}$,
let
\begin{align}
    \val_G(\scrI) \triangleq \min_{I^{(i)} \in \scrI} \frac{|I^{(i)}|}{\alpha(G)-1}.
\end{align}
Here, division by $\alpha(G)-1$ is derived from the nature that
reconfiguration from $I_\sss$ to $I_\ttt$ entails a vertex removal
whenever $|I_\sss|=|I_\ttt|=\alpha(G)$ and $I_\sss \neq I_\ttt$.
Then, for a pair of independent sets $I_\sss$ and $I_\ttt$ of $G$,
\prb{Maxmin Independent Set Reconfiguration} requires to maximize 
$\val_G(\scrI)$ subject to
$\scrI = \langle I_\sss, \ldots, I_\ttt \rangle$,
which is known to be \NP-hard to approximate within any constant factor \cite{ito2011complexity}.
Subsequently,
let $\val_G(I_\sss \reco I_\ttt)$ denote
the maximum value of $\val_G(\scrI)$
over all possible reconfiguration sequences $\scrI$
from $I_\sss$ to $I_\ttt$; namely,
\begin{align}
\val_G(I_\sss \reco I_\ttt) \triangleq \max_{\scrI = \langle I_\sss, \ldots, I_\ttt \rangle} \val_G(\scrI).
\end{align}
For every $0 \leq s \leq c \leq 1$,
\prb{Gap$_{c,s}$ Independent Set Reconfiguration} requests to distinguish whether
$\val_G(I_\sss \reco I_\ttt) \geq c$ or $\val_G(I_\sss \reco I_\ttt) < s$.
The proof of the following corollary is based on a Karp reduction due to \cite{hearn2005pspace,hearn2009games}.

\begin{corollary}\label{cor:NCL-IS}
\begin{leftbar}
For every $\epsilon \in (0,1)$,
there exists a gap-preserving reduction from 
\prb{Gap$_{1,1-\epsilon}$ Nondeterministic Constraint Logic} to
\prb{Gap$_{1,1-\Theta(\epsilon)}$ Independent Set Reconfiguration}.
In particular, \prb{Maxmin Independent Set Reconfiguration}
is \PSPACE-hard to approximate within constant factor under \cref{hyp:RIH}.
\end{leftbar}
\end{corollary}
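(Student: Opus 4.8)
The plan is to run the classical Karp reduction of {Hearn and Demaine}~\cite{hearn2005pspace,hearn2009games} from \prb{Nondeterministic Constraint Logic} to \prb{Independent Set Reconfiguration} and to check that it behaves well on approximate instances. Given an \scAND/\scOR graph $G$, that reduction builds a graph $H$ by gluing a constant-size \scAND-gadget or \scOR-gadget for every node of $G$ along constant-size port gadgets for every link of $G$; since $G$ is $3$-regular, $|V(H)| = \Theta(|V(G)|)$ and, crucially, the maximum independent sets of $H$ are exactly the images of the satisfying orientations of $G$, so $\alpha(H)$ is a fixed value equal to $\Theta(|V(G)|)$. For each satisfying orientation $O$ of $G$ one obtains a maximum independent set $I_O$ of $H$ in which, for every link $e=\{u,v\}$, exactly one of the two port vertices ``$e$ points at $u$''$/$``$e$ points at $v$'' is taken according to the direction of $e$ under $O$; set $I_\sss \triangleq I_{O_\sss}$ and $I_\ttt \triangleq I_{O_\ttt}$. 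A single link reversal that keeps both endpoints satisfied is simulated by a bounded-length sequence of token additions and removals inside the two affected gadgets, in which every intermediate independent set has size at least $\alpha(H)-1$ (this is precisely the token-sliding argument of \cite{hearn2005pspace,hearn2009games}, each slide being a removal followed by an addition). Hence if $\val_G(O_\sss \reco O_\ttt)=1$, concatenating these simulations along an optimal orientation sequence yields a reconfiguration sequence $\scrI$ of independent sets of $H$ all of size at least $\alpha(H)-1$, so $\val_H(\scrI)=1$ and the perfect completeness is preserved.

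For the soundness I would argue by contraposition. Assume $\val_H(I_\sss \reco I_\ttt) \geq 1-\delta$ for a constant $\delta$ to be fixed as $\Theta(\epsilon)$, and take a witnessing sequence $\scrI = \langle I^{(0)} = I_\sss, \ldots, I^{(\ell)} = I_\ttt \rangle$, so every $I^{(i)}$ is short of $\alpha(H)$ by at most $k_i \triangleq \alpha(H) - |I^{(i)}| \leq 1 + \delta(\alpha(H)-1) = \bigO(\delta\,|V(G)|)$ tokens. Round each $I^{(i)}$ to an orientation $O^{(i)}$ of $G$: for each link $e$, if exactly one of its two port vertices lies in $I^{(i)}$ then orient $e$ in the corresponding direction, and otherwise (both absent; both present is impossible) orient $e$ arbitrarily. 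Consecutive $O^{(i)},O^{(i+1)}$ differ in at most one link because $I^{(i)}$ and $I^{(i+1)}$ differ in one token, and $O^{(0)}=O_\sss$, $O^{(\ell)}=O_\ttt$ by the choice of $I_\sss,I_\ttt$, so $\scrO \triangleq \langle O^{(i)} \rangle_{0\leq i\leq\ell}$ is a valid reconfiguration sequence for $(G,O_\sss,O_\ttt)$. The heart of the matter is the bound $\#\{\text{nodes of }G\text{ violated by }O^{(i)}\} = \bigO(k_i)$: by the rigidity of the \scAND/\scOR gadgets (each has constant size and, up to symmetry, a unique maximum internal configuration encoding a \emph{satisfying} local assignment of the incident links), an independent set that is $k_i$ tokens short of maximum fails to restrict to a maximum independent set on at most $\bigO(k_i)$ of the $\bigO(|V(G)|)$ gadgets, and on every other gadget the encoded port configuration is valid and agrees with the rounding, so the corresponding node is satisfied by $O^{(i)}$. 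Therefore $\val_G(\scrO) \geq 1 - \bigO(k_i)/|V(G)| \geq 1-\bigO(\delta)$, hence $\val_G(O_\sss \reco O_\ttt) \geq 1-\bigO(\delta)$; choosing the constant in $\delta = \Theta(\epsilon)$ small enough, and using that bounded-size instances are decidable to absorb the additive $+1$ in $k_i$, this is at least $1-\epsilon$, contradicting the assumption that the input is a \NO instance.

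Finally, composing this gap-preserving reduction with \cref{thm:NCL} and \cref{thm:E3SAT}: under \cref{hyp:RIH} there is a fixed constant $\epsilon_0\in(0,1)$ for which \prb{Gap$_{1,1-\Theta(\epsilon_0)}$ Independent Set Reconfiguration} is \PSPACE-hard, i.e.\ a gap between completeness $1$ and a soundness bounded away from $1$ by an absolute constant, which rules out under RIH any polynomial-time approximation of \prb{Maxmin Independent Set Reconfiguration} within some fixed constant factor. I expect the main obstacle to be the $\bigO(k_i)$ bound on violated nodes: it requires pinning down the precise ``rigidity'' of the Hearn--Demaine gadgets — that near-maximum independent sets are locally near-maximum on all but $\bigO(k_i)$ gadgets and that local maximality on a gadget forces a satisfying, rounding-consistent port configuration — and, because each port vertex is shared between the two node gadgets at a link's endpoints, arranging the accounting so that per-gadget token deficits can be charged in a nearly disjoint way.
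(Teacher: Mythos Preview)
Your approach is correct and coincides with the paper's: both run the Hearn--Demaine gadget reduction, take $I_\sss=I_{O_\sss}$, $I_\ttt=I_{O_\ttt}$, obtain completeness from the known local simulation of a single link reversal, and prove soundness by contraposition via rounding each $I^{(i)}$ to an orientation $O^{(i)}$ and bounding the number of violated nodes by the token deficit $\alpha(H)-|I^{(i)}|$.

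The one place where the paper is sharper is exactly the step you flag as the obstacle. Rather than arguing through ``gadget rigidity'' and then worrying about double-charging shared port vertices, the paper invokes the Bonsma--Cereceda token-edge/token-triangle description of $H$: the vertex set of $H$ is partitioned into $n_e=\frac{3}{2}(n_{\scAND}+n_{\scOR})$ copies of $K_2$ (token edges, one per link of $G$) and $n_t=n_{\scOR}$ copies of $K_3$ (token triangles, one inside each \scOR gadget), with $\alpha(H)=n_e+n_t$. Since each such clique contributes at most one vertex to any independent set, the number of \emph{empty} token edges/triangles is at most $\alpha(H)-|I^{(i)}|$. A short case check on the two gadgets shows that a node $v$ can be violated by $O^{(i)}$ only if some token edge/triangle meeting $v$'s gadget is empty; as each token edge/triangle meets at most two gadgets, at most $2(\alpha(H)-|I^{(i)}|)$ nodes are violated. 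This bypasses the shared-port charging issue entirely (the cliques, unlike the gadgets, \emph{do} partition $V(H)$) and gives the explicit constant $\val_G(O_\sss\reco O_\ttt)\geq 1-6\epsilon$. Your rigidity argument would also go through, but the clique partition is the clean way to do the accounting you were anticipating.
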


As an immediate corollary,
\prb{Maxmin Clique Reconfiguration} is \PSPACE-hard to approximate under RIH.
\begin{corollary}\label{cor:IS-C}
\begin{leftbar}
For every $\epsilon \in (0,1)$,
there exists a gap-preserving reduction from 
\prb{Gap$_{1,1-\epsilon}$ Nondeterministic Constraint Logic}
to 
\prb{Gap$_{1,1-\Theta(\epsilon)}$ Clique Reconfiguration}.
In particular, \prb{Maxmin Clique Reconfiguration}
\PSPACE-hard to approximate within constant factor under \cref{hyp:RIH}.
\end{leftbar}
\end{corollary}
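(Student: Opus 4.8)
The plan is to obtain the reduction by composing \cref{cor:NCL-IS} with the classical equivalence between independent sets and cliques under graph complementation. Given an instance $(G, I_\sss, I_\ttt)$ of \prb{Gap$_{1,1-\Theta(\epsilon)}$ Independent Set Reconfiguration} produced by \cref{cor:NCL-IS}, I would map it to $(\bar{G}, I_\sss, I_\ttt)$, where $\bar{G}$ denotes the complement graph of $G$ (computable in polynomial time) and $I_\sss, I_\ttt$ are regarded as subsets of $V(\bar{G}) = V(G)$. The key observations are that a vertex subset $S \subseteq V(G)$ is an independent set of $G$ if and only if it is a clique of $\bar{G}$, and consequently $\alpha(G) = \omega(\bar{G})$, where $\omega(\cdot)$ denotes the clique number. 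Moreover, under the analogous token-addition-and-removal model, the adjacency relation depends only on the symmetric difference of two feasible solutions having size $1$; since the ground set is unchanged, a sequence $\langle I^{(0)}, \ldots, I^{(\ell)} \rangle$ is a reconfiguration sequence of independent sets of $G$ precisely when it is a reconfiguration sequence of cliques of $\bar{G}$, and $I_\sss, I_\ttt$ remain feasible endpoints.

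Next I would verify that the objective value is preserved exactly. Defining $\val_{\bar{G}}(\scrI) \triangleq \min_{I^{(i)} \in \scrI} \frac{|I^{(i)}|}{\omega(\bar{G}) - 1}$ in analogy with the independent-set normalization, the identity $\alpha(G) = \omega(\bar{G})$ gives $\val_G(\scrI) = \val_{\bar{G}}(\scrI)$ for every common reconfiguration sequence $\scrI$, and hence $\val_G(I_\sss \reco I_\ttt) = \val_{\bar{G}}(I_\sss \reco I_\ttt)$. Therefore the map sends \YES instances to \YES instances and \NO instances to \NO instances with the \emph{same} completeness and soundness parameters, so it is a gap-preserving reduction with no loss. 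Composing it after the reduction of \cref{cor:NCL-IS} yields the claimed gap-preserving reduction from \prb{Gap$_{1,1-\epsilon}$ Nondeterministic Constraint Logic} to \prb{Gap$_{1,1-\Theta(\epsilon)}$ Clique Reconfiguration}.

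Finally, the \PSPACE-hardness statement follows by chaining: under \cref{hyp:RIH}, \cref{thm:E3SAT} together with \cref{thm:NCL} makes \prb{Gap$_{1,1-\epsilon}$ Nondeterministic Constraint Logic} \PSPACE-hard for some constant $\epsilon \in (0,1)$, and the reduction above transfers this hardness to \prb{Maxmin Clique Reconfiguration}, ruling out approximation within any constant factor. I do not anticipate a genuine obstacle; the only point requiring care is to fix the normalization of \prb{Maxmin Clique Reconfiguration} (division by $\omega(G)-1$) so that it matches the independent-set normalization under complementation, after which the equality of objective values, and thus gap preservation, is exact.
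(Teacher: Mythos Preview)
Your proposal is correct and matches the paper's approach: the paper treats \cref{cor:IS-C} as an immediate corollary of \cref{cor:NCL-IS} via the standard complement-graph bijection between independent sets and cliques, without giving a separate proof. Your careful verification that the token-addition-and-removal adjacency and the normalized objective value transfer exactly under complementation is precisely the content that makes this ``immediate.''
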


\begin{proof}[Proof of \cref{cor:NCL-IS}]
We show that a Karp reduction from
\prb{Nondeterministic Constraint Logic} to \prb{Independent Set Reconfiguration}
due to \cite{hearn2005pspace,hearn2009games} is indeed gap preserving.
Let $(G, O_\sss, O_\ttt)$ be an instance of \prb{Maxmin Nondeterministic Constraint Logic}, where
$G=(V,E)$ is an \scAND/\scOR graph made up of $n_{\scAND}$ \scAND nodes and $n_{\scOR}$ \scOR nodes, and
$O_\sss$ and $O_\ttt$ satisfy $G$.
Construct a graph $G' = (V',E')$ 
by replacing
each \scAND node by an \scAND gadget and
each \scOR node by an \scOR gadget due to \cite{hearn2005pspace,hearn2009games},
which are drawn in \cref{fig:gadget}.
According to an interpretation of \scAND/\scOR graphs due to {Bonsma and Cereceda}~\cite{bonsma2009finding},
$G'$ consists of \emph{token edges}, each of which is a copy of $K_2$ across the border of gadgets, and
\emph{token triangles}, each of which is a copy of $K_3$ appearing only in an \scOR gadget.
Observe easily that
the number of token edges is $n_e = \frac{3}{2} (n_{\scAND} + n_{\scOR})$,
the number of token triangles is $n_t = n_{\scOR}$, and
thus $|V'| = 2n_e + 3n_t = 3n_{\scAND} + 6n_{\scOR}$.
Given a satisfying orientation $O$ of $G$,
we can construct a maximum independent set $I_O$ of $G'$ as follows \cite{hearn2005pspace,hearn2009games}:
Of each token edge $e$ across the gadgets corresponding to nodes $v$ and $w$,
we choose $e$'s endpoint on $w$'s side (resp.~$v$'s side)
if link $(v,w)$ is directed toward $v$ (resp.~$w$) under $O$;
afterwards, we can select one vertex from each token triangle since
at least one blue link of the respective \scOR node must be directed inward.
Since $I_O$ includes one vertex from each token edge/triangle,
it holds that $|I_O| = \alpha(G') = n_e + n_t = \frac{3}{2}n_{\scAND} + \frac{5}{2}n_{\scOR}$.
Constructing $I_\sss$ from $O_\sss$ and $I_\ttt$ from $O_\ttt$ according to this procedure,
we obtain an instance $(G', I_\sss, I_\ttt)$ of \prb{Maxmin Independent Set Reconfiguration},
which completes the reduction.

Since the completeness follows from \cite{hearn2005pspace,hearn2009games},
we prove (the contraposition of) the soundness; i.e.,
$\val_{G'}(I_\sss \reco I_\ttt) \geq 1-\epsilon$ implies $\val_G(O_\sss \reco O_\ttt) \geq 1-6\epsilon$
for $\epsilon \in (0, \frac{1}{6})$ and 
sufficiently large $n_{\scAND} + n_{\scOR}$.
Suppose we have a reconfiguration sequence $\scrI = \langle I^{(i)} \rangle_{0 \leq i \leq \ell}$ for $(G',I_\sss,I_\ttt)$ such that
$\val_{G'}(\scrI) \geq 1- \epsilon$.
Construct then a sequence of orientations, $\scrO = \langle O^{(i)} \rangle_{0 \leq i \leq \ell}$,
where each $O^{(i)}$ is defined as follows:
for each token edge $e$ across the gadgets corresponding to nodes $v$ and $w$,
link $(v,w)$ is made directed toward $v$ if $I^{(i)}$
includes $e$'s endpoint on $w$'s side, and
is made directed toward $w$ otherwise.
By definition,
if $I^{(i)}$ does not intersect with
a particular token edge/triangle
(in particular, $|I^{(i)}| < \alpha(G')$),
$O^{(i)}$ may not satisfy nodes of $G$ corresponding to the gadgets overlapping with that token edge/triangle.
On the other hand, because each token edge/triangle intersects up to two gadgets, 
at most $2(\alpha(G') - |I^{(i)}|)$ nodes may be
unsatisfied.
Consequently, using that
$\min_{I^{(i)} \in \scrI} |I^{(i)}| \geq (1-\epsilon)(\alpha(G)-1)$, we get
\begin{align}
\begin{aligned}
    \val_G(\scrO)
    & \geq \min_{O^{(i)} \in \scrO} \frac{|V| - \text{(\# nodes violated by } O^{(i)}\text{)}}{|V|} \\
    & \geq \frac{|V| - 2(\alpha(G') - \min_{I^{(i)} \in \scrI} |I^{(i)}|)}{|V|} \\
    & \geq \frac{|V| - 2\epsilon \cdot \alpha(G') - 2(1-\epsilon)}{|V|} \\
    & = \frac{(n_{\scAND} + n_{\scOR}) - 2\epsilon \cdot (\frac{3}{2}n_{\scAND} + \frac{5}{2}n_{\scOR}) - 2(1-\epsilon)}{n_{\scAND} + n_{\scOR}} \\
    & = \frac{(1-3\epsilon)n_{\scAND} + (1-5\epsilon)n_{\scOR} -2(1-\epsilon)}{n_{\scAND} + n_{\scOR}} \\
    & \geq 1-6\epsilon
    \quad\quad\text{ for all } n_{\scAND}+n_{\scOR} \geq \frac{2}{\epsilon},
\end{aligned}
\end{align}
which completes the proof.
\end{proof}

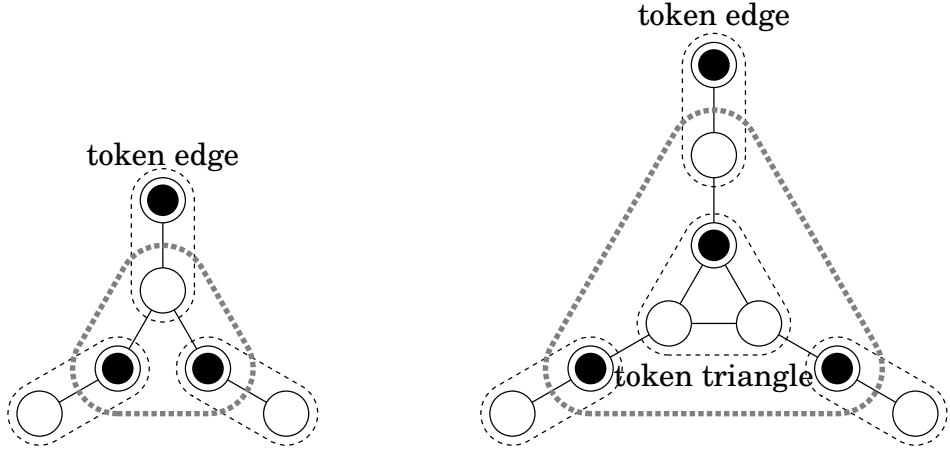
\begin{figure}[tbp]
    \centering
    \null\hfill
    \scalebox{0.6}{\begin{tikzpicture}
    \tikzset{edge/.style={black, thick}};
    \tikzset{node/.style={circle, thick, draw=black, fill=white, font=\LARGE, text centered, inner sep=0, outer sep=0, minimum size=1cm}};
    \tikzset{token/.style={circle, thick, fill=black, font=\LARGE, text centered, inner sep=0, outer sep=0, minimum size=0.7cm}};

    \def\sq{1.7320508}
    \node[node] at (-1,0) (x1){};
    \node[node] at (-1-\sq,-1) (x2){};
    \node[node] at (+1,0) (y1){};
    \node[node] at (+1+\sq,-1) (y2){};
    \node[node] at (0,\sq) (z1){};
    \node[node] at (0,2+\sq) (z2){};
    
    \node[token] at (-1,0) {};
    \node[token] at (+1,0) {};
    \node[token] at (0,2+\sq) {};
    
    \node[text centered] at (0,4.7) {\LARGE token edge};
    
    \foreach \v / \w in {y1/z1, z1/x1, x1/x2, y1/y2, z1/z2}
        \draw[edge] (\v)--(\w);
    
    \draw[dashed, draw=black, thick] \convexpath{x1,x2}{0.7cm};
    \draw[dashed, draw=black, thick] \convexpath{y1,y2}{0.7cm};
    \draw[dashed, draw=black, thick] \convexpath{z1,z2}{0.7cm};
    \draw[dotted, draw=gray, line width=1.2mm] \convexpath{x1,z1,y1}{1cm};
\end{tikzpicture}}
    \hfill
    \scalebox{0.6}{\begin{tikzpicture}
    \tikzset{edge/.style={black, thick}};
    \tikzset{node/.style={circle, thick, draw=black, fill=white, font=\LARGE, text centered, inner sep=0, outer sep=0, minimum size=1cm}};
    \tikzset{token/.style={circle, thick, fill=black, font=\LARGE, text centered, inner sep=0, outer sep=0, minimum size=0.7cm}};

    \def\sq{1.7320508}
    \node[node] at (-1,0) (x1){};
    \node[node] at (-1-\sq,-1) (x2){};
    \node[node] at (-1-2*\sq,-2) (x3){};
    \node[node] at (+1,0) (y1){};
    \node[node] at (+1+\sq,-1) (y2){};
    \node[node] at (+1+2*\sq,-2) (y3){};
    \node[node] at (0,\sq) (z1){};
    \node[node] at (0,2+\sq) (z2){};
    \node[node] at (0,4+\sq) (z3){};
    
    \node[token] at (-1-\sq,-1) {};
    \node[token] at (+1+\sq,-1) {};
    \node[token] at (0,\sq) {};
    \node[token] at (0,4+\sq) {};
    
    \node[text centered] at (0,-1.2) {\LARGE token triangle};
    \node[text centered] at (0,6.8) {\LARGE token edge};
    
    \foreach \v / \w in {x1/y1, y1/z1, z1/x1, x1/x2, x2/x3, y1/y2, y2/y3, z1/z2, z2/z3}
        \draw[edge] (\v)--(\w);
    
    \draw[dashed, draw=black, thick] \convexpath{x2,x3}{0.7cm};
    \draw[dashed, draw=black, thick] \convexpath{y2,y3}{0.7cm};
    \draw[dashed, draw=black, thick] \convexpath{z2,z3}{0.7cm};
    \draw[dashed, draw=black, thick] \convexpath{x1,z1,y1}{0.7cm};
    \draw[dotted, draw=gray, line width=1.2mm] \convexpath{x2,z2,y2}{1cm};
\end{tikzpicture}}
    \hfill\null
    \caption{
        \scAND gadget (left) and \scOR gadget (right), taken and modified from \cite[Figure 9.14]{hearn2009games}.
        Dashed black lines correspond to token edges or token triangles.
        Dotted gray lines represent gadget borders.
    }
    \label{fig:gadget}
\end{figure}

\paragraph{Vertex Cover Reconfiguration.}
We conclude this section with \prb{Minmax Vertex Cover Reconfiguration},
which is known to be $2$-factor approximable \cite{ito2011complexity}.
Denote by $\beta(G)$ the size of minimum vertex covers of a graph $G$.
Just like in \prb{Independent Set Reconfiguration},
we adopt the token addition and removal model to define the adjacency relation; that is,
two vertex covers are \emph{adjacent} if their symmetric difference has size $1$.
For a pair of vertex covers $C_\sss$ and $C_\ttt$ of a graph $G$,
\prb{Vertex Cover Reconfiguration} asks if
there is a reconfiguration sequence from $C_\sss$ to $C_\ttt$
made up of vertex covers of size at most
$\max\{|C_\sss|, |C_\ttt|\}+1$.
We further use analogous notations to those in \prb{Maxmin Independent Set Reconfiguration}:
Let
\begin{align}
    \val_G(\scrC) \triangleq \max_{C^{(i)} \in \scrC} \frac{|C^{(i)}|}{\beta(G)+1}
\end{align}
for a reconfiguration sequence of vertex covers of $G$,
$\scrC = \langle C^{(i)} \rangle_{0 \leq i \leq \ell}$,
and let
\begin{align}
    \val_G(C_\sss \reco C_\ttt) \triangleq
    \min_{\scrC = \langle C_\sss, \ldots, C_\ttt \rangle}
    \val_G(\scrC)
\end{align}
for a pair of vertex covers $C_\sss$ and $C_\ttt$ of $G$.
Then,
for a pair of vertex covers $C_\sss$ and $C_\ttt$ of $G$,
\prb{Minmax Vertex Cover Reconfiguration}
requires to minimize $\val_G(\scrC)$ subject to $\scrC = \langle C_\sss, \ldots, C_\ttt \rangle$, whereas
for every $1 \leq c \leq s$,
\prb{Gap$_{c,s}$ Vertex Cover Reconfiguration} requests to distinguish whether
$\val_G(C_\sss \reco C_\ttt) \leq c$ or $\val_G(C_\sss \reco C_\ttt) > s$.
The proof of the following result uses a gap-preserving reduction from
\prb{Maxmin Independent Set Reconfiguration} obtained from \cref{cor:NCL-IS}.

\begin{corollary}\label{cor:IS-VC}
\begin{leftbar}
For every $\epsilon \in (0,1)$,
there exists a gap-preserving reduction from
\prb{Gap$_{1,1-\epsilon}$ Nondeterministic Constraint Logic}
to
\prb{Gap$_{1,1 + \Theta(\epsilon)}$ Vertex Cover Reconfiguration}.
In particular, \prb{Minmax Vertex Cover Reconfiguration}
\PSPACE-hard to approximate within constant factor under \cref{hyp:RIH}.
\end{leftbar}
\end{corollary}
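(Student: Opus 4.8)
The plan is to compose the gap-preserving reduction of \cref{cor:NCL-IS} with the classical complementation between independent sets and vertex covers. Recall that $C \subseteq V(H)$ is a vertex cover of a graph $H$ if and only if $V(H) \setminus C$ is an independent set, so $\beta(H) = |V(H)| - \alpha(H)$; moreover, adding (resp.\ removing) a vertex to $C$ corresponds exactly to removing (resp.\ adding) that vertex from $V(H) \setminus C$, so the token addition and removal adjacency is preserved under complementation. Hence, given an instance $(G, O_\sss, O_\ttt)$ of \prb{Maxmin Nondeterministic Constraint Logic}, I would first apply \cref{cor:NCL-IS} to obtain an instance $(G', I_\sss, I_\ttt)$ of \prb{Maxmin Independent Set Reconfiguration} with $I_\sss, I_\ttt$ maximum independent sets of $G'$, and then output $(G', C_\sss, C_\ttt)$ with $C_\sss \triangleq V(G') \setminus I_\sss$ and $C_\ttt \triangleq V(G') \setminus I_\ttt$, which are minimum vertex covers of $G'$.

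The key identity is that complementation is a bijection between reconfiguration sequences $\scrI = \langle I^{(i)} \rangle$ for $(G', I_\sss, I_\ttt)$ and reconfiguration sequences $\scrC = \langle V(G') \setminus I^{(i)} \rangle$ for $(G', C_\sss, C_\ttt)$, under which $|C^{(i)}| = |V(G')| - |I^{(i)}|$; using $\beta(G') + 1 = |V(G')| - (\alpha(G')-1)$ this gives
\begin{align}
\begin{aligned}
    \val_{G'}(\scrC)
    & = \frac{|V(G')| - \min_i |I^{(i)}|}{\beta(G')+1} \\
    & = \frac{|V(G')| - \val_{G'}(\scrI)\cdot(\alpha(G')-1)}{|V(G')| - (\alpha(G')-1)}.
\end{aligned}
\end{align}
For completeness, if $\val_G(O_\sss \reco O_\ttt) = 1$, then \cref{cor:NCL-IS} gives $\val_{G'}(I_\sss \reco I_\ttt) \geq 1$, so some $\scrI$ has $\min_i |I^{(i)}| \geq \alpha(G')-1$; its complement $\scrC$ then has $\max_i |C^{(i)}| \leq \beta(G')+1$, whence $\val_{G'}(C_\sss \reco C_\ttt) \leq 1$. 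For soundness I would argue by contraposition: if $\val_G(O_\sss \reco O_\ttt) < 1-\epsilon$, then \cref{cor:NCL-IS} gives $\val_{G'}(I_\sss \reco I_\ttt) < 1-\delta$ for some $\delta = \Theta(\epsilon)$, so \emph{every} reconfiguration sequence $\scrC$ for $(G', C_\sss, C_\ttt)$ has a complement $\scrI$ with $\val_{G'}(\scrI) < 1-\delta$, and the displayed identity yields
\begin{align}
    \val_{G'}(\scrC) > 1 + \frac{\delta\,(\alpha(G')-1)}{|V(G')| - (\alpha(G')-1)} = 1 + \frac{\delta\,(\alpha(G')-1)}{\beta(G')+1}.
\end{align}
Since the \scAND/\scOR-gadget construction underlying \cref{cor:NCL-IS} makes both $\alpha(G')$ and $\beta(G')$ linear in $|V(G')|$ (explicitly $\alpha(G') = \tfrac{3}{2}n_{\scAND} + \tfrac{5}{2}n_{\scOR}$ and $\beta(G') = \tfrac{3}{2}n_{\scAND} + \tfrac{7}{2}n_{\scOR}$), the ratio $\frac{\alpha(G')-1}{\beta(G')+1}$ is bounded below by a positive absolute constant for all sufficiently large instances, so $\val_{G'}(C_\sss \reco C_\ttt) > 1 + \Theta(\epsilon)$, as required. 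The ``in particular'' claim then follows because \prb{Gap$_{1,1-\epsilon}$ Nondeterministic Constraint Logic} is \PSPACE-hard under \cref{hyp:RIH} by \cref{thm:NCL} combined with \cref{thm:E3SAT}.

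I expect the only delicate point to be the bookkeeping of the additive $\pm 1$ terms and the ``sufficiently large $n_{\scAND}+n_{\scOR}$'' caveat inherited from \cref{cor:NCL-IS}, so that the multiplicative $\Theta(\epsilon)$ gap survives the passage from the normalization ``$\min$ of independent-set sizes over $\alpha(G')-1$'' to ``$\max$ of vertex-cover sizes over $\beta(G')+1$''; the one substantive observation making this work is that $\alpha(G')$ and $\beta(G')$ are both $\Theta(|V(G')|)$, which is immediate from the gadget counts and keeps the constant hidden in $\Theta(\cdot)$ from degrading.
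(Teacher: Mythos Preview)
Your proposal is correct and follows essentially the same approach as the paper: compose the reduction of \cref{cor:NCL-IS} with the complementation $C = V(G') \setminus I$, then use that $\alpha(G')$ and $\beta(G')$ are both $\Theta(|V(G')|)$ in the gadget graph to convert the $\Theta(\epsilon)$ gap for independent sets into a $\Theta(\epsilon)$ gap for vertex covers. The paper carries out the same calculation in terms of the token-edge and token-triangle counts $n_e, n_t$ rather than $n_{\scAND}, n_{\scOR}$, arriving at the explicit bound $\val_{G'}(\scrC) > 1 + \epsilon/3$ for $n_e + n_t \geq 4$, but the substance is identical (your phrase ``by contraposition'' is a slip---you then argue the direct implication, which is what is needed).
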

\begin{proof}
We show that a Karp reduction from \prb{Independent Set Reconfiguration} to
\prb{Vertex Cover Reconfiguration} due to \cite{hearn2005pspace,hearn2009games} is indeed gap preserving.
Let $(G, I_\sss, I_\ttt)$ be an instance of \prb{Independent Set Reconfiguration}, where
$G = (V,E)$ is a restricted graph obtained from \cref{cor:NCL-IS} built up with $n_e$ token edges and $n_t$ token triangles
such that $\alpha(G) = n_e + n_t$, and
$I_\sss$ and $I_\ttt$ are maximum independent sets of $G$.
Recall that $|V| = 2n_e + 3n_t$, and thus $\beta(G) = |V| - \alpha(G) = n_e + 2n_t$.
Construct then an instance $(G, C_\sss \triangleq V \setminus I_\sss, C_\ttt \triangleq V \setminus I_\ttt)$ of
\prb{Maxmin Vertex Cover Reconfiguration}.
If there exists a reconfiguration sequence $\scrI = \langle I^{(i)} \rangle_{0 \leq i \leq \ell}$ for $(G, I_\sss, I_\ttt)$ such that $\val_G(\scrI) = 1$,
its complement,
$\scrC = \langle C^{(i)} \rangle_{0 \leq i \leq \ell}$
such that $C^{(i)} \triangleq V \setminus I^{(i)}$ for all $i$,
satisfies
\begin{align}
    \val_G(\scrC)
    = \max_{C^{(i)} \in \scrC} \frac{|C^{(i)}|}{\beta(G) + 1}
    = \frac{|V| - \min_{I^{(i)} \in \scrI} |I^{(i)}|}{\beta(G)+1}
    \leq \frac{|V| - (\alpha(G) - 1)}{|V| - \alpha(G) + 1} = 1,
\end{align}
which finishes the completeness.
Suppose for a reconfiguration sequence $\scrC = \langle C^{(i)}\rangle_{0 \leq i \leq \ell}$ for $(G,C_\sss,C_\ttt)$,
its complement,
$\scrI = \langle I^{(i)} \rangle_{0 \leq i \leq \ell}$
such that $I^{(i)} \triangleq V \setminus C^{(i)}$ for all $i$,
satisfies that $\val_G(\scrI) < 1-\epsilon$.
Since $\min_{I^{(i)} \in \scrI} |I^{(i)}| < (1-\epsilon)(\alpha(G)-1)$,
we get
\begin{align}
\begin{aligned}
    \val_G(\scrC)
    & = \frac{|V| - \min_{I^{(i)} \in \scrI} |I^{(i)}|}{\beta(G)+1} \\
    & > \frac{|V| - (1-\epsilon)(\alpha(G)-1)}{\beta(G) + 1} \\
    & = \frac{(1+\epsilon)n_e + (2+\epsilon)n_t + (1-\epsilon)}{n_e + 2n_t + 1}
    \geq 1 + \frac{\epsilon}{3} \text{\quad for all } n_e + n_t \geq 4,
\end{aligned}
\end{align}
which completes the soundness.
\end{proof}

\subsection{{\normalfont \prb{Maxmin $2$-SAT$(B)$ Reconfiguration}}}
\label{sec:app:2-SAT}

We show that \prb{Maxmin $2$-SAT Reconfiguration} of bounded occurrence
is 
\PSPACE-hard to approximate under RIH as a corollary of
\cref{thm:E3SAT}.
Therefore, we have a simple analogy between
\prb{$2$-SAT} and its reconfiguration version:
One one hand, \prb{$2$-SAT Reconfiguration} \cite{ito2011complexity} as well as \prb{$2$-SAT} are solvable in polynomial time;
on the other hand,
both \prb{Maxmin $2$-SAT Reconfiguration} and \prb{Max $2$-SAT} \cite{hastad2001some} are hard to approximate.
\begin{corollary}\label{cor:E3SAT-2SAT}
\begin{leftbar}
For every $B \in \bbN$ and $\epsilon \in (0,1)$,
there exists a gap-preserving reduction from
\prb{Gap$_{1,1-\epsilon}$ E$3$-SAT$(B)$ Reconfiguration} to
\prb{Gap$_{\frac{7}{10}, \frac{7}{10}-\epsilon}$ $2$-SAT$(4B)$ Reconfiguration}.
In particular, \prb{Maxmin $2$-SAT Reconfiguration} of bounded occurrence is
\PSPACE-hard to approximate within constant factor under \cref{hyp:RIH}.
\end{leftbar}
\end{corollary}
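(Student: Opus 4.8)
The plan is to carry over the classical gap‑preserving reduction from \prb{Max E$3$-SAT} to \prb{Max $2$-SAT} of Garey, Johnson, and Stockmeyer into the reconfiguration world. Given an instance $(\phi,\sigma_\sss,\sigma_\ttt)$ of \prb{Maxmin E$3$-SAT$(B)$ Reconfiguration} with $m$ clauses over variables $x_1,\ldots,x_n$, replace each clause $C_j=(\ell_1\vee\ell_2\vee\ell_3)$ by a \emph{block} of ten (at most binary) clauses
\begin{align*}
(\ell_1),\ (\ell_2),\ (\ell_3),\ (w_j),\ (\bar\ell_1\vee\bar\ell_2),\ (\bar\ell_2\vee\bar\ell_3),\ (\bar\ell_3\vee\bar\ell_1),\\
(\ell_1\vee\bar w_j),\ (\ell_2\vee\bar w_j),\ (\ell_3\vee\bar w_j),
\end{align*}
where $w_j$ is a fresh variable, and call the resulting $2$-CNF formula $\phi'$ (so $\phi'$ has $10m$ clauses). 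The relevant gadget facts are: fixing the values of $\ell_1,\ell_2,\ell_3$, if $t\geq 1$ of them are $\TTT$ then exactly seven clauses of the block can be made $\TTT$ simultaneously, whereas if $t=0$ only six can; moreover when $t=2$ \emph{both} choices of $w_j$ attain seven, while when $t=1$ (resp.\ $t=3$) the choice $w_j=\FFF$ (resp.\ $w_j=\TTT$) attains seven. Define $\sigma'_\sss$ to extend $\sigma_\sss$ by setting $w_j\triangleq\FFF$ when $C_j$ has exactly one $\TTT$ literal under $\sigma_\sss$ and $w_j\triangleq\TTT$ otherwise, and define $\sigma'_\ttt$ analogously; then $\val_{\phi'}(\sigma'_\sss)=\val_{\phi'}(\sigma'_\ttt)=\tfrac{7}{10}$. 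A count of occurrences shows that each original variable appears four times in every block it used to appear in (twice positively, twice negatively) and each $w_j$ appears four times, so $\phi'$ is a $2$-SAT$(4B)$ instance.

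For the completeness ($\val_\phi(\sigma_\sss\reco\sigma_\ttt)=1\Rightarrow\val_{\phi'}(\sigma'_\sss\reco\sigma'_\ttt)\geq\tfrac{7}{10}$) I would mirror a perfect $\phi$-reconfiguration sequence step by step, maintaining the invariant ``$w_j=\FFF$ whenever $C_j$ currently has exactly one $\TTT$ literal and $w_j=\TTT$ whenever it has three''. To simulate one flip of an original variable $x_i$ between two satisfying assignments, first perform, for each clause $C_j$ containing $x_i$ or $\bar x_i$ whose number of $\TTT$ literals is about to move into $\{1\}$ or $\{3\}$, the corresponding $w_j$-flip — legal because it happens while that count equals $2$, where the block stays at seven satisfied clauses — and then flip $x_i$ itself, which moves each affected count only within $\{1,2\}$ or $\{2,3\}$ with $w_j$ already correct, so every block stays at seven. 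A final batch of $w_j$-flips (again in the $t=2$ ``safe zone'') reconciles the canonical choices at $\sigma'_\ttt$. Every intermediate assignment then satisfies exactly $7m$ of the $10m$ clauses, as needed. Verifying that none of these auxiliary $w_j$-moves ever drops a block to six satisfied clauses is the main obstacle, and it is precisely what the $t=2$ safe zone of the gadget is designed to handle.

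For the soundness, given any $\phi'$-reconfiguration sequence $\langle\psi^{(i)}\rangle$ from $\sigma'_\sss$ to $\sigma'_\ttt$, restrict each $\psi^{(i)}$ to $\{x_1,\ldots,x_n\}$ to obtain a valid $\phi$-reconfiguration sequence $\langle\sigma^{(i)}\rangle$ from $\sigma_\sss$ to $\sigma_\ttt$ (a flip of some $w_j$ leaves $\sigma^{(i)}$ unchanged). Since $\val_\phi(\sigma_\sss\reco\sigma_\ttt)<1-\epsilon$, some $\sigma^{(i)}$ violates more than $\epsilon m$ clauses of $\phi$; in the corresponding $\psi^{(i)}$, a block over a satisfied $3$-clause has at least three unsatisfied clauses and a block over a violated $3$-clause has at least four, so $\psi^{(i)}$ violates more than $3m+\epsilon m$ of the $10m$ clauses. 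A routine calculation then transfers this to the soundness parameter of \prb{Gap$_{\frac{7}{10},\frac{7}{10}-\epsilon}$ $2$-SAT$(4B)$ Reconfiguration}. Finally, I would remark that, unlike every other reduction in \cref{fig:reductions}, this one does not preserve perfect completeness — it maps $c=1$ to $c=\tfrac{7}{10}$ — which is exactly why \prb{Maxmin $2$-SAT Reconfiguration} can be \PSPACE-hard to approximate even though \prb{$2$-SAT Reconfiguration} lies in \cP.
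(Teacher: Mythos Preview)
Your proposal is correct and follows essentially the same route as the paper: the Garey--Johnson--Stockmeyer ten-clause gadget, the same completeness argument exploiting the $t=2$ ``safe zone'' where either value of the auxiliary variable keeps the block at seven satisfied clauses, and the same soundness argument via restriction to the original variables. (Your ``routine calculation,'' like the paper's own, actually yields soundness $\tfrac{7}{10}-\tfrac{\epsilon}{10}$ rather than the $\tfrac{7}{10}-\epsilon$ stated in the corollary --- a harmless constant-factor slip that does not affect the conclusion.)
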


\begin{table}[tbp]
    \centering
    \begin{tabular}{c|cc|cc|cc|cc}
        \toprule
        $\ell_1$ & \multicolumn{2}{c|}{$\FFF$} & \multicolumn{2}{c|}{$\TTT$} & \multicolumn{2}{c|}{$\TTT$} & \multicolumn{2}{c}{$\TTT$} \\
        $\ell_2$ & \multicolumn{2}{c|}{$\FFF$} & \multicolumn{2}{c|}{$\FFF$} & \multicolumn{2}{c|}{$\TTT$} & \multicolumn{2}{c}{$\TTT$} \\
        $\ell_3$ & \multicolumn{2}{c|}{$\FFF$} & \multicolumn{2}{c|}{$\FFF$} & \multicolumn{2}{c|}{$\FFF$} & \multicolumn{2}{c}{$\TTT$} \\
        $z^j$ & $\FFF$ & $\TTT$ & $\FFF$ & $\TTT$ & $\FFF$ & $\TTT$ & $\FFF$ & $\TTT$ \\
        \midrule
        $\bar{\ell_1} \vee \bar{\ell_2}$ &
        $\TTT$ & $\TTT$ & $\TTT$ & $\TTT$ & $\FFF$ & $\FFF$ & $\FFF$ & $\FFF$ \\
        $\bar{\ell_2} \vee \bar{\ell_3}$ &
        $\TTT$ & $\TTT$ & $\TTT$ & $\TTT$ & $\TTT$ & $\TTT$ & $\FFF$ & $\FFF$ \\
        $\bar{\ell_3} \vee \bar{\ell_1}$ &
        $\TTT$ & $\TTT$ & $\TTT$ & $\TTT$ & $\TTT$ & $\TTT$ & $\FFF$ & $\FFF$ \\
        $\ell_1 \vee \bar{z^j}$ &
        $\TTT$ & $\FFF$ & $\TTT$ & $\TTT$ & $\TTT$ & $\TTT$ & $\TTT$ & $\TTT$ \\
        $\ell_2 \vee \bar{z^j}$ &
        $\TTT$ & $\FFF$ & $\TTT$ & $\FFF$ & $\TTT$ & $\TTT$ & $\TTT$ & $\TTT$ \\
        $\ell_3 \vee \bar{z^j}$ &
        $\TTT$ & $\FFF$ & $\TTT$ & $\FFF$ & $\TTT$ & $\FFF$ & $\TTT$ & $\TTT$ \\
        \midrule
         \# satisfied clauses in \cref{eq:2SAT-trick} & $6$ & $4$ & $7$ & $6$ & $7$ & $7$ & $6$ & $7$ \\
        \bottomrule
    \end{tabular}
    \caption{Relation between the truth assignments to $\ell_1, \ell_2, \ell_3, z^j$ and
    the number of satisfied clauses in \cref{eq:2SAT-trick}.}
    \label{tab:2SAT}
\end{table}

\begin{proof}[Proof of \cref{cor:E3SAT-2SAT}]
We first recapitulate a Karp reduction from
\prb{$3$-SAT} to \prb{Max $2$-SAT} due to
{Garey, Johnson, and Stockmeyer}~\cite{garey1976some}.
Let $(\phi, \sigma_\sss, \sigma_\ttt)$ be an instance of \prb{Maxmin E$3$-SAT Reconfiguration}, where
$\phi$ is an E$3$-CNF formula consisting of
$m$ clauses $C_1, \ldots, C_m$ over $n$ variables $x_1, \ldots, x_n$, and
$\sigma_\sss$ and $\sigma_\ttt$ satisfy $\phi$.
Starting with an empty $2$-CNF formula $\phi'$,
for each clause $C_j = (\ell_1 \vee \ell_2 \vee \ell_3)$,
we introduce a new variable $z^j$ and add the following ten clauses to $\phi'$:
\begin{align}\label{eq:2SAT-trick}
    (\ell_1) \wedge (\ell_2) \wedge (\ell_3) \wedge (z^j) \wedge
    (\bar{\ell_1} \vee \bar{\ell_2}) \wedge
    (\bar{\ell_2} \vee \bar{\ell_3}) \wedge
    (\bar{\ell_3} \vee \bar{\ell_1}) \wedge
    (\ell_1 \vee \bar{z^j}) \wedge 
    (\ell_2 \vee \bar{z^j}) \wedge 
    (\ell_3 \vee \bar{z^j})
\end{align}
\cref{tab:2SAT} shows the relation between
the truth assignments to $\ell_1, \ell_2, \ell_3, z^j$ and
the number of clauses satisfied in \cref{eq:2SAT-trick}.
In particular, if $C_j$ is satisfied, then
we can satisfy exactly seven of the ten clauses in \cref{eq:2SAT-trick}
by setting the truth value of $z^j$ appropriately;
otherwise, we can only satisfy at most six clauses.
Given a satisfying truth assignment $\sigma$ for $\phi$,
consider the following truth assignment $\sigma'$ for $\phi'$:
$\sigma'(x_i) \triangleq \sigma(x_i)$ for all $i \in [n]$, and
$\sigma'(z^j)$ for each $j \in [m]$ is $\FFF$ if one or two literals of $C_j$ evaluate to $\TTT$ by $\sigma$, and
is $\TTT$ otherwise
(i.e., if all three literals evaluate to $\TTT$ by $\sigma$).
Observe from \cref{tab:2SAT} that
$\sigma'$ satisfies exactly $\frac{7}{10}$-fraction of clauses of $\phi'$.
Constructing $\sigma'_\sss$ from $\sigma_\sss$ and
$\sigma'_\ttt$ from $\sigma_\ttt$ according to this procedure, 
we obtain an instance $(\phi', \sigma'_\sss, \sigma'_\ttt)$ of
\prb{Maxmin $2$-SAT Reconfiguration}, which completes the reduction.
Note that
$\phi'$ has $10m$ clauses, and
$\val_{\phi'}(\sigma'_\sss) = \val_{\phi'}(\sigma'_\ttt) = \frac{7}{10}$.

We first prove the completeness; i.e., $\val_\phi(\sigma_\sss \reco \sigma_\ttt) = 1$ implies
$\val_{\phi'}(\sigma'_\sss \reco \sigma'_\ttt) = \frac{7}{10}$.
It suffices to consider the case that
$\sigma_\sss$ and $\sigma_\ttt$ differ in one variable, say, $x_i$.
For each clause $C_j$ of $\phi$,
we use $n^j_\sss$ and $n^j_\ttt$ to denote
the number of literals in $C_j$ evaluating to $\TTT$
by $\sigma_\sss$ and $\sigma_\ttt$, respectively.
Then, consider the following transformation from $\sigma'_\sss$ to $\sigma'_\ttt$:
\begin{itembox}[l]{\textbf{Reconfiguration from $\sigma'_\sss$ to $\sigma'_\ttt$}}
\begin{algorithmic}[1]
    \For{\textbf{each} $j \in [m]$}
        \State if $(n^j_\sss, n^j_\ttt) = (2, 3)$, flip the assignment of $z^j$ from $\FFF$ to $\TTT$;
    otherwise, do nothing.
    \EndFor
    \State flip the assignment of $x_i$.
    \For{\textbf{each} $j \in [m]$}
        \State if $(n^j_\sss, n^j_\ttt) = (3, 2)$, flip the assignment of $z^j$ from $\TTT$ to $\FFF$;
        otherwise, do nothing.
    \EndFor
\end{algorithmic}
\end{itembox}
Observe from \cref{tab:2SAT} that 
every intermediate truth assignment satisfies exactly $7m$ clauses; i.e.,
$\val_{\phi'}(\sigma'_\sss \reco \sigma'_\ttt) = \frac{7m}{10m} = \frac{7}{10}$,
as desired.

We then prove the soundness; i.e.,
$\val_\phi(\sigma_\sss \reco \sigma_\ttt) < 1- \epsilon$ implies
$\val_{\phi'}(\sigma'_\sss \reco \sigma'_\ttt) < \frac{7}{10} - \epsilon$.
Let $\sigmaS' = \langle \sigma'^{(0)} = \sigma'_\sss, \ldots, \sigma'^{(\ell)} = \sigma'_\ttt \rangle$
be any reconfiguration sequence for $(\phi', \sigma'_\sss, \sigma'_\ttt)$.
Construct then a sequence of truth assignments,
$\sigmaS = \langle \sigma^{(i)} \rangle_{0 \leq i \leq \ell}$, such that
each $\sigma^{(i)}$ is defined as the restriction of $\sigma'^{(i)}$ onto the variables of $\phi$.
Since $\sigmaS$ is a valid reconfiguration sequence,
we have $\val_\phi(\sigmaS) < 1-\epsilon$;
in particular,
there exists some $\sigma^{(i)} \in \sigmaS$ such that
$\val_\phi(\sigma^{(i)}) < 1-\epsilon$.
If $\sigma^{(i)}$ violates clause $C_j$,
then $\sigma'^{(i)}$ can satisfy at most six clauses in \cref{eq:2SAT-trick}.
Consequently, 
$\sigma'^{(i)}$ satisfies less than
$7 \cdot (1-\epsilon)m + 6 \cdot \epsilon m$ clauses
of $\phi'$, and we derive
\begin{align}
    \val_{\phi'}(\sigmaS') \leq \val_{\phi'}(\sigma')
    < \frac{7 \cdot (1-\epsilon)m + 6 \cdot \epsilon m}{10m}
    = \frac{7}{10} - \epsilon,
\end{align}
thereby completing the proof.
\end{proof}

\section{Conclusions}
We gave a series of gap-preserving reductions to demonstrate
\PSPACE-hardness of approximation for optimization variants of popular reconfiguration problems \emph{assuming} the Reconfiguration Inapproximability Hypothesis (RIH).
An immediate open question is to verify RIH.
One approach is to prove it directly, e.g., by using gap amplification of {Dinur}~\cite{dinur2007pcp}.
Some steps may be more difficult to prove, as we are required to preserve reconfigurability.
Another way entails a reduction from
some problems already known to be \PSPACE-hard to approximate, such as
\prb{True Quantified Boolean Formula} due to {Condon, Feigenbaum, Lund, and Shor}~\cite{condon1995probabilistically}.
We are currently uncertain whether we can ``adapt'' a Karp reduction from 
\prb{True Quantified Boolean Formula} to \prb{Nondeterministic Constraint Logic} \cite{hearn2005pspace,hearn2009games}.

\acks
I wish to thank the anonymous referees for their suggestions which help improve the presentation of this paper.

\bibliographystyle{alpha}
\bibliography{manu}

\end{document}